\renewcommand{\eqref}[1]{\tagform@{\ref{#1}}}
\def\maketag@@@#1{\hbox{#1}}
\newcommand{\reals}{\mathbb{R}}
\newcommand{\integers}{\mathbb{Z}}
\newcommand{\naturals}{\mathbb{N}}
\newcommand{\pr}{\mathbb{P}}        
\newcommand{\ex}{\mathbb{E}}        
\newcommand{\cov}{\textnormal{Cov}} 
\newcommand{\normal}{N}        
\newcommand{\ind}{1} 
\newcommand{\bfbeta}{\bm{\beta}} 
\newcommand{\X}{\boldsymbol{X}} 
\newcommand{\grid}{\mathcal{G}_T} 
\newcommand{\interval}{\mathcal{I}_{u, h}} 
\newcommand{\doublehat}[1]{\skew{5.5}\widehat{\widehat{#1}}}
\newcommand{\doublehattwo}[1]{\widehat{\widehat{#1}}}
\theoremstyle{plain}
\newtheorem{theorem}{Theorem}[section]
\newtheorem{prop}[theorem]{Proposition}
\newtheorem{remark}[theorem]{Remark}
\newtheorem{definitionA}{Definition}[section]
\newtheorem{propA}[definitionA]{Proposition}
\newtheorem{lemmaA}[definitionA]{Lemma}
\newcommand{\lefteqno}{\let\veqno\@@leqno}
\newcommand{\heading}[2]
{  \setcounter{page}{1}
   \begin{center}


   {\LARGE \textbf{#1}}
   \vspace{0.35cm}
 
   {\LARGE \textbf{#2}}
   \end{center}
}
\newcommand{\authors}[4]
{  \parindent0pt
   \begin{center}
      \begin{minipage}[c][2cm][c]{6cm}
      \begin{center} 
      {\large #1} 
      
      \vspace{0.125cm}
      
      #2 
      \end{center}
      \end{minipage}
      \begin{minipage}[c][2cm][c]{5cm}
      \begin{center} 
      {\large #3}
      
      \vspace{0.125cm}

      #4 
      \end{center}
      \end{minipage}
   \end{center}
}
\begin{document}

\heading{Multiscale Comparison}{of Nonparametric Trend Curves}

\vspace{-0.7cm}

\authors{Marina Khismatullina\renewcommand{\thefootnote}{1}\footnotemark[1]}{Erasmus University Rotterdam}{Michael Vogt\renewcommand{\thefootnote}{2}\footnotemark[2]}{Ulm University} 
\footnotetext[1]{Corresponding author. Address: Erasmus School of Economics, Erasmus University Rotterdam, 3062 PA Rotterdam, Netherlands. Email: \texttt{khismatullina@ese.eur.nl}.}
\renewcommand{\thefootnote}{2}
\footnotetext[2]{Address: Institute of Statistics, Department of Mathematics and Economics, Ulm University, 89081 Ulm, Germany. Email: \texttt{m.vogt@uni-ulm.de}.}
\renewcommand{\thefootnote}{\arabic{footnote}}
\setcounter{footnote}{2}

\vspace{-1.1cm}

\renewcommand{\abstractname}{}
\begin{abstract}
{\noindent We develop new econometric methods for the comparison of nonparametric time trends. In many applications, practitioners are interested in whether the observed time series all have the same time trend. Moreover, they would often like to know which trends are different and in which time intervals they differ. We design a multiscale test to formally approach these questions. Specifically, we develop a test which allows to make rigorous confidence statements about which time trends are different and where (that is, in which time intervals) they differ. Based on our multiscale test, we further develop a clustering algorithm which allows to cluster the observed time series into groups with the same trend. We derive asymptotic theory for our test and clustering methods. The theory is complemented by a simulation study and two applications to GDP growth data and house pricing data.}
\end{abstract}

\vspace{-0.1cm}

\renewcommand{\baselinestretch}{1.2}\normalsize

\textbf{Key words:} Multiscale statistics; nonparametric regression; time series errors; shape constraints; strong approximations; anti-concentration bounds.

\textbf{JEL classifications:} C12; C14; C23; C38.

\vspace{-0.25cm}

\numberwithin{equation}{section}
\allowdisplaybreaks[1]

\section{Introduction}\label{sec:intro}

The comparison of time trends is an important topic in time series analysis and has a wide range of applications in economics, finance and other fields such as climatology. In economics, one may wish to compare trends in real GDP \citep[][]{Grier1989} or trends in long-term interest rates \citep[][]{Christiansen1997} across different countries. In finance, it is of interest to compare the volatility trends of different stocks \citep[][]{Nyblom2000}. In climatology, researchers are interested in comparing the trending behaviour of temperature time series across different spatial locations \citep[][]{KarolyWu2005}.

In this paper, we develop new econometric methods for the comparison of time trends. Classically, time trends are modelled stochastically in econometrics; see e.g.\ \cite{Stock1988}. Recently, however, there has been a growing interest in econometric models with deterministic time trends; see \cite{Cai2007}, \cite{Atak2011}, \cite{Robinson2012}, \cite{ChenGaoLi2012}, \cite{Zhang2012} and \cite{Hidalgo2014} among others. Following this recent development, we consider a general panel model with deterministic time trends. We observe a panel of $n$ time series $\mathcal{T}_i = \{ (Y_{it},\X_{it}): 1 \le t \le T \}$ for $1 \le i \le n$, where $Y_{it}$ are real-valued random variables and $\X_{it} = (X_{it,1},\ldots, X_{it, d})^\top$ are $d$-dimensional random vectors. Each time series $\mathcal{T}_i$ is modelled by the equation
\begin{equation}\label{eq:model}
Y_{it} = m_i \Big( \frac{t}{T} \Big) + \bfbeta_i^\top \X_{it} + \alpha_i + \varepsilon_{it}
\end{equation}
for $1 \le t \le T$, where $\bfbeta_i$ is a $d \times 1$ vector of unknown parameters, $\X_{it}$ is a $d\times 1$ vector of individual covariates or controls, $m_i$ is an unknown nonparametric (deterministic) trend function defined on the rescaled time interval $[0,1]$, $\alpha_i$ is a fixed effect error term and $\mathcal{E}_i = \{ \varepsilon_{it}: 1 \le t \le T \}$ is a zero-mean stationary error process. A detailed description of model \eqref{eq:model} together with the technical assumptions on the various model components can be found in Section \ref{sec:model}.

An important question in many applications is whether the nonparametric time trends $m_i$ in model \eqref{eq:model} are all the same. This question can formally be addressed by a statistical test of the null hypothesis 
\[ H_0: m_1 = m_2 = \ldots = m_n \]
against the alternative $H_1$: $m_i (u) \neq m_j(u)$ for some $u \in [0,1]$ and $i \ne j$. The problem of testing $H_0$ versus $H_1$ is well studied in the literature. In a restricted version of model \eqref{eq:model} without covariates, it has been investigated in \cite{HaerdleMarron1990}, \cite{Hall1990}, \cite{DegrasWu2012} and \cite{ChenWu2018} among many others. Test procedures in a less restricted version of model \eqref{eq:model} with covariates and homogeneous parameter vectors ($\beta_i = \beta$ for all $i$) have been derived, for instance, in \cite{Zhang2012} and \cite{Hidalgo2014}.

Even though many different approaches to test $H_0$ versus $H_1$ have been developed over the years, most of them have a serious shortcoming: they are very uninformative. More specifically, they only allow to test \textit{whether} the trend curves are the same or not. But they do not give any information on \textit{which} curves are different and \textit{where} (that is, in which time intervals) they differ. The following example illustrates the importance of this point: Suppose a group of researchers wants to test whether the GDP trend is the same in ten different countries. If the researchers use a test which is uninformative in the above sense and this test rejects $H_0$, they can only infer that there are at least two countries with a different trend. They do not obtain any information on \textit{which} countries have a different GDP trend and \textit{where} (that is, in which time periods) the trends differ. However, this is exactly the information that is relevant in practice. Without it, it is hardly possible to get a good economic understanding of the situation at hand. The main aim of the present paper is to construct a test procedure which provides the required information. More specifically, we construct a test which allows to make rigorous confidence statements about (i) \textit{which} trend curves are different and (ii) \textit{where} (that is, in which time intervals) they differ.

Very roughly speaking, our approach works as follows: Let $\interval := [u-h,u+h] \subseteq [0,1]$ be the rescaled time interval with midpoint $u \in [0,1]$ and length $2h$. We call $u$ the location and $h$ the scale of the interval $\interval$. For any given location-scale pair $(u,h)$, let
\[ H_0^{[i, j]}(u, h): m_i(w) = m_j(w) \text{ for all } w \in \interval \] 
be the hypothesis that the two trend curves $m_i$ and $m_j$ are identical on the interval $\interval$. Our procedure simultaneously tests $H_0^{[i, j]}(u, h)$ for a wide range of different location-scale pairs $(u,h)$ and all pairs $(i,j)$. As it takes into account multiple scales $h$, it is called a multiscale method. Our main theoretical result shows that under suitable technical conditions, the developed multiscale test controls the familywise error rate, that is, the probability of wrongly rejecting at least one null hypothesis $H_0^{[i, j]}(u, h)$. As we will see, this allows us to make simultaneous confidence statements of the following form for a given significance level $\alpha \in (0,1)$: 
\begin{equation}\label{eq:conf-statement}
\begin{minipage}[c][1.25cm][c]{13cm}
\textit{With probability at least $1-\alpha$, the trends $m_i$ and $m_j$ differ on any interval $\interval$ and for any pair $(i,j)$ for which $H_0^{[i, j]}(u, h)$ is rejected.} 
\end{minipage}
\end{equation}
Hence, as desired, we can make confidence statements about \textit{which} trend curves are different and \textit{where} they differ.

To the best of our knowledge, there are no other test procedures available in the literature which allow to make simultaneous confidence statements of the form \eqref{eq:conf-statement} in the context of the general model \eqref{eq:model}. We are only aware of two other multiscale tests for the comparison of nonparametric time trends, both of which are restricted to a strongly simplified version of model \eqref{eq:model} without covariates. The first one is a SiZer-type test by \cite{Park2009}, for which theory has only been derived in the special case of $n=2$ time series. The second one is a multiscale test by \cite{KhismatullinaVogt2021}, which was developed to detect differences between epidemic time trends in the context of the COVID-19 pandemic. Notably, our multiscale test can be regarded as an extension of the approach in \cite{KhismatullinaVogt2021}. In the present paper, we go beyond the quite limited model setting in \cite{KhismatullinaVogt2021} and develop multiscale inference methods in the general model framework \eqref{eq:model} which allows to deal with a wide range of economic and financial applications. Moreover, we develop a clustering algorithm based on our multiscale test which allows to detect groups of time series with the same trend.

Our multiscale test is constructed step by step in Section \ref{sec:test}, whereas its theoretical properties are laid out in Section \ref{sec:theo}. The clustering algorithm is introduced and investigated in Section \ref{sec:clustering}. The proofs of all theoretical results are relegated to the technical Appendix. We complement the theoretical analysis of the paper by a simulation study and two application examples in Sections \ref{sec:sim} and \ref{sec:app}. In the first application example, we use our test and clustering methods to examine GDP growth data from different OECD contries. We in particular test the hypothesis that there is a common GDP trend in these countries and we cluster the countries into groups with the same trend. In the second example, we examine a long record of housing price data from different countries and compare the price trends in these countries by our methods.

\section{The model framework}\label{sec:model}

\subsection{Notation}

Throughout the paper, we adopt the following notation. For a vector $\mathbf{v} = (v_1, \ldots, v_m)\in\reals^m$, we write $\|\mathbf{v}\|_q = \big(\sum_{i=1}^m v_i^q\big)^{1/q}$ to denote its $\ell_q$-norm and use the shorthand $\|\mathbf{v}\| = \|\mathbf{v}\|_2$ in the special case $q = 2$. For a random variable $V$, we define its $\mathcal{L}^q$-norm by $\|V\|_q = (\ex |V|^q)^{1/q}$ and write $\|V\| := \|V\|_2$ in the case $q = 2$.

Let $\eta_t$ ($t \in \integers$) be independent and identically distributed ($\text{i.i.d.}$) random variables, write $\mathcal{F}_t  = (\ldots, \eta_{t-1}, \eta_t)$ and let $g: \reals^\infty \to \reals$ be a measurable function such that $g(\mathcal{F}_t) = g(\ldots, \eta_{t-1}, \eta_t)$ is a properly defined random variable. Following \cite{Wu2005}, we define the \textit{physical dependence measure} of the process $\{g(\mathcal{F}_t)\}_{t=-\infty}^\infty$ by
\begin{align}\label{eq:physical_dep}
\delta_q(g, t) = \| g(\mathcal{F}_t) - g(\mathcal{F}_t^\prime) \|_q,
\end{align}
where $\mathcal{F}_t^\prime  = (\ldots, \eta_{-1}, \eta^\prime_0, \eta_1, \ldots, \eta_t)$ is a coupled version of $\mathcal{F}_t$ with $\eta_0^\prime$ being an i.i.d.\ copy of $\eta_0$. Evidently, $\delta_q(g, t)$ measures the dependency of the random variable $g(\mathcal{F}_t)$ on the innovation term $\eta_0$. 

\subsection{Model}\label{subsec:model_setting}

We observe a panel of $n$ time series $\mathcal{T}_i = \{(Y_{it}, \X_{it}): 1 \le t \le T \}$ of length $T$ for $1 \le i \le n$. Each time series $\mathcal{T}_i$ satisfies the model equation 
\begin{equation}\label{eq:model_full}
Y_{it} = \bfbeta^\top_i \X_{it} + m_i \Big( \frac{t}{T} \Big) + \alpha_i + \varepsilon_{it} 
\end{equation}
for $1 \le t \le T$, where $\bfbeta_i$ is a $d \times 1$ vector of unknown parameters, $\X_{it} = (X_{it,1},\ldots,X_{it,d})^\top$ is a $d\times 1$ vector of individual covariates, $m_i$ is an unknown nonparametric trend function defined on the unit interval $[0,1]$ with $\int_0^1 m_i(u) du = 0$ for all $i$, $\alpha_i$ is a (deterministic or random) intercept term and $\mathcal{E}_i = \{ \varepsilon_{it}: 1 \le t \le T \}$ is a zero-mean stationary error process. As common in nonparametric regression, the trend functions $m_i$ in model \eqref{eq:model_full} depend on rescaled time $t/T$ rather than on real time $t$; see e.g.\ \cite{Robinson1989}, \cite{Dahlhaus1997} and \cite{VogtLinton2014} for a discussion of rescaled time in nonparametric estimation. The condition $\int_0^1 m_i(u) du = 0$ is required for identification of the trend function $m_i$ in the presence of the intercept $\alpha_i$. We allow $\alpha_i$ to be correlated with the covariates $\X_{it}$ in an arbitrary way. Hence, $\alpha_i$ can be regarded as a fixed effect error term. We do not impose any restrictions on the dependence between the fixed effects $\alpha_i$ across $i$. Similarly, the covariates $\X_{it}$ are allowed to be dependent across $i$ in an arbitrary way. As a consequence, the $n$ time series $\mathcal{T}_i$ in our panel can be correlated with each other in various ways. In contrast to $\alpha_i$ and $\X_{it}$, the error terms $\varepsilon_{it}$ are assumed to be independent across $i$. Technical conditions regarding the model are discussed below. Throughout the paper, we restrict attention to the case where the number of time series $n$ in model \eqref{eq:model_full} is fixed. Extending our theoretical results to the case where $n$ grows with the time series length $T$ is a possible topic for further research.

\subsection{Assumptions}\label{subsec:model_assumptions}

The error processes $\mathcal{E}_i = \{ \varepsilon_{it}: 1 \le t \le T\}$ satisfy the following conditions. 
\begin{enumerate}[label=(C\arabic*),leftmargin=1.05cm]
\item \label{C-err1} 
The variables $\varepsilon_{it}$ allow for the representation $\varepsilon_{it} = g_i(\mathcal{F}_{it})$, where $\mathcal{F}_{it} = (\ldots,\eta_{it-2}, \linebreak \eta_{it-1},\eta_{it})$, the variables $\eta_{it}$ are i.i.d.\ across $t$, and $g_i: \reals^\infty \rightarrow \reals$ is a measurable function such that $\varepsilon_{it}$ is well-defined. It holds that $\ex[\varepsilon_{it}] = 0$ and $\| \varepsilon_{it} \|_q \le C < \infty$ for some $q > 4$ and a sufficiently large constant $C$. 
\item \label{C-err2} The processes $\mathcal{E}_i = \{ \varepsilon_{it}: 1 \le t \le T\}$ are independent across $i$.
\end{enumerate}
Assumption \ref{C-err1} implies that the error processes $\mathcal{E}_i$ are stationary and causal (in the sense that $\varepsilon_{it}$ does not depend on future innovations $\eta_{is}$ with $s>t$). The class of error processes that satisfy condition \ref{C-err1} is very large. It includes linear processes, nonlinear transformations thereof, as well as a large variety of nonlinear processes such as Markov chain models and nonlinear autoregressive models \citep[][]{Wu2016}. Following \cite{Wu2005}, we impose conditions on the dependence structure of the error processes $\mathcal{E}_i$ in terms of the physical dependence measure $\delta_q(g_i, t)$ defined in \eqref{eq:physical_dep}. In particular, we assume the following: 
\begin{enumerate}[label=(C\arabic*),leftmargin=1.05cm]
\setcounter{enumi}{2}
\item \label{C-err3} For each $i$, $\sum\nolimits_{s \ge 0} \delta_q(g_i, s)$ is finite and $\sum\nolimits_{s \ge t} \delta_q(g_i, s) = O ( t^{-\gamma} (\log t)^{-A})$ with $q$ from \ref{C-err1}, where $A > \frac{2}{3} (1/q + 1 + \gamma)$ and $\gamma = \{q^2 - 4 + (q-2) \sqrt{q^2 + 20q + 4}\} / 8q$. 
\end{enumerate}
For fixed $i$ and $t$, the expression $\sum\nolimits_{s \ge t} \delta_q(g_i, s)$ measures the cumulative effect of the innovation $\eta_0$ on the variables $\varepsilon_{it}, \varepsilon_{it+1},\ldots$ in terms of the $\mathcal{L}^q$-norm. Condition \ref{C-err3} puts some restrictions on the decay of $\sum\nolimits_{s \ge t} \delta_q(g_i, s)$ (as a function of $t$). It is fulfilled by a wide range of stationary processes $\mathcal{E}_i$. For a detailed discussion of \ref{C-err1}--\ref{C-err3} and some examples of error processes that satisfy these conditions, see \cite{KhismatullinaVogt2020}.

\pagebreak
The covariates $\X_{it} = (X_{it,1},\ldots, X_{it, d})^\top$ are assumed to have the following properties. 
\begin{enumerate}[label=(C\arabic*),leftmargin=1.05cm]
\setcounter{enumi}{3}
\item \label{C-reg1} The variables $X_{it, j}$ allow for the representation $X_{it, j} = h_{ij}(\mathcal{G}_{it, j})$, where $\mathcal{G}_{it, j} = (\ldots, \xi_{it-1,j}, \xi_{it, j})$, the random variables $\xi_{it, j}$ are i.i.d.\ across $t$ and $h_{ij}: \reals^\infty \rightarrow \reals$ is a measurable function such that $X_{it, j}$ is well-defined. We use the notation $\X_{it} = \boldsymbol{h}_{i}(\mathcal{G}_{it})$ with $\boldsymbol{h}_i = (h_{i1}, \ldots, h_{id})^\top$ and $\mathcal{G}_{it} = (\mathcal{G}_{it,1}, \ldots, \mathcal{G}_{it, d})^\top$. It holds that $\ex [X_{it, j}]=0$ and $\| X_{it, j} \|_{q^\prime} <\infty$ for all $i$ and $j$, where $q^\prime > \max \{ 4, \theta q \}$ with $q$ from \ref{C-err1} and $\theta$ specified in \ref{C-grid} below.
\item \label{C-reg2} The matrix $\ex[\Delta \X_{it} \Delta \X_{it}^\top]$ is invertible for each $i$.
\item \label{C-reg3} For each $i$ and $j$, it holds that $\sum_{s \ge 0} \delta_{q^\prime}(h_{ij}, s)$ is finite and $\sum_{s \ge t} \delta_{q^\prime}(h_{ij}, s)= O(t^{-\alpha})$ for some $\alpha > 1/2 - 1/{q^\prime}$ with $q^\prime$ from \ref{C-reg1}.
\end{enumerate}
Assumption \ref{C-reg1} guarantees that the process $\{ \X_{it}: 1 \le t \le T \}$ is stationary and causal for each $i$. Assumption \ref{C-reg3} restricts the serial dependence of the process $\{ \X_{it}: 1 \le t \le T \}$ for each $i$ in terms of the physical dependence measure.

We finally impose some assumptions on the relationship between the covariates and the errors and on the trend functions $m_i$.
\begin{enumerate}[label=(C\arabic*),leftmargin=1.05cm]
\setcounter{enumi}{6}
\item \label{C-reg-err} The random variables $\Delta \X_{it} = \X_{it} - \X_{it-1}$ and $\Delta \varepsilon_{it} = \varepsilon_{it} - \varepsilon_{it-1}$ are uncorrelated, that is, $\cov(\Delta \X_{it}, \Delta \varepsilon_{it}) = \ex[\Delta \X_{it} \Delta \varepsilon_{it}] = 0$. 
\item \label{C-trend} The trend functions $m_i$ are Lipschitz continuous on $[0,1]$, that is, $|m_i(v) - m_i(w)| \le L |v-w|$ for all $v,w \in [0,1]$ and some constant $L < \infty$. Moreover, they are normalised such that $\int_0^1m_i (u)du = 0$ for each $i$.
\end{enumerate}

\begin{remark}
Conditions \ref{C-reg1}--\ref{C-reg3} can be relaxed to cover locally stationary regressors. For example, \ref{C-reg1} may be replaced by
\begin{enumerate}[label=(C\arabic*$^\prime$),leftmargin=1.1cm]
\setcounter{enumi}{3}
\item \label{C-reg1-star} The variables $X_{it, j}$ allow for the representation $X_{it, j} = h_{ij}(t; \mathcal{G}_{it, j})$, where $\mathcal{G}_{it, j} = (\ldots, \xi_{it-1,j}, \xi_{it, j})$, the random variables $\xi_{it, j}$ are i.i.d.\ across $t$ and $h_{ij}(t;\cdot): \reals^\infty \rightarrow \reals$ is a measurable function  for each $t$ such that $X_{it, j}$ is well-defined. We use the notation $\X_{it} = \boldsymbol{h}_{i}(t; \mathcal{G}_{it})$ with $\boldsymbol{h}_i = (h_{i1}, \ldots, h_{id})^\top$ and $\mathcal{G}_{it} = (\mathcal{G}_{it,1}, \ldots, \mathcal{G}_{it, d})^\top$. It holds that $\ex [X_{it, j}]=0$ and $\| X_{it, j} \|_{q^\prime} <\infty$ for all $i$, $j$ and $t$, where $q^\prime > \max \{ 4, \theta q \}$ with $q$ from \ref{C-err1} and $\theta$ specified in \ref{C-grid} below.
\end{enumerate}
The other assumptions can be adjusted accordingly. We conjecture that our main theoretical results still hold in this case. However, the complexity of the technical arguments will increase drastically. Hence, for the sake of clarity, we restrict attention to stationary covariates $\X_{it}$. 
\end{remark}

\section{The multiscale test}\label{sec:test}

In this section, we develop a multiscale test for the comparison of the trend curves $m_i$ in model \eqref{eq:model_full}. More specifically, we construct a multiscale test of the null hypothesis 
\[ H_0: m_1 = m_2 = \ldots = m_n. \]
The test is designed to be as informative as possible: It does not only allow to say whether $H_0$ is violated. It also gives information on which types of violation occur. In particular, it allows to infer \textit{which} trends are different and \textit{where} (that is, in which time intervals) they differ.

Our strategy to construct the test is as follows: For any given location-scale point $(u,h)$, let
\[ H_0^{[i, j]}(u, h): m_i(w) = m_j(w) \text{ for all } w \in \interval \] 
be the hypothesis that $m_i$ and $m_j$ are identical on the interval $\interval := [u-h,u+h] \subseteq [0,1]$. $H_0^{[i, j]}(u, h)$ can be viewed as a local null hypothesis that characterises the behaviour of two trend functions locally on the interval $\interval$, whereas $H_0$ is the global null hypothesis that is concerned with the comparison of all trends on the whole unit interval $[0, 1]$. We consider a large family of time intervals $\interval$ that fully cover the unit interval. More formally, we consider all intervals $\interval$ with $(u,h) \in \grid$, where $\grid$ is a set of location-scale points $(u,h)$ with the property that $\bigcup_{(u, h) \in \grid} \interval = [0,1]$. Technical conditions on the set $\grid$ are given below. The global null $H_0$ can now be formulated as 
\begin{align*}
H_0: \ & \text{The hypothesis } H_0^{[i, j]}(u, h) \text{ holds true for all intervals }  \interval \text{ with } (u, h) \in \grid \\[-0.075cm] & \text{and for all } 1 \le i < j \le n. 
\end{align*} 
This formulation shows that we can test $H_0$ by a procedure which simultaneously tests the local hypothesis $H_0^{[i, j]}(u, h)$ for all $1 \le i < j \le n$ and all $(u,h) \in \grid$.

In what follows, we design such a simultaneous test procedure step by step. To start with, we introduce some auxiliary estimators in Section \ref{subsec:test:prep}. We then construct the test statistics in Section \ref{subsec:test:stat} and set up the test procedure in Section \ref{subsec:test:test}. Finally, Section \ref{subsec:test:impl} explains how to implement the procedure in practice.

\subsection{Preliminary steps}\label{subsec:test:prep}

If the fixed effects $\alpha_i$ and the coefficient vectors $\bfbeta_i$ were known, our testing problem would be greatly simplified. In particular, we could consider the model
\begin{align*}
Y_{it}^\circ = m_i \Big( \frac{t}{T} \Big) + \varepsilon_{it} \qquad \text{with} \qquad Y_{it}^\circ := Y_{it} - \alpha_i - \bfbeta_i^\top \X_{it}, 
\end{align*}
which is a standard nonparametric regression equation. As $\alpha_i$ and $\bfbeta_i$ are not observed in practice, we construct estimators $\widehat{\alpha}_i$ and $\widehat{\bfbeta}_i$ and replace the unknown variables $Y_{it}^\circ$ by the approximations 
\begin{align*}
\widehat{Y}_{it} := Y_{it} -\widehat{\alpha}_i - \widehat{\bfbeta}_i^\top \X_{it}. 
\end{align*}
In what follows, we explain how to construct estimators of $\alpha_i$ and $\bfbeta_i$.

In order to estimate the parameter vector $\bfbeta_i$ for a given $i$, we consider the time series $\Delta \mathcal{T}_i = \{(\Delta Y_{it}, \Delta \X_{it}): 2 \leq t \leq T\}$ of the first differences $\Delta Y_{it} = Y_{it} - Y_{i t-1}$ and $\Delta  \X_{it} =  \X_{it} - \X_{it-1}$. We can write
\begin{align*}
\Delta Y_{it} = Y_{it} - Y_{i t-1} =\bfbeta_i^\top \Delta \X_{it} + \bigg(m_i \Big( \frac{t}{T} \Big) - m_i \Big(\frac{t-1}{T}\Big)\bigg) + \Delta \varepsilon_{it},
\end{align*}
where $ \Delta \varepsilon_{it} = \varepsilon_{it} - \varepsilon_{i t-1}$. Since $m_i$ is Lipschitz by Assumption \ref{C-trend}, we can use the fact that $ |m_i ( \frac{t}{T} ) - m_i (\frac{t-1}{T}) | = O(\frac{1}{T})$ to get that 
\begin{align}\label{model_with_regs}
\Delta Y_{it} = \bfbeta_i^\top \Delta \X_{it} + \Delta \varepsilon_{it} + O\Big(\frac{1}{T}\Big).
\end{align}
This suggests to estimate $\bfbeta_i$ by applying least squares methods to equation \eqref{model_with_regs}, treating $\Delta \X_{it}$ as the regressors and $\Delta Y_{it}$ as the response variable. As a result, we obtain the least squares estimator
\begin{align}\label{eq:beta:est}
\widehat{\bfbeta}_i = \Big( \sum_{t=2}^T \Delta \X_{it} \Delta \X_{it}^\top \Big)^{-1} \sum_{t=2}^T \Delta \X_{it} \Delta Y_{it}.
\end{align}
In Lemma \ref{lemma-beta-rate} in the Appendix, we show that $\widehat{\bfbeta}_i - \bfbeta_i = O_p(T^{-1/2})$ under our assumptions. 
Given $\widehat{\bfbeta}_i$, we next estimate the fixed effect term $\alpha_i$ by 
\begin{align}\label{eq:alpha:est}
\widehat{\alpha}_i &= \frac{1}{T}\sum_{t=1}^T \big(Y_{it} - \widehat{\bfbeta}_i^\top \X_{it}\big). 
\end{align}
This is a reasonable estimator of $\alpha_i$ for the following reason: For each $i$, it holds that
\begin{align*}
\widehat{\alpha}_i - \alpha_i = \big(\bfbeta_i - \widehat{\bfbeta}_i \big)^\top\frac{1}{T}\sum_{t=1}^T  \X_{it} + \frac{1}{T}\sum_{t=1}^T m_i\Big(\frac{t}{T}\Big) + \frac{1}{T}\sum_{t=1}^T \varepsilon_{it} = O_p(T^{-1/2}), 
\end{align*}
since $\frac{1}{T}\sum_{i=1}^T \varepsilon_{it} = O_p(T^{-1/2})$ by the law of large numbers, $\frac{1}{T}\sum_{i=1}^T m_i(t/T) = O(T^{-1})$ due to Lipschitz continuity of $m_i$ and the normalisation $\int_{0}^1 m_i(u)du = 0$, $\frac{1}{T}\sum_{t=1}^T  \X_{it} = O_p(1)$ by Chebyshev's inequality and $\widehat{\bfbeta}_i - \bfbeta_i = O_p (T^{-1/2})$.

In order to construct our multiscale test, we do not only require the variables $\widehat{Y}_{it}$ and thus the estimators $\widehat{\bfbeta}_i$ and $\widehat{\alpha}_i$. We also need an estimator of the long-run error variance $\sigma_i^2 = \sum\nolimits_{\ell=-\infty}^{\infty} \cov(\varepsilon_{i0}, \varepsilon_{i\ell})$ for each $i$. Throughout the paper, we assume that the long-run variance $\sigma_i^2$ does not depend on $i$, that is $\sigma_i^2 = \sigma^2$ for all $i$. For technical reasons, we nevertheless use a different estimator of $\sigma_i^2 = \sigma^2$ for each $i$. Let $\widehat{\sigma}_i^2$ be an estimator of $\sigma_i^2$ which is computed from the $i$-th time series $\{ \widehat{Y}_{it}: 1 \le t \le T \}$, that is, let $\widehat{\sigma}_i^2 = \widehat{\sigma}_i^2(\widehat{Y}_{i1},\ldots,\widehat{Y}_{iT})$ be a function of the variables $\widehat{Y}_{it}$ for $1 \le t \le T$. Our theory works with any estimator $\widehat{\sigma}_i^2$ which has the property that $\widehat{\sigma}_i^2 = \sigma_i^2 + o_p(\rho_T)$ for each $i$, where $\rho_T$ slowly converges to zero, in particular, $\rho_T = o(1/\log T)$.
We now briefly discuss two possible choices of $\widehat{\sigma}_i^2$.

Following \cite{Kim2016}, we can estimate $\sigma_i^2$ by a variant of the subseries variance estimator, which was first proposed by \cite{Carlstein1986} and then extended by \cite{WuZhao2007}. Formally, we set
\begin{align}
\widehat{\sigma}_i^2 = \frac{1}{2(M-1)s_T}\sum_{m=1}^M \bigg[ \sum_{t = 1}^{s_T} & \Big(Y_{i(t + ms_T)} - Y_{i(t + (m-1)s_T)} \nonumber \\[-0.2cm] & - \widehat{\bfbeta}_i^\top\big(\X_{i(t+ms_T)} - \X_{i(t+(m-1)s_T)}\big) \Big)\bigg]^2, \label{eq:lrv}
\end{align}
where $s_T$ is the length of the subseries and $M = \lfloor T/s_T\rfloor$ is the largest integer not exceeding $T/s_T$. As per the optimality result in \cite{Carlstein1986}, we set $s_T \asymp T^{1/3}$.  When implementing $\widehat{\sigma}_i^2$, we in particular choose $s_T = \lfloor T^{1/3}\rfloor$. According to  Lemma \ref{lemmaA:lrv} in the Appendix, $\widehat{\sigma}_i^2$ is an asymptotically consistent estimator of $\sigma_i^2$ with the rate of convergence $O_p(T^{-1/3})$.

The subseries estimator $\widehat{\sigma}_i^2$ is completely nonparametric: it does not presuppose a specific time series model for the error process $\mathcal{E}_i = \{ \varepsilon_{it}: 1 \le t \le T\}$ but merely requires $\mathcal{E}_i$ to fulfill general weak dependence conditions. In practice, however, it often makes sense to impose additional structure on the error process $\mathcal{E}_i$. A very popular yet general error model is the class of $\text{AR}(\infty)$ processes. A difference-based estimator of the long-run error variance for this error model has been developed in \cite{KhismatullinaVogt2020} and can be easily adapted to the situation at hand. A detailed discussion of the estimator and a comparison with other long-run variance estimators can be found in Section 4 of \cite{KhismatullinaVogt2020}.

\subsection{Construction of the test statistics}\label{subsec:test:stat}

A statistic to test the local null hypothesis $H_0^{[i, j]}(u, h)$ for a given location-scale point $(u,h)$ and a given pair of time series $(i,j)$ can be constructed as follows: Let
\begin{align}\label{eq:psi_hat_ij}
\widehat{\psi}_{ij, T}(u, h) = \sum\limits_{t=1}^T w_{t,T}(u, h)(\widehat{Y}_{it} - \widehat{Y}_{jt})
\end{align}
be a weighted average of the differenced variables $\widehat{Y}_{it} - \widehat{Y}_{jt}$. In particular, let $w_{t,T}(u, h)$ be local linear kernel weights of the form
\begin{equation}\label{eq:weights}
w_{t,T}(u, h) = \frac{\Lambda_{t,T}(u, h)}{ \{\sum\nolimits_{t=1}^T \Lambda_{t,T}(u, h)^2 \}^{1/2} }, 
\end{equation}
where
\[ \Lambda_{t,T}(u, h) = K\Big(\frac{\frac{t}{T}-u}{h}\Big) \Big[ S_{T,2}(u, h) - \Big(\frac{\frac{t}{T}-u}{h}\Big) S_{T,1}(u, h) \Big], \]
$S_{T,\ell}(u, h) = (Th)^{-1} \sum\nolimits_{t=1}^T K(\frac{\frac{t}{T}-u}{h}) (\frac{\frac{t}{T}-u}{h})^\ell$ for $\ell = 1,2$ and $K$ is a kernel function with the following property: 
\begin{enumerate}[label=(C\arabic*),leftmargin=1.05cm]
\setcounter{enumi}{8}
\item \label{C-ker} The kernel $K$ is non-negative, symmetric about zero and integrates to one. Moreover, it has compact support $[-1,1]$ and is Lipschitz continuous, that is,\linebreak $|K(v) - K(w)| \le C_K |v-w|$ for any $v, w \in \reals$ and some constant $C_K > 0$.
\end{enumerate} 
The kernel average $\widehat{\psi}_{ij, T}(u, h)$ can be regarded as a measure of distance between the two trend curves $m_i$ and $m_j$ on the interval $\interval = [u-h, u+h]$. We may thus use a normalized (absolute) version of the kernel average $\widehat{\psi}_{ij, T}(u, h)$, in particular, the term 
\begin{equation}\label{eq:local-stat-without-correction}
\Big| \frac{\widehat{\psi}_{ij, T}(u, h)}{(\widehat{\sigma}_i^2 + \widehat{\sigma}_j^2)^{1/2}} \Big|
\end{equation}
as a statistic to test the local null $H_0^{[i, j]}(u, h)$. Our aim is to test $H_0^{[i, j]}(u, h)$ simultaneously for a wide range of location-scale points $(u,h)$ and all possible pairs of time series $(i,j)$. To take into account that we are faced with a simultaneous test problem, we replace the statistics from \eqref{eq:local-stat-without-correction} by additively corrected versions of the form 
\begin{align}\label{eq:psi_zero_ij}
\widehat{\psi}^0_{ij, T}(u, h) =  \Big|\frac{\widehat{\psi}_{ij, T}(u, h)}{(\widehat{\sigma}_i^2 + \widehat{\sigma}_j^2)^{1/2}}\Big| - \lambda(h),
\end{align}
where $\lambda(h) = \sqrt{2 \log \{ 1/(2h) \}}$. The scale-dependent correction term $\lambda(h)$ was first introduced in the multiscale approach of \cite{DuembgenSpokoiny2001} and has been used in various contexts since then. We refer to \citet{KhismatullinaVogt2020} for a detailed discussion of the idea behind this additive correction.

In order to test the global null hypothesis $H_0$, we aggregate the test statistics $\widehat{\psi}^0_{ij, T}(u, h)$ by taking their maximum over all location-scale points $(u,h) \in \grid$ and all pairs of time series $(i,j)$ with $1 \le i < j \le n$. This leads to the multiscale test statistic 
\[ \widehat{\Psi}_{n,T} = \max_{(u,h) \in \grid, 1 \le i < j \le n} \widehat{\psi}^0_{ij, T}(u, h). \]
For our theoretical results, we suppose that the set of location-scale points $\grid$ is a subset of $\grid^{\text{full}} = \{ (u, h): \interval = [u-h,u+h] \subseteq [0,1]$ with $u = t/T$ and $h = s/T$  for some $1 \le t, s \le T$ and $h \in [h_{\min},h_{\max}] \}$ which fulfills the following conditions:
\begin{enumerate}[label=(C\arabic*),leftmargin=1.2cm]
\setcounter{enumi}{9}

\item \label{C-grid} $|\mathcal{G}_T| = O(T^\theta)$ for some arbitrarily large but fixed constant $\theta > 0$, where $|\mathcal{G}_T|$ denotes the cardinality of $\mathcal{G}_T$. 

\item \label{C-h} $h_{\min} \gg T^{-(1-\frac{2}{q})} \log T$, that is, $h_{\min} / \{ T^{-(1-\frac{2}{q})} \log T \} \rightarrow \infty$ with $q > 4$ defined in \ref{C-err1} and $h_{\max} = o(1)$.

\end{enumerate}
Assumptions \ref{C-grid} and \ref{C-h} place relatively mild restrictions on the set $\grid$: \ref{C-grid} allows $\grid$ to be very large compared to the sample size $T$. In particular, $\grid$ may grow as any polynomial of $T$. \ref{C-h} allows $\grid$ to contain intervals $[u-h,u+h]$ of many different scales $h$, ranging from very small intervals of scale $h_{\min}$ to substantially larger intervals of scale $h_{\max}$.

\subsection{The test procedure}\label{subsec:test:test}

Let $\mathcal{M} = \{ (u,h,i,j): (u,h) \in \grid \text{ and } 1 \le i < j \le n \}$ be the collection of all location-scale points $(u,h)$ and all pairs of time series $(i,j)$ under consideration. Moreover, let $\mathcal{M}_0 \subseteq \mathcal{M}$ be the set of tuples $(u,h,i,j)$ for which $H_0^{[i, j]}(u, h)$ is true.
For a given significance level $\alpha \in (0,1)$, our multiscale test is carried out as follows: 
\begin{enumerate}[label=(\roman*),leftmargin=0.75cm]

\item For each tuple $(u,h,i,j) \in \mathcal{M}$, we reject the local null hypothesis $H_0^{[i, j]}(u, h)$ if 
\[ \widehat{\psi}^0_{ij, T}(u, h) > q_{n,T}(\alpha), \]
where the critical value $q_{n,T}(\alpha)$ is constructed below such that the familywise error rate (FWER) is controlled at level $\alpha$. As usual, the FWER is defined as the probability of wrongly rejecting $H_0^{[i, j]}(u, h)$ for at least one tuple $(u,h,i,j) \in \mathcal{M}$. More formally, it is defined as 
\[ \text{FWER}(\alpha) = \pr \Big( \exists (u,h,i,j) \in \mathcal{M}_0 : \widehat{\psi}^0_{ij, T}(u, h) > q_{n,T}(\alpha) \Big) \]
for a given significance level $\alpha \in (0,1)$ and we say that the FWER is controlled at level $\alpha$ if $\text{FWER}(\alpha) \le \alpha$.

\item We reject the global null hypothesis $H_0: m_1 = m_2 = \ldots = m_n$ if at least one local null hypothesis is rejected. Put differently, we reject $H_0$ if 
\[ \widehat{\Psi}_{n,T} > q_{n,T}(\alpha). \]

\end{enumerate}
We now construct a critical value $q_{n,T}(\alpha)$ which controls the FWER at level $\alpha$. Let $q_{n,T}(\alpha)$ be the $(1-\alpha)$-quantile of the multiscale statistic $\widehat{\Psi}_{n,T}$ under the global null $H_0: m_1 = m_2 = \ldots = m_n$. Since
\begin{align*} 
\text{FWER}(\alpha)
 & = 1 - \pr \Big( \forall (u,h,i,j) \in \mathcal{M}_0 : \widehat{\psi}^0_{ij, T}(u, h) \le q_{n,T}(\alpha) \Big) \\
 & = 1 - \pr \Big( \max_{ (u,h,i,j) \in \mathcal{M}_0} \widehat{\psi}^0_{ij, T}(u, h) \le q_{n,T}(\alpha) \Big) \\
 & \le 1 - \pr_{H_0} \Big( \max_{ (u,h,i,j) \in \mathcal{M}} \widehat{\psi}^0_{ij, T}(u, h) \le q_{n,T}(\alpha) \Big) \le \alpha 
\end{align*}
with $\pr_{H_0}$ denoting the probability under $H_0$, this choice of $q_{n,T}(\alpha)$ indeed controls the FWER at the desired level. However, this choice is not computable in practice. We thus replace it by a suitable approximation: Suppose we could perfectly estimate $\bfbeta_i$ and $\sigma_i^2$, that is, $\widehat{\bfbeta}_i = \bfbeta_i$ and $\widehat{\sigma}_i^2 = \sigma_i^2$ for all $i$. Then under $H_0$, the test statistics $\widehat{\psi}^0_{ij, T}(u, h)$ would simplify to 
\begin{equation}\label{eq:stat-idealized}
\widehat{\psi}^0_{ij, T}(u, h) = \Big| \frac{\sum\nolimits_{t=1}^T w_{t,T}(u, h) \{ (\varepsilon_{it} - \bar{\varepsilon}_i) - (\varepsilon_{jt} - \bar{\varepsilon}_j) \}}{(\sigma_i^2 + \sigma_j^2)^{1/2}} \Big| - \lambda(h), 
\end{equation}
where $\bar{\varepsilon}_i = \bar{\varepsilon}_{i,T} := T^{-1} \sum_{t=1}^T \varepsilon_{it}$ denotes the empirical average of the variables $\varepsilon_{i1},\ldots,\varepsilon_{iT}$. Replacing the error terms $\varepsilon_{it}$ in \eqref{eq:stat-idealized} by normally distributed variables leads to the statistics
\begin{align}\label{eq:phi_zero_ij}
\phi^0_{ij, T}(u, h) =  \bigg|\frac{\phi_{ij, T}(u, h)}{(\sigma_i^2 + \sigma_j^2)^{1/2}}\bigg| - \lambda(h)
\end{align}
with 
\begin{align}\label{eq:phi_ij}
\phi_{ij, T}(u, h) = \sum\limits_{t=1}^T w_{t,T}(u, h) \, \big\{ \sigma_i (Z_{it} - \bar{Z}_i) - \sigma_j (Z_{jt} - \bar{Z}_j) \big\},
\end{align}
where $Z_{it}$ are independent standard normal random variables for $1 \le t \le T$ and $1 \le i \le n$
and, as before, we use the shorthand $\bar{Z}_i = \bar{Z}_{i,T} := T^{-1} \sum_{t=1}^T Z_{it}$. With this notation, we define 
\begin{align}\label{eq:Phi}
\Phi_{n,T} = \max_{1 \le i < j \le n}\max_{(u, h) \in \mathcal{G}_T} \phi^0_{ij, T}(u, h),
\end{align}
which can be regarded as a Gaussian version of the test statistic $\widehat{\Psi}_{n,T}$ under the null $H_0$ (in the idealized case with $\widehat{\bfbeta}_i = \bfbeta_i$ and $\widehat{\sigma}_i^2 = \sigma_i^2$ for all $i$). We now choose $q_{n,T}(\alpha)$ to be the $(1-\alpha)$-quantile of $\Phi_{n,T}$.

\begin{remark}
Importantly, this choice of $q_{n,T}(\alpha)$ can be computed by Monte Carlo simulations in practice: under our assumption that the long-run variance $\sigma_i^2$ does not depend on $i$ (i.e. $\sigma_i^2 = \sigma^2_j = \sigma^2$), we can rewrite the statistics from \eqref{eq:phi_zero_ij} as
\[\phi^0_{ij, T}(u, h) = \frac{1}{\sqrt{2}} \Big|\sum\limits_{t=1}^T w_{t,T}(u, h) \, \big\{ (Z_{it} - \bar{Z}_i) - (Z_{jt} - \bar{Z}_j) \big\}\Big| - \lambda(h). \] 
This shows that the distribution of these statistics and thus the distribution of the multiscale statistic $\Phi_{n,T}$ only depend on the Gaussian variables $Z_{it}$ for $1 \le i \le n$ and $1 \le t \le T$. Consequently, we can approximate the distribution of $\Phi_{n,T}$ -- and in particular its quantiles $q_{n,T}(\alpha)$ -- by simulating values of the Gaussian random variables $Z_{it}$. In Section \ref{subsec:test:impl}, we explain in detail how to compute Monte Carlo approximations of the quantiles $q_{n,T}(\alpha)$. 
\end{remark}

\subsection{Implementation of the test in practice}\label{subsec:test:impl}

In practice, we implement the test procedure as follows for a given significance level $\alpha \in (0, 1)$:
\begin{enumerate}[label=\textit{Step \arabic*.}, leftmargin=1.45cm]
\item Compute the $(1-\alpha)$-quantile $q_{n, T}(\alpha)$ of the Gaussian statistic $\Phi_{n,T}$ by Monte Carlo simulations. Specifically, draw a large number $L$ (say $L=5000$) of samples of independent standard normal random variables $\{Z_{it}^{(\ell)} : 1 \le t \le T, \, 1 \le i \le n \}$ for $1 \le \ell \le L$. For each sample $\ell$, compute the value $\Phi_{n,T}^{(\ell)}$ of the Gaussian statistic $\Phi_{n, T}$, and store these values. Calculate the empirical $(1-\alpha)$-quantile $\widehat{q}_{n, T}(\alpha)$ from the stored values $\{ \Phi_{n, T}^{(\ell)}: 1 \le \ell \le L \}$. Use $\widehat{q}_{n, T}(\alpha)$ as an approximation of the quantile $q_{n, T}(\alpha)$.
\item For each $(i, j)$ with $1 \le i < j \le n$ and each $(u, h) \in \grid$, reject the local null hypothesis $H_0^{[i, j]}(u, h)$ if $\widehat{\psi}^0_{ij, T}(u, h)> \widehat{q}_{n, T}(\alpha)$. Reject the global null hypothesis $H_0$ if at least one local hypothesis $H_0^{[i, j]}(u, h)$ is rejected. 
\item Display the test results as follows. For each pair of time series $(i,j)$, let $\mathcal{S}^{[i, j]}(\alpha)$ be the set of intervals $\mathcal{I}_{u, h}$ for which we reject $H_0^{[i, j]}(u, h)$. Produce a separate plot for each pair ($i,j)$ which displays the intervals in $\mathcal{S}^{[i, j]}(\alpha)$. This gives a graphical overview over the intervals where our tests finds a deviation from the null. 
\end{enumerate}

\begin{remark} 
In some cases, the number of intervals in $\mathcal{S}^{[i, j]}(\alpha)$ may be quite large, rendering the visual representation of the test results outlined in Step 3 cumbersome. To overcome this drawback, we replace $\mathcal{S}^{[i, j]}(\alpha)$ with the subset of minimal intervals $\mathcal{S}^{[i, j]}_{\text{min}}(\alpha) \subseteq \mathcal{S}^{[i, j]}(\alpha)$. As in \cite{Duembgen2002}, we call an interval $\mathcal{I}_{u, h} \in \mathcal{S}^{[i, j]}(\alpha)$ minimal if there is no other interval $\mathcal{I}_{u^\prime, h^\prime} \in \mathcal{S}^{[i, j]}(\alpha)$ such that $\mathcal{I}_{u^\prime, h^\prime} \subset \mathcal{I}_{u, h}$. Notably, our theoretical results remain to hold true when the sets $\mathcal{S}^{[i, j]}(\alpha)$ are replaced by $\mathcal{S}^{[i, j]}_{\text{min}}(\alpha)$. See Section \ref{sec:theo} for the details. 
\end{remark}

\section{Theoretical properties of the multiscale test}\label{sec:theo}

To start with, we investigate the auxiliary statistic
\begin{align}\label{eq:Phi_hat}
\widehat{\Phi}_{n,T} = \max_{1 \le i < j \le n}  \max_{(u, h) \in \mathcal{G}_T} \widehat{\phi}^0_{ij, T}(u, h),
\end{align}
where
\begin{equation*}
\widehat{\phi}^0_{ij, T}(u, h) =\bigg| \frac{\widehat{\phi}_{ij, T}(u, h)} {\{ \widehat{\sigma}_i^2 + \widehat{\sigma}_j^2 \}^{1/2}} \bigg| - \lambda(h)
\end{equation*}
and
\begin{align*}
\widehat{\phi}_{ij, T}(u, h) 
 & = \sum_{t=1}^T w_{t,T}(u, h) \big\{ (\varepsilon_{it} - \bar{\varepsilon}_i) + (\bfbeta_i - \widehat{\bfbeta}_i)^\top (\X_{it} - \bar{\X}_{i}) \\[-0.4cm]
 & \phantom{= \sum_{t=1}^T w_{t,T}(u, h) \big\{} - (\varepsilon_{jt} - \bar{\varepsilon}_j) -  (\bfbeta_j - \widehat{\bfbeta}_j)^\top (\X_{jt} - \bar{\X}_{j}) \big\}
\end{align*}
with $\bar{\varepsilon}_i = \bar{\varepsilon}_{i,T} := T^{-1} \sum_{t=1}^T \varepsilon_{it}$ and $\bar{\X}_{i} =  \bar{\X}_{i, T} := T^{-1}\sum_{t=1}^T  \X_{it}$. By construction, it holds that $\widehat{\phi}_{ij, T}(u, h) = \widehat{\psi}_{ij, T}(u, h)$ under $H_0^{[i, j]}(u, h)$. Hence, the auxiliary statistic $\widehat{\Phi}_{n,T}$ is identical to the multiscale test statistic $\widehat{\Psi}_{n,T}$ under the global null $H_0$. Consequently, in order to determine the distribution of the test statistic $\widehat{\Psi}_{n,T}$ under $H_0$, it suffices to study the auxiliary statistic $\widehat{\Phi}_{n,T}$. The following theorem shows that the distribution of $\widehat{\Phi}_{n,T}$ is close to the distribution of the Gaussian statistic $\Phi_{n,T}$ introduced in \eqref{eq:Phi}. 
\begin{theorem}\label{theo:stat:global}
Let \ref{C-err1}--\ref{C-h} be fulfilled. Moreover, assume that $\widehat{\sigma}_i^2 = \sigma^2_i + o_p(\rho_T)$ with $\rho_T = o(1/\log T)$ and $\sigma_i^2 = \sigma^2$ for all $i$. Then  
\begin{equation*}
\sup_{x \in \reals} \big| \pr(\widehat{\Phi}_{n, T} \le x) - \pr(\Phi_{n,T} \le x) \big| = o(1).
\end{equation*}
\end{theorem}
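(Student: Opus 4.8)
The plan is to prove the theorem by a sequence of reductions that successively strip the statistic $\widehat{\Phi}_{n,T}$ of its non-Gaussian features, culminating in an anti-concentration argument that upgrades a coupling bound into the claimed uniform bound on distribution functions. First I would remove the nuisance estimation terms and show that replacing $\widehat{\bfbeta}_i$ by $\bfbeta_i$ and $\widehat{\sigma}_i^2$ by $\sigma_i^2$ throughout $\widehat{\Phi}_{n,T}$ changes its value by $o_p(1)$, uniformly over all $(u,h,i,j) \in \mathcal{M}$. For the slope terms $(\bfbeta_i - \widehat{\bfbeta}_i)^\top(\X_{it} - \bar{\X}_i)$, I would combine the rate $\widehat{\bfbeta}_i - \bfbeta_i = O_p(T^{-1/2})$ from Lemma \ref{lemma-beta-rate} with the normalization $\sum_t w_{t,T}(u,h)^2 = 1$ and the weak dependence of the covariates under \ref{C-reg1}--\ref{C-reg3}: the weighted covariate average $\sum_t w_{t,T}(u,h)(\X_{it} - \bar{\X}_i)$ is $O_p(1)$ pointwise and, thanks to the moment bound $q' > \max\{4,\theta q\}$ and the polynomial grid size \ref{C-grid}, stays $O_p(1)$ up to a logarithmic factor uniformly over $\grid$. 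Multiplied by the $O_p(T^{-1/2})$ rate, the slope contribution is thus $o_p(1/\log T)$. For the variance estimator I would exploit that $\widehat{\sigma}_i^2 = \sigma_i^2 + o_p(\rho_T)$ with $\rho_T = o(1/\log T)$: the resulting multiplicative discrepancy in each normalized kernel average is $o_p(\rho_T)$, and since the un-normalized kernel average at the smallest scale is of order $\lambda(h_{\min}) \asymp \sqrt{\log T}$, this discrepancy contributes only $o_p(\rho_T \sqrt{\log T}) = o_p(1)$.

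Second, I would invoke a strong approximation for the partial-sum processes of the error sequences. Writing each demeaned kernel average $\sum_t w_{t,T}(u,h)(\varepsilon_{it} - \bar{\varepsilon}_i)$ by summation by parts in terms of the partial sums $\sum_{t \le s}\varepsilon_{it}$ (the centering by $\bar{\varepsilon}_i$ being absorbed into this step), I would couple these partial sums with Gaussian partial sums $\sigma_i\sum_{t\le s} Z_{it}$ on a common probability space, using the invariance principle available under \ref{C-err1}--\ref{C-err3}, as established in \cite{KhismatullinaVogt2020} building on \cite{Wu2005}. The almost-sure coupling error is of a polynomial order in $T$ governed by the moment exponent $q$ and the dependence decay $\gamma$ in \ref{C-err3}. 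Independence of the error processes across $i$ under \ref{C-err2} allows this coupling to be carried out separately for each series.

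Third --- and this is where I expect the main difficulty --- I would show that the coupling error, after being passed through the Lipschitz kernel weights (\ref{C-ker}) and maximized over all $(u,h,i,j) \in \mathcal{M}$, is $o_p(1/\log T)$. This is precisely the role of \ref{C-h}: the condition $h_{\min} \gg T^{-(1-2/q)}\log T$ is calibrated so that the strong-approximation error, divided by the normalization $\{\sum_t w_{t,T}(u,h)^2\}^{1/2}$, remains uniformly negligible even at the smallest scale, while \ref{C-grid} ensures that inflating a pointwise bound to a uniform one over a polynomially large grid costs at most a power of $\log T$. Achieving this control simultaneously across all scales $h \in [h_{\min},h_{\max}]$, all locations $u$, and all pairs $(i,j)$ is the technical crux of the argument, and it is here that the explicit form of $\gamma$ in \ref{C-err3} must be reconciled with the lower bound on $h_{\min}$.

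After these reductions I will have a Gaussian statistic $\Phi_{n,T}^*$, defined on the same probability space and satisfying $\Phi_{n,T}^* \stackrel{d}{=} \Phi_{n,T}$, with $|\widehat{\Phi}_{n,T} - \Phi_{n,T}^*| = o_p(\delta_T)$ for some deterministic $\delta_T = o(1)$. To convert this into a uniform bound on distribution functions, I would apply an anti-concentration inequality for the maximum of a Gaussian random vector: for any $x \in \reals$,
\[
\pr(\widehat{\Phi}_{n,T} \le x) \le \pr(\Phi_{n,T}^* \le x + \delta_T) + o(1) \le \pr(\Phi_{n,T}^* \le x) + \sup_{y \in \reals}\pr\big(y < \Phi_{n,T}^* \le y + \delta_T\big) + o(1),
\]
where the anti-concentration bound forces the supremum to be $o(1)$ because $\delta_T \to 0$. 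The symmetric inequality supplies the matching lower bound, and since $\Phi_{n,T}^* \stackrel{d}{=} \Phi_{n,T}$, combining the two directions yields $\sup_{x \in \reals}|\pr(\widehat{\Phi}_{n,T} \le x) - \pr(\Phi_{n,T} \le x)| = o(1)$, as claimed.
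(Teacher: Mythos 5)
Your proposal is correct in outline and follows essentially the same route as the paper's proof: strip out $\widehat{\bfbeta}_i$ and $\widehat{\sigma}_i^2$, couple the demeaned error kernel averages (via summation by parts) with Gaussian partial sums using a strong approximation theorem for dependent processes (the paper invokes Berkes--Liu--Wu 2014 rather than the reference you name), and convert the resulting $o_p(\delta_T)$ coupling bound into the uniform Kolmogorov bound via Nazarov's anti-concentration inequality, exactly as in your final display; the paper merely orders the reductions differently (coupling first, removal of $\widehat{\bfbeta}_i$ afterwards) and phrases the coupling through identically distributed versions $\widetilde{\Phi}_{n,T}$ to keep the probability spaces straight. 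Two calibration points where the paper is more careful than your sketch: the uniform bound on $\max_{(u,h)\in\grid}\big|\sum_{t=1}^T w_{t,T}(u,h)\X_{it}\big|$ is $o_p(T^{1/q})$, obtained from a Nagaev-type inequality plus a union bound using $q^\prime > \theta q$ --- with only polynomial moments a maximum over $O(T^\theta)$ grid points cannot be ``$O_p(1)$ up to a logarithmic factor,'' though your conclusion survives because $T^{1/q}\cdot T^{-1/2}$ still vanishes fast enough; and the anti-concentration step requires $\delta_T\sqrt{\log T}=o(1)$ rather than merely $\delta_T\to 0$, since the Gaussian vector has polynomially many coordinates, which is precisely what Assumption \ref{C-h} and $\rho_T = o(1/\log T)$ are calibrated to deliver.
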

Theorem \ref{theo:stat:global} is key for deriving theoretical properties of our multiscale test. Its proof is provided in the Appendix.

\begin{remark}
The proof of Theorem \ref{theo:stat:global} builds on two important theoretical results: strong approximation theory for dependent processes \citep{BerkesLiuWu2014} and anti-concentration bounds for Gaussian random vectors \citep{Nazarov2003}. It can be regarded as a further development of the proof strategy in \cite{KhismatullinaVogt2020} who developed multiscale methods to test for local increases/decreases of the nonparametric trend function $m$ in the univariate time series model $Y_t = m(t/T) + \varepsilon_t$. We extend their theoretical results in several directions: we consider the case of multiple time series and work with a more general model which includes covariates and a flexible fixed effect error structure.  
\end{remark}

With the help of Theorem \ref{theo:stat:global}, we now examine the theoretical properties of our multiscale test developed in Section \ref{sec:test}. The following proposition shows that our test of the global null $H_0$ has correct (asymptotic) size.
\begin{prop}\label{prop:test}
Suppose that the conditions of Theorem \ref{theo:stat:global} are satisfied. Then under $H_0$, we have
\[ \pr \big( \widehat{\Psi}_{n,T} \le q_{n,T}(\alpha) \big) = (1 - \alpha) + o(1). \]
\end{prop}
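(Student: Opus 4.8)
The plan is to obtain the result almost immediately from Theorem \ref{theo:stat:global}, which already carries the analytical weight. The only additional ingredients are the elementary identity relating $\widehat{\Psi}_{n,T}$ and $\widehat{\Phi}_{n,T}$ under the null, the uniform-in-$x$ nature of the bound in Theorem \ref{theo:stat:global}, and the defining property of the quantile $q_{n,T}(\alpha)$.

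First I would recall that under $H_0$ we have $m_1 = \cdots = m_n$, so that $H_0^{[i,j]}(u,h)$ holds for every tuple $(u,h,i,j) \in \mathcal{M}$. As noted in the text preceding Theorem \ref{theo:stat:global}, $\widehat{\phi}_{ij,T}(u,h) = \widehat{\psi}_{ij,T}(u,h)$ whenever $H_0^{[i,j]}(u,h)$ is true, and hence $\widehat{\Psi}_{n,T} = \widehat{\Phi}_{n,T}$ as random variables under $H_0$. Consequently,
\[
\pr\big( \widehat{\Psi}_{n,T} \le q_{n,T}(\alpha) \big) = \pr\big( \widehat{\Phi}_{n,T} \le q_{n,T}(\alpha) \big).
\]
Next I would invoke Theorem \ref{theo:stat:global}, which gives $\sup_{x \in \reals} | \pr(\widehat{\Phi}_{n,T} \le x) - \pr(\Phi_{n,T} \le x) | = o(1)$. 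Since this bound is uniform in $x$, I may evaluate it at the (possibly $T$-dependent) point $x = q_{n,T}(\alpha)$ to obtain
\[
\pr\big( \widehat{\Phi}_{n,T} \le q_{n,T}(\alpha) \big) = \pr\big( \Phi_{n,T} \le q_{n,T}(\alpha) \big) + o(1).
\]
The uniformity is precisely what makes this step legitimate despite $q_{n,T}(\alpha)$ varying with $T$.

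Finally, since $q_{n,T}(\alpha)$ is by construction the $(1-\alpha)$-quantile of $\Phi_{n,T}$, I would conclude $\pr(\Phi_{n,T} \le q_{n,T}(\alpha)) = 1-\alpha$, which combined with the two displays yields the claim. The single point requiring a little care — and the only place where an obstacle could arise — is verifying that this last equality holds \emph{exactly} rather than merely as the inequality $\pr(\Phi_{n,T} \le q_{n,T}(\alpha)) \ge 1-\alpha$ guaranteed by the general definition of a quantile; this requires the distribution of $\Phi_{n,T}$ to be free of atoms at its quantile. Fortunately this is straightforward here: each $\phi^0_{ij,T}(u,h)$ equals the absolute value of a nondegenerate Gaussian linear combination of the $Z_{it}$ minus the deterministic constant $\lambda(h)$, hence has a continuous distribution, and $\Phi_{n,T}$ is the maximum of finitely many such variables, so $\pr(\Phi_{n,T} = x) \le \sum \pr(\phi^0_{ij,T}(u,h) = x) = 0$ for every $x \in \reals$. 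Thus $\Phi_{n,T}$ has no atoms, the quantile identity holds with equality, and assembling the three displays delivers $\pr(\widehat{\Psi}_{n,T} \le q_{n,T}(\alpha)) = (1-\alpha) + o(1)$.
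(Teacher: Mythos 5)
Your proof is correct, and its skeleton---reducing $\widehat{\Psi}_{n,T}$ to $\widehat{\Phi}_{n,T}$ under $H_0$, applying the uniform bound of Theorem \ref{theo:stat:global} at the ($T$-dependent) point $x = q_{n,T}(\alpha)$, and then invoking the exact quantile identity $\pr(\Phi_{n,T} \le q_{n,T}(\alpha)) = 1-\alpha$---is the same as the paper's. The difference lies in how that last identity is justified, which is where the paper spends essentially all of its effort. The paper argues by contradiction: if $\pr(\Phi_{n,T} \le q_{n,T}(\alpha)) = 1-\alpha+\xi$ for some $\xi>0$, then the Gaussian anti-concentration bound (Nazarov's inequality, recycled from the proof of Proposition \ref{propA:anticon}) gives $\pr(\Phi_{n,T} \le q_{n,T}(\alpha)-\delta) > 1-\alpha$ for $\delta$ small enough, contradicting the infimum definition of the quantile. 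You instead observe that $\Phi_{n,T}$ is a finite maximum of variables of the form $|G|/c - \lambda(h)$ with $G$ a nondegenerate Gaussian, hence atomless by a union bound, so its distribution function is continuous and the quantile identity holds exactly. Both arguments are valid, and both rest on the same nondegeneracy fact (the paper needs $\ex[Z_{ij,l}^2] \ge b > 0$ to apply Nazarov's inequality; you need it so that each component is continuously distributed---note this requires $(\sum_t w_{t,T}(u,h))^2 < T$, which holds for large $T$ since $h_{\max} = o(1)$). Your route is more elementary and self-contained for this step; the paper's route reuses machinery it already developed for Theorem \ref{theo:stat:global} and yields a quantitative modulus of continuity of the law of $\Phi_{n,T}$ rather than mere continuity, though only the qualitative statement is needed here.
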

The next proposition characterises the power of the test against a certain class of local alternatives. To formulate it, we consider a sequence of pairs of functions $m_ i := m_{i,T}$ and $m_ j := m_{j,T}$ that depend on the time series length $T$ and that are locally sufficiently far from each other.
\begin{prop}\label{prop:test:power}
Let the conditions  of Theorem \ref{theo:stat:global} be satisfied. Moreover, assume that for some pair of indices $i$ and $j$, the functions $m_ i = m_{i,T}$ and $m_ j = m_{j,T}$ have the following property: There exists $(u, h) \in \mathcal{G}_T$ with $[u-h, u+h] \subseteq [0,1]$ such that $m_{i,T}(w) - m_{j,T}(w) \ge c_T \sqrt{\log T/(Th)}$ for all $w \in [u-h, u+h]$ or $m_{j,T}(w) - m_{i,T}(w) \ge c_T \sqrt{\log T/(Th)}$ for all $w \in [u-h, u+h]$, where $\{c_T\}$ is any sequence of positive numbers with $c_T \rightarrow \infty$. Then 
\[ \pr \big( \widehat{\Psi}_{n,T} \le q_{n,T}(\alpha) \big) = o(1). \]
\end{prop}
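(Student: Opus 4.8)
The plan is to bound $\widehat{\Psi}_{n,T}$ from below by the single summand attached to the detecting tuple and to show that this summand diverges while the critical value stays bounded. Fix the pair $(i,j)$ and a location-scale point $(u,h) = (u_T,h_T) \in \grid$ with $\interval = [u-h,u+h] \subseteq [0,1]$ along which the separation condition holds, and note that $\widehat{\Psi}_{n,T} \ge \widehat{\psi}^0_{ij,T}(u,h)$ by definition of the maximum. Hence
\[ \pr\big(\widehat{\Psi}_{n,T} \le q_{n,T}(\alpha)\big) \le \pr\big(\widehat{\psi}^0_{ij,T}(u,h) \le q_{n,T}(\alpha)\big), \]
and it suffices to prove that $\widehat{\psi}^0_{ij,T}(u,h)$ exceeds $q_{n,T}(\alpha)$ with probability tending to one. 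Since the additive correction $\lambda(h)$ of \cite{DuembgenSpokoiny2001} renders the Gaussian statistic $\Phi_{n,T}$ stochastically bounded, $q_{n,T}(\alpha)$ stays bounded in $T$, so it is enough to show that $\widehat{\psi}^0_{ij,T}(u,h)$ diverges to $+\infty$ in probability.

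First I would decompose $\widehat{\psi}_{ij,T}(u,h) = \sum_{t} w_{t,T}(u,h)(\widehat{Y}_{it} - \widehat{Y}_{jt})$. Inserting $\widehat{Y}_{it} = m_i(t/T) + (\alpha_i - \widehat{\alpha}_i) + (\bfbeta_i - \widehat{\bfbeta}_i)^\top \X_{it} + \varepsilon_{it}$ gives $\widehat{\psi}_{ij,T}(u,h) = \kappa_{ij}(u,h) + A_{ij} + B_{ij} + C_{ij}$, where $\kappa_{ij}(u,h) = \sum_t w_{t,T}(u,h)\{m_i(t/T) - m_j(t/T)\}$ is the signal, $A_{ij} = \{(\alpha_i - \widehat{\alpha}_i) - (\alpha_j - \widehat{\alpha}_j)\}\sum_t w_{t,T}(u,h)$, $B_{ij} = \sum_t w_{t,T}(u,h)\{(\bfbeta_i - \widehat{\bfbeta}_i)^\top \X_{it} - (\bfbeta_j - \widehat{\bfbeta}_j)^\top \X_{jt}\}$ and $C_{ij} = \sum_t w_{t,T}(u,h)(\varepsilon_{it} - \varepsilon_{jt})$. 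A short computation with $\Lambda_{t,T}$ gives $\sum_t w_{t,T}(u,h) \asymp (Th)^{1/2}$, so using $\widehat{\alpha}_i - \alpha_i = O_p(T^{-1/2})$, Lemma \ref{lemma-beta-rate} and the maximal bound $\max_{(u,h) \in \grid}|\sum_t w_{t,T}(u,h)\X_{it}| = O_p((\log T)^{1/2})$ one obtains $A_{ij} = O_p(h_{\max}^{1/2}) = o_p(1)$ and $B_{ij} = O_p((\log T/T)^{1/2}) = o_p(1)$. For the stochastic part, replacing $\varepsilon_{it}$ by $\varepsilon_{it} - \bar{\varepsilon}_i$ alters $C_{ij}$ by $o_p(1)$, and the centred noise term is precisely the summand of the idealised statistic \eqref{eq:stat-idealized}; by the strong-approximation arguments underlying Theorem \ref{theo:stat:global} its grid-maximum has the same limiting law as $\Phi_{n,T}$ and is thus $O_p(1)$ after subtracting $\lambda(h)$, whence $|C_{ij}|/(\sigma_i^2 + \sigma_j^2)^{1/2} \le \lambda(h) + O_p(1)$. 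Finally $\widehat{\sigma}_i^2 + \widehat{\sigma}_j^2 = 2\sigma^2(1 + o_p(1))$ is bounded away from $0$ and $\infty$.

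The heart of the argument, and the step I expect to be the main obstacle, is the lower bound on the signal $\kappa_{ij}(u,h)$. Because $\interval \subseteq [0,1]$, the kernel support is not truncated and the design is symmetric about $u$, so $S_{T,1}(u,h) = O((Th)^{-1})$ while $S_{T,2}(u,h) \to \int_{-1}^1 K(x)x^2\,dx > 0$; hence for $Th$ large the local-linear weights satisfy $w_{t,T}(u,h) \ge 0$ for every $t$ and $\sum_t w_{t,T}(u,h) \ge c_0 (Th)^{1/2}$ with $c_0 = \{\int_{-1}^1 K(x)^2\,dx\}^{-1/2} > 0$. Combining non-negativity of the weights with the uniform separation $|m_{i,T}(w) - m_{j,T}(w)| \ge c_T\{\log T/(Th)\}^{1/2}$ on $\interval$ yields
\[ |\kappa_{ij}(u,h)| \ge c_T\sqrt{\frac{\log T}{Th}}\,\sum_t w_{t,T}(u,h) \ge c_0\, c_T\sqrt{\log T}. \]
The delicate points are verifying the asymptotic non-negativity and the exact order $(Th)^{1/2}$ of the weight sum uniformly over $(u,h) \in \grid$; this is what couples the normalisation $\sum_t w_{t,T}(u,h)^2 = 1$ to the factor $\{\log T/(Th)\}^{1/2}$ in the separation condition and explains why the test detects precisely alternatives of this magnitude.

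Putting the pieces together, the reverse triangle inequality together with the bound on $C_{ij}$ gives, for a constant $c_1 > 0$ coming from $c_0$ and the limit $\widehat{\sigma}_i^2 + \widehat{\sigma}_j^2 \to 2\sigma^2$,
\[ \widehat{\psi}^0_{ij,T}(u,h) \ge \frac{|\kappa_{ij}(u,h)|}{(\widehat{\sigma}_i^2 + \widehat{\sigma}_j^2)^{1/2}} - 2\lambda(h) - O_p(1) - o_p(1) \ge c_1 c_T\sqrt{\log T}\,(1 - o_p(1)) - 2\lambda(h) - O_p(1). \]
By \ref{C-h} we have $1/(2h) \le 1/(2h_{\min})$ with $h_{\min} \gg T^{-(1-2/q)}\log T$, so $\lambda(h) = \sqrt{2\log\{1/(2h)\}} = O((\log T)^{1/2})$, while $c_T \to \infty$. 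The right-hand side is therefore of the form $\sqrt{\log T}\,\{c_1 c_T(1 - o_p(1)) - O(1)\} - O_p(1)$, which diverges to $+\infty$ in probability. Consequently $\widehat{\psi}^0_{ij,T}(u,h)$ exceeds the bounded critical value $q_{n,T}(\alpha)$ with probability tending to one, giving $\pr(\widehat{\Psi}_{n,T} \le q_{n,T}(\alpha)) = o(1)$, as claimed.
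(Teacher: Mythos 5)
Your proposal follows essentially the same route as the paper's proof: lower-bound $\widehat{\Psi}_{n,T}$ by the statistic at the detecting point, split off the signal $\kappa_{ij}(u,h)$ from noise and estimation-error terms, use the symmetry of the kernel (together with $\interval \subseteq [0,1]$ and the grid structure) to get non-negative local linear weights and the Riemann-sum bound $\sum_t w_{t,T}(u,h) \asymp \sqrt{Th}\,/\,(\int K^2)^{1/2}$, so that the signal is of order at least $c_T\sqrt{\log T}$, and then let $c_T \to \infty$ do the rest. However, there is one genuine flaw in the written logic: twice you invoke the claim that the additive correction of \cite{DuembgenSpokoiny2001} renders things stochastically bounded, asserting that $q_{n,T}(\alpha) = O(1)$ and that $|C_{ij}|/(\sigma_i^2+\sigma_j^2)^{1/2} \le \lambda(h) + O_p(1)$. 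Neither claim is established in the paper, and neither follows off the shelf: tightness of an additively corrected multiscale statistic in this discrete, local-linear, multi-series setting would require a chaining argument in the spirit of D\"umbgen--Spokoiny that nobody has carried out here. Consequently your reduction ``it is enough to show that $\widehat{\psi}^0_{ij,T}(u,h)$ diverges'' is, as stated, unjustified. A second, smaller overreach is the bound $\max_{(u,h) \in \grid}|\sum_t w_{t,T}(u,h)\X_{it}| = O_p((\log T)^{1/2})$: under the paper's moment assumptions (finite $q'$-th moments only) this rate is unsupported; the paper proves only $o_p(T^{1/q})$ in \eqref{ineq-diff-9}.

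Fortunately both gaps are harmless, because the stronger claims are unnecessary. What the paper actually uses — and what follows directly from the Gaussian maximal inequality over the $O(T^\theta)$-point grid, see \eqref{eq:phi-bound-max-Gaussians} — is $q_{n,T}(\alpha) = O(\sqrt{\log T})$ and a noise bound $O_p(\sqrt{\log T})$ (via the strong approximation, as in the proof of Proposition \ref{propA:step4}). Your final display already exhibits divergence at rate $c_T\sqrt{\log T}$, which dominates any $O(\sqrt{\log T})$ term precisely because $c_T \to \infty$; so replacing your two $O_p(1)$ claims by $O_p(\sqrt{\log T})$ repairs the argument with no other change. Similarly, for $B_{ij}$ you only need $o_p(1)$ at the single deterministic detecting point, which follows from the pointwise bound $\sum_t w_{t,T}(u,h)\X_{it} = O_p(1)$ (bounded variance, since $\sum_t w_{t,T}^2(u,h) = 1$ and the covariates have summable physical dependence) together with $\widehat{\bfbeta}_i - \bfbeta_i = O_p(T^{-1/2})$ from Lemma \ref{lemma-beta-rate}. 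With these substitutions your proof coincides in substance with the paper's.
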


We now turn attention to the local null hypotheses $H_0^{[i, j]}(u, h)$. As already defined above, let $\mathcal{M} = \{ (u,h,i,j): (u,h) \in \grid \text{ and } 1 \le i < j \le n \}$ be the collection of location-scale points $(u,h)$ and pairs of time series $(i,j)$ under consideration. Moreover, let $\mathcal{M}_0 \subseteq \mathcal{M}$ be the set of tuples $(u,h,i,j)$ for which $H_0^{[i, j]}(u, h)$ is true. Since we test $H_0^{[i, j]}(u, h)$ simultaneously for all $(u,h,i,j) \in \mathcal{M}$, we would like to bound the probability of making at least one false discovery. More formally, we would like to control the family-wise error rate 
\begin{align*} 
\text{FWER}(\alpha) 
 & = \pr \Big(\exists \,  (u,h,i,j) \in \mathcal{M}_0: \widehat{\psi}^0_{ij, T}(u, h) > q_{n,T}(\alpha) \Big) \\
 & = \pr \Big(\exists \,  (u,h,i,j) \in \mathcal{M}_0: \interval \in \mathcal{S}^{[i, j]}(\alpha) \Big)
\end{align*}
at level $\alpha$, where $\mathcal{S}^{[i, j]}(\alpha)$ is the set of intervals $\interval$ for which the null hypothesis $H_0^{[i, j]}(u, h)$ is rejected. The following result shows that our test procedure asymptotically controls the FWER at level $\alpha$.
\begin{prop}\label{prop:test:fwer}
Let the conditions of Theorem \ref{theo:stat:global} be satisfied. Then for any given $\alpha \in (0,1)$, 
\[ \textnormal{FWER}(\alpha) \leq \alpha + o(1). \]
\end{prop}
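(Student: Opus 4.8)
My plan is to follow the FWER bound already sketched in Section~\ref{subsec:test:test} for the idealised critical value, but now with the computable quantile $q_{n,T}(\alpha)$ and with the approximation error absorbed by Theorem~\ref{theo:stat:global}. First I would rewrite the familywise error rate through its complement,
\[
\textnormal{FWER}(\alpha) = 1 - \pr\Big( \max_{(u,h,i,j)\in\mathcal{M}_0} \widehat{\psi}^0_{ij,T}(u,h) \le q_{n,T}(\alpha) \Big),
\]
so that it suffices to lower-bound the probability on the right, i.e.\ to show that the statistics $\widehat{\psi}^0_{ij,T}(u,h)$ stay below $q_{n,T}(\alpha)$ on the null set $\mathcal{M}_0$ with probability at least $1-\alpha-o(1)$.

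The crucial step is to dominate the maximum over $\mathcal{M}_0$ by the auxiliary statistic $\widehat{\Phi}_{n,T}$. For any $(u,h,i,j)\in\mathcal{M}_0$ the local null $H_0^{[i,j]}(u,h)$ holds, so $m_i\equiv m_j$ on $\interval$; since the weights $w_{t,T}(u,h)$ vanish off $\{t:t/T\in\interval\}$, the trend part of $\widehat{\psi}_{ij,T}(u,h)$ collapses to $(\bar m_j-\bar m_i)\sum_{t}w_{t,T}(u,h)$ with $\bar m_i:=T^{-1}\sum_t m_i(t/T)$. Using the Lipschitz property and the normalisation $\int_0^1 m_i=0$ from \ref{C-trend} one gets $\bar m_i=O(T^{-1})$, while $\sum_t w_{t,T}(u,h)=O(\sqrt{Th})$ by Cauchy--Schwarz together with $\|w_{\cdot,T}(u,h)\|_2=1$. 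Hence on $\mathcal{M}_0$ the normalised statistics satisfy $\widehat{\psi}^0_{ij,T}(u,h)=\widehat{\phi}^0_{ij,T}(u,h)+R_{ij,T}(u,h)$ with $\sup_{\mathcal{M}_0}|R_{ij,T}|=O_p(\sqrt{h_{\max}/T})=o_p(1)$ (the denominator $(\widehat{\sigma}_i^2+\widehat{\sigma}_j^2)^{1/2}$ being bounded away from zero in probability). Since $\mathcal{M}_0\subseteq\mathcal{M}$, this gives
\[
\max_{(u,h,i,j)\in\mathcal{M}_0}\widehat{\psi}^0_{ij,T}(u,h)\le \max_{(u,h,i,j)\in\mathcal{M}}\widehat{\phi}^0_{ij,T}(u,h)+R_T=\widehat{\Phi}_{n,T}+R_T,\qquad R_T=o_p(1).
\]

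Combining the two displays yields $\textnormal{FWER}(\alpha)\le 1-\pr(\widehat{\Phi}_{n,T}+R_T\le q_{n,T}(\alpha))$. I would then invoke Theorem~\ref{theo:stat:global} to pass from $\widehat{\Phi}_{n,T}$ to the Gaussian statistic $\Phi_{n,T}$ and use that $q_{n,T}(\alpha)$ is the $(1-\alpha)$-quantile of $\Phi_{n,T}$, so that $\pr(\Phi_{n,T}\le q_{n,T}(\alpha))\ge 1-\alpha$. Together these give $\textnormal{FWER}(\alpha)\le\alpha+o(1)$, exactly as for the size statement in Proposition~\ref{prop:test}.

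The one genuine subtlety will be discharging the remainder $R_T$: it cannot simply be dropped inside the probability, so I must show that $\pr(\widehat{\Phi}_{n,T}\le q_{n,T}(\alpha)-\varepsilon)$ and $\pr(\widehat{\Phi}_{n,T}\le q_{n,T}(\alpha))$ differ by a quantity tending to zero as $\varepsilon\downarrow 0$. This is precisely where the anti-concentration bound for maxima of Gaussian vectors \citep{Nazarov2003} --- the same device underlying Theorem~\ref{theo:stat:global} --- is needed, guaranteeing that $\Phi_{n,T}$ (and hence, through Theorem~\ref{theo:stat:global}, $\widehat{\Phi}_{n,T}$) places no mass near $q_{n,T}(\alpha)$. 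Beyond this, the argument is a routine repackaging of the reasoning behind Proposition~\ref{prop:test}, since the heavy analytic work has already been carried out in Theorem~\ref{theo:stat:global}, which I take as given.
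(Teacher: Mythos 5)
Your proposal is correct and follows the same skeleton as the paper's proof: write the FWER as the probability that $\max_{(u,h,i,j)\in\mathcal{M}_0}\widehat{\psi}^0_{ij,T}(u,h)$ exceeds $q_{n,T}(\alpha)$, dominate this maximum by the auxiliary statistic $\widehat{\Phi}_{n,T}$ after enlarging $\mathcal{M}_0$ to $\mathcal{M}$, and then invoke Theorem~\ref{theo:stat:global} together with the fact that $q_{n,T}(\alpha)$ is an exact $(1-\alpha)$-quantile of $\Phi_{n,T}$ (which the paper establishes via the Nazarov anti-concentration bound in the proof of Proposition~\ref{prop:test}). Where you deviate is actually a point at which you are \emph{more} careful than the paper: the paper's proof rests on the claimed identity $\widehat{\psi}^0_{ij,T}(u,h)=\widehat{\phi}^0_{ij,T}(u,h)$ for every $(u,h,i,j)\in\mathcal{M}_0$, but, as you observe, under a merely local null one only gets $\widehat{\psi}_{ij,T}(u,h)=\widehat{\phi}_{ij,T}(u,h)+(\bar{m}_{j,T}-\bar{m}_{i,T})\sum_{t=1}^T w_{t,T}(u,h)$ with $\bar{m}_{i,T}=T^{-1}\sum_{t=1}^T m_i(t/T)$, and these sample means need not coincide unless $m_i=m_j$ globally; this is precisely the term the paper itself tracks in the proof of Proposition~\ref{prop:test:power} but drops in its three-line FWER argument. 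Your bound $O_p\big(\sqrt{h_{\max}/T}\big)$ on the resulting remainder (via \ref{C-trend}, the normalisation $\int_0^1 m_i(u)\,du=0$, Cauchy--Schwarz, and consistency of $\widehat{\sigma}_i^2$) is correct, and since $\sqrt{h_{\max}/T}\cdot\sqrt{\log T}=o(1)$, the remainder is indeed absorbed by the anti-concentration statement (Proposition~\ref{propA:anticon}, applied with a deterministic sequence $\epsilon_T$ dominating the remainder), exactly as you propose. So the two arguments share the same route; yours costs some extra bookkeeping with the remainder and anti-concentration, but in exchange it rigorously closes a small gap that the paper's appeal to exact equality under the local null leaves open.
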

The statement of Proposition \ref{prop:test:fwer} can obviously be reformulated as follows: 
\begin{align*}
\pr\Big( \forall \, (u,h,i,j) \in \mathcal{M} \text{ for which } H_0^{[i, j]}(u, h) \text{ is true, } H_0^{[i, j]}(u, h) & \text{ is not rejected} \Big) \\ & \quad \ge 1 - \alpha + o(1).
\end{align*}
Consequently, we can make the following simultaneous confidence statement: 
\begin{equation}\label{eq:CS-v1}
\begin{minipage}[c][1.25cm][c]{13cm}
\textit{With (asymptotic) probability at least $1-\alpha$, the trends $m_i$ and $m_j$ differ on any interval $\interval$ and for all pairs $(i,j)$ for which $H_0^{[i, j]}(u, h)$ is rejected.}
\end{minipage}
\end{equation}
Put differently:
\begin{equation}\label{eq:CS-v2}
\begin{minipage}[c][1.25cm][c]{13cm}
\textit{With (asymptotic) probability at least $1-\alpha$, for any pair $(i,j)$ with $1 \le i < j \le n$, the two trends $m_i$ and $m_j$ differ on all intervals $\interval \in \mathcal{S}^{[i, j]}(\alpha)$.}
\end{minipage}
\end{equation}

\begin{remark}
According to \eqref{eq:CS-v2}, the graphical device introduced in Section \ref{subsec:test:impl} which depicts the intervals in $\mathcal{S}^{[i, j]}(\alpha)$ comes with the following statistical guarantee: we can claim with (asymptotic) confidence at least $1-\alpha$ that the trends $m_i$ and $m_j$ differ on all the depicted intervals. As $\mathcal{S}^{[i, j]}_{\text{min}}(\alpha) \subseteq \mathcal{S}^{[i, j]}(\alpha)$ for all $i$ and $j$, this guarantee remains to hold true when the sets $\mathcal{S}^{[i, j]}(\alpha)$ are replaced by the sets of minimal intervals $\mathcal{S}_{\text{min}}^{[i, j]}(\alpha)$.
\end{remark}

\section{Clustering}\label{sec:clustering}

Consider a situation in which the null hypothesis $H_0: m_1 = m_2 = \ldots = m_n$ is violated. Even though some of the trend functions are different in this case, part of them may still be the same. Put differently, there may be groups of time series which have the same time trend. Formally speaking, we define a group structure as follows: There exist sets or groups of time series $G_1,\ldots, G_N$ with $N \le n$ and $\{1,\ldots, n\} = \mathbin{\dot{\bigcup}}_{\ell=1}^{N} G_\ell$ such that for each $1 \le \ell \le N$,
\[ m_i = f_\ell \quad \text{for all } i \in G_\ell, \]
where $f_\ell$ are group-specific trend functions. Hence, the time series which belong to the group $G_\ell$ all have the same time trend $f_\ell$. Throughout the section, we suppose that the group-specific trend functions $f_\ell$ have the following properties: 
\begin{enumerate}[label=(C\arabic*),leftmargin=1.2cm]
\setcounter{enumi}{11}
\item \label{C-clustering} For each $\ell$, $f_\ell = f_{\ell,T}$ is a Lipschitz continuous function with $\int_0^1 f_{\ell,T}(w) dw = 0$. In particular, $|f_{\ell,T}(v) - f_{\ell,T}(w)| \le L |v-w|$ for all $v, w \in [0,1]$ and some constant $L < \infty$ that does not depend on $T$. Moreover, for any $\ell \ne \ell^\prime$, the trends $f_{\ell,T}$ and $f_{\ell^\prime,T}$ differ in the following sense: There exists $(u, h) \in \mathcal{G}_T$ with $\interval \subseteq [0,1]$ such that $f_{\ell,T}(w) - f_{\ell^\prime,T}(w) \ge c_T \sqrt{\log T/(Th)}$ for all $w \in \interval$ or $f_{\ell^\prime,T}(w) - f_{\ell,T}(w) \ge c_T \sqrt{\log T/(Th)}$ for all $w \in \interval$, where $0 < c_T \rightarrow \infty$.
\end{enumerate}
In many applications, it is natural to suppose that there is a group structure in the data. In this case, a particular interest lies in estimating the unknown groups from the data at hand. In what follows, we combine our multiscale methods with a clustering algorithm to achieve this. More specifically, we use the multiscale statistics $\max_{(u, h) \in \mathcal{G}_T}\hat{\psi}^0_{ij, T}(u, h)$ calculated for each $i$ and $j$ as distance measures which are fed into a hierarchical clustering algorithm.

To describe the algorithm, we first need to introduce the notion of a dissimilarity measure: Let $S \subseteq \{1,\ldots, n\}$ and $S^\prime \subseteq \{1,\ldots, n\}$ be two sets of time series from our sample. We define a dissimilarity measure between $S$ and $S^\prime$ by setting 
\begin{equation}\label{dissimilarity}
\widehat{\Delta}(S,S^\prime) = \max_{\substack{i \in S, \\ j \in S^\prime}} \Big(\max_{(u, h) \in \mathcal{G}_T}\hat{\psi}^0_{ij, T}(u, h)\Big). 
\end{equation}
This is commonly called a complete linkage measure of dissimilarity. Alternatively, we may work with an average or a single linkage measure. We now combine the dissimilarity measure $\widehat{\Delta}$ with a hierarchical agglomerative clustering (HAC) algorithm which proceeds as follows: 
\vspace{10pt}

\noindent \textit{Step $0$ (Initialisation):} Let $\widehat{G}_i^{[0]} = \{ i \}$ denote the $i$-th singleton cluster for $1 \le i \le n$ and define $\{\widehat{G}_1^{[0]},\ldots,\widehat{G}_n^{[0]} \}$ to be the initial partition of time series into clusters. 
\vspace{5pt}

\noindent \textit{Step $r$ (Iteration):} Let $\widehat{G}_1^{[r-1]},\ldots,\widehat{G}_{n-(r-1)}^{[r-1]}$ be the $n-(r-1)$ clusters from the previous step. Determine the pair of clusters $\widehat{G}_{\ell}^{[r-1]}$ and $\widehat{G}_{{\ell}^\prime}^{[r-1]}$ for which 

\[ \widehat{\Delta}(\widehat{G}_{\ell}^{[r-1]},\widehat{G}_{{\ell}^\prime}^{[r-1]}) = \min_{1 \le k < k^\prime \le n-(r-1)} \widehat{\Delta}(\widehat{G}_{k}^{[r-1]},\widehat{G}_{k^\prime}^{[r-1]}) \]  
and merge them into a new cluster. 
\vspace{10pt}

\noindent Iterating this procedure for $r = 1,\ldots, n-1$ yields a tree of nested partitions \linebreak $\{\widehat{G}_1^{[r]},\ldots$ $\ldots,\widehat{G}_{n-r}^{[r]}\}$, which can be graphically represented by a dendrogram. Roughly speaking, the HAC algorithm merges the $n$ singleton clusters $\widehat{G}_i^{[0]} = \{ i \}$ step by step until we end up with the cluster $\{1,\ldots, n\}$. In each step of the algorithm, the closest two clusters are merged, where the distance between clusters is measured in terms of the dissimilarity $\widehat{\Delta}$. We refer the reader to Section 14.3.12 in \cite{HastieTibshiraniFriedman2009} for an overview of hierarchical clustering methods.

When the number of groups $N$ is known, we estimate the group structure $\{G_1,\ldots, G_N\}$ by the $N$-partition $\{\widehat{G}_1^{[n-N]},\ldots,\widehat{G}_{N}^{[n-N]}\}$ produced by the HAC algorithm. When $N$ is unknown, we estimate it by the $\widehat{N}$-partition $\{\widehat{G}_1^{[n-\widehat{N}]},\ldots,\widehat{G}_{\widehat{N}}^{[n-\widehat{N}]}\}$, where $\widehat{N}$ is an estimator of $N$. The latter is defined as 
\[ \widehat{N} = \min \Big\{ r = 1,2,\ldots \Big| \max_{1 \le \ell \le r} \widehat{\Delta} \big( \widehat{G}_\ell^{[n-r]} \big) \le q_{n,T}(\alpha) \Big\}, \]
where we write $\widehat{\Delta}(S) = \widehat{\Delta}(S,S)$ for short and $q_{n,T}(\alpha)$ is the $(1-\alpha)$-quantile of $\Phi_{n,T}$ defined in Section \ref{subsec:test:test}.

The following proposition summarises the theoretical properties of the estimators $\widehat{N}$ and $\{ \widehat{G}_1,\ldots,\widehat{G}_{\widehat{N}} \}$, where we use the shorthand $\widehat{G}_\ell = \widehat{G}_\ell^{[n-\widehat{N}]}$ for $1 \le \ell \le \widehat{N}$. 
\begin{prop}\label{prop:clustering:1}
Let the conditions of Theorem \ref{theo:stat:global} and \ref{C-clustering} be satisfied. Then 
\[ \pr \Big( \big\{ \widehat{G}_1,\ldots,\widehat{G}_{\widehat{N}} \big\} = \{ G_1,\ldots, G_N \} \Big) \ge (1-\alpha) + o(1) \]
and 
\[ \pr \big( \widehat{N} = N \big) \ge (1-\alpha) + o(1). \]
\end{prop}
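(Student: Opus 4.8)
\textit{Proof proposal.}

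The plan is to isolate a single ``good event'' $A$ on which the algorithm behaves deterministically as desired, and then bound $\pr(A)$ from below using the size and power results already in hand. Write $D_{ij}:=\max_{(u,h)\in\grid}\widehat{\psi}^0_{ij,T}(u,h)=\widehat{\Delta}(\{i\},\{j\})$ for the pairwise multiscale statistic and set
\[ A = \Big\{ D_{ij}\le q_{n,T}(\alpha)\text{ whenever }i,j\text{ lie in the same group}\Big\}\cap\Big\{ D_{ij}>q_{n,T}(\alpha)\text{ whenever }i,j\text{ lie in different groups}\Big\}. \]
I would first note that $q_{n,T}(\alpha)$ is a deterministic number (the $(1-\alpha)$-quantile of the Gaussian statistic $\Phi_{n,T}$), so all comparisons on $A$ are purely deterministic, and then argue that $A$ forces both events in the proposition to occur. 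It therefore suffices to show $\pr(A)\ge 1-\alpha+o(1)$.

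For the probability bound I would treat the two defining events of $A$ separately. If $i,j$ belong to the same group then $m_i=m_j$ on all of $[0,1]$, so $(u,h,i,j)\in\mathcal{M}_0$ for every $(u,h)\in\grid$; hence the event that some same-group $D_{ij}$ exceeds $q_{n,T}(\alpha)$ is contained in $\{\exists\,(u,h,i,j)\in\mathcal{M}_0:\widehat{\psi}^0_{ij,T}(u,h)>q_{n,T}(\alpha)\}$, whose probability is at most $\alpha+o(1)$ by Proposition~\ref{prop:test:fwer}. For a cross-group pair $i\in G_\ell$, $j\in G_{\ell'}$ with $\ell\ne\ell'$, condition \ref{C-clustering} supplies exactly the separation hypothesis of Proposition~\ref{prop:test:power}, and the argument underlying that proposition gives $\pr(D_{ij}\le q_{n,T}(\alpha))=o(1)$ for that fixed pair. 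As $n$ is fixed there are finitely many pairs, so a union bound yields $\pr(A)\ge 1-\alpha+o(1)$.

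It then remains to show that $A$ forces recovery of $\{G_1,\dots,G_N\}$, which is a deterministic statement. The key observation is that complete linkage orders dissimilarities by group membership: for clusters $S,S'\subseteq G_\ell$ we have $\widehat{\Delta}(S,S')\le q_{n,T}(\alpha)$, whereas if $S,S'$ lie in distinct groups then every constituent pair is cross-group and $\widehat{\Delta}(S,S')>q_{n,T}(\alpha)$. I would run an induction over the HAC steps showing that every intermediate partition is \emph{pure} (each cluster contained in a single group): while more than $N$ clusters remain, some group is split into at least two clusters by pigeonhole, so the minimal dissimilarity is attained by a within-group merge (value $\le q_{n,T}(\alpha)$, strictly below every cross-group value), and purity is preserved. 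Counting clusters against groups forces the level-$(n-N)$ partition to be exactly $\{G_1,\dots,G_N\}$. For $\widehat{N}$ I would exploit the nestedness of HAC partitions: at $r=N$ each $\widehat{G}_\ell^{[n-N]}$ is a full group, so every within-cluster pair is same-group and $\max_\ell\widehat{\Delta}(\widehat{G}_\ell^{[n-N]})\le q_{n,T}(\alpha)$; for $r<N$ the $r$-cluster partition is a coarsening of $\{G_1,\dots,G_N\}$, so some cluster contains a cross-group pair and has diameter $>q_{n,T}(\alpha)$, making the stopping criterion fail. Hence $\widehat{N}=N$ and $\{\widehat{G}_1,\dots,\widehat{G}_{\widehat{N}}\}=\{G_1,\dots,G_N\}$ on $A$, and both displayed probabilities are at least $\pr(A)$.

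The two probabilistic inputs are essentially off the shelf, so the step needing the most care is the deterministic combinatorial argument: one must check that purity is genuinely preserved at \emph{every} merge and that the stopping rule fires at precisely $r=N$. The one subtlety on the probabilistic side is that Proposition~\ref{prop:test:power} is phrased for the global statistic $\widehat{\Psi}_{n,T}$, while I need the per-pair bound $\pr(D_{ij}\le q_{n,T}(\alpha))=o(1)$; since $\widehat{\Psi}_{n,T}\ge D_{ij}$, this pairwise statement is exactly what the proof of that proposition establishes for the separating pair, so no new estimate is required beyond isolating it.
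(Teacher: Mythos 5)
Your proposal is correct and follows essentially the same route as the paper: the paper's proof introduces the same good event (there called $B_{n,T}$, with $\widehat{\Psi}_{ij,T}$ playing the role of your $D_{ij}$), bounds its probability by combining the size/FWER result with the power argument applied pairwise to the separated cross-group pairs, and then runs the same deterministic induction showing that complete-linkage HAC preserves purity of clusters and that the stopping rule fires exactly at $r=N$. The only cosmetic differences are that the paper invokes Proposition \ref{prop:test} rather than Proposition \ref{prop:test:fwer} for the same-group bound (both reduce to Theorem \ref{theo:stat:global}) and uses a direct pigeonhole argument instead of nestedness for the case $r<N$; your explicit remark that the power proof yields the per-pair bound $\pr(D_{ij}\le q_{n,T}(\alpha))=o(1)$ is precisely the point the paper glosses with ``the same arguments as for Proposition \ref{prop:test:power}''.
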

This result allows us to make statistical confidence statements about the estimated clusters $\{ \widehat{G}_1,\ldots,\widehat{G}_{\widehat{N}} \}$ and their number $\widehat{N}$. In particular, we can claim with asymptotic confidence at least $1 - \alpha$ that the estimated group structure is identical to the true group structure. 
The proof of Proposition \ref{prop:clustering:1} can be found in the Appendix.

Our multiscale methods do not only allow us to compute estimators of the unknown groups $G_1,\ldots, G_N$ and their number $N$. They also provide information on the locations where two group-specific trend functions $f_\ell$ and $f_{\ell^\prime}$ differ from each other. To turn this claim into a mathematically precise statement, we need to introduce some notation. First of all, note that the indexing of the estimators $\widehat{G}_1,\ldots,\widehat{G}_{\widehat{N}}$ is completely arbitrary. We could, for example, change the indexing according to the rule $\ell \mapsto \widehat{N} - \ell + 1$. In what follows, we suppose that the estimated groups are indexed such that $P( \widehat{G}_\ell = G_\ell \text{ for all } \ell ) \ge (1-\alpha) + o(1)$. Proposition \ref{prop:clustering:1} implies that this is possible without loss of generality. Keeping this convention in mind, we define the sets 
\[ \mathcal{A}_{n,T}^{[\ell,\ell^\prime]}(\alpha)= \Big\{ (u, h) \in \mathcal{G}_T: \widehat{\psi}^0_{ij, T}(u, h) > q_{n,T}(\alpha) \text{ for some } i \in \widehat{G}_\ell, j \in \widehat{G}_{\ell^\prime} \Big\} \]
and  
\[ \mathcal{S}^{[\ell,\ell^\prime]}_{n, T}(\alpha) = \big\{ \interval = [u-h, u+h]: (u, h) \in \mathcal{A}_{n,T}^{[\ell,\ell^\prime]}(\alpha) \big\} \]
for $1 \le \ell < \ell^\prime \le \widehat{N}$. An interval $\interval$ is contained in $\mathcal{S}^{[\ell,\ell^\prime]}_{n, T}(\alpha)$ if our multiscale test indicates a significant difference between the trends $m_i$ and $m_j$ on the interval $\interval$ for some $i \in \widehat{G}_\ell$ and $j \in \widehat{G}_{\ell^\prime}$. Put differently, $\interval \in \mathcal{S}^{[\ell,\ell^\prime]}_{n, T}(\alpha)$ if the test suggests a significant difference between the trends of the $\ell$-th and the $\ell^\prime$-th group on the interval $\interval$. We further let
\[ E_{n,T}^{[\ell,\ell^\prime]}(\alpha) = \Big\{ \forall \interval \in \mathcal{S}^{[\ell,\ell^\prime]}_{n, T}(\alpha): f_\ell(v) \ne f_{\ell^\prime}(v) \text{ for some } v \in \interval = [u-h, u+h] \Big\} \]
be the event that the group-specific time trends $f_\ell$ and $f_{\ell^\prime}$ differ on all intervals $\interval \in \mathcal{S}^{[\ell,\ell^\prime]}_{n, T}(\alpha)$. With this notation at hand, we can make the following formal statement whose proof is given in the Appendix.  
\begin{prop}\label{prop:clustering:2}
Under the conditions of Proposition \ref{prop:clustering:1}, the event 
\[ E_{n,T}(\alpha) = \Big\{ \bigcap_{1 \le \ell < \ell^\prime \le \widehat{N}} E_{n,T}^{[\ell,\ell^\prime]}(\alpha) \Big\} \cap \Big\{ \widehat{N} = N \text{ and } \widehat{G}_\ell = G_\ell \text{ for all } \ell \Big\} \]
asymptotically occurs with probability at least $1-\alpha$, that is, 
\[ \pr \big( E_{n,T}(\alpha) \big) \ge (1 - \alpha) + o(1). \]
\end{prop}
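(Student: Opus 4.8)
The plan is to show that the single high-probability event on which the multiscale test makes no false rejections simultaneously delivers (i) exact recovery of the group structure and (ii) the ``where they differ'' guarantee, so that $E_{n,T}(\alpha)$ contains this event up to an asymptotically negligible power correction.

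First I would introduce the no-false-rejection event
\[ \mathcal{B} = \big\{ \widehat{\psi}^0_{ij, T}(u, h) \le q_{n,T}(\alpha) \text{ for all } (u,h,i,j) \in \mathcal{M}_0 \big\}, \]
which by Proposition \ref{prop:test:fwer} satisfies $\pr(\mathcal{B}) = 1 - \textnormal{FWER}(\alpha) \ge 1 - \alpha + o(1)$. I would also reuse the argument behind Proposition \ref{prop:clustering:1}: on a further event $\mathcal{P}$ with $\pr(\mathcal{P}) = 1 - o(1)$, namely the event that the test has power at each of the well-separated intervals guaranteed by \ref{C-clustering} (the separation there is exactly the local alternative of Proposition \ref{prop:test:power}), the within-group dissimilarities satisfy $\widehat{\Delta}(G_\ell) \le q_{n,T}(\alpha)$ while the between-group dissimilarities satisfy $\widehat{\Delta}(G_\ell, G_{\ell^\prime}) > q_{n,T}(\alpha)$ for all $\ell \ne \ell^\prime$. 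This clean separation forces the complete-linkage HAC recursion to recover the true partition exactly and the estimator $\widehat{N}$ to equal $N$; that is, $\{ \widehat{N} = N \text{ and } \widehat{G}_\ell = G_\ell \text{ for all } \ell \} \supseteq \mathcal{B} \cap \mathcal{P}$. A union bound then gives $\pr(\mathcal{B} \cap \mathcal{P}) \ge \pr(\mathcal{B}) - \pr(\mathcal{P}^c) \ge 1 - \alpha + o(1)$.

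The key step is to argue that on $\mathcal{B} \cap \mathcal{P}$ each event $E_{n,T}^{[\ell,\ell^\prime]}(\alpha)$ also holds. Fix indices $1 \le \ell < \ell^\prime \le \widehat{N}$ and any interval $\interval \in \mathcal{S}^{[\ell,\ell^\prime]}_{n,T}(\alpha)$. By the definitions of $\mathcal{S}^{[\ell,\ell^\prime]}_{n,T}(\alpha)$ and $\mathcal{A}_{n,T}^{[\ell,\ell^\prime]}(\alpha)$, there exist $i \in \widehat{G}_\ell$ and $j \in \widehat{G}_{\ell^\prime}$ with $\widehat{\psi}^0_{ij, T}(u, h) > q_{n,T}(\alpha)$, so the local null $H_0^{[i, j]}(u, h)$ is rejected. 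On $\mathcal{B}$ no true local null is rejected, hence $(u,h,i,j) \notin \mathcal{M}_0$, which means $m_i(v) \ne m_j(v)$ for some $v \in \interval$. Since $\mathcal{B} \cap \mathcal{P}$ entails $\widehat{G}_\ell = G_\ell$ and $\widehat{G}_{\ell^\prime} = G_{\ell^\prime}$, we have $i \in G_\ell$ and $j \in G_{\ell^\prime}$, so $m_i = f_\ell$ and $m_j = f_{\ell^\prime}$ and therefore $f_\ell(v) \ne f_{\ell^\prime}(v)$ for some $v \in \interval$. As $\interval$ was arbitrary, $E_{n,T}^{[\ell,\ell^\prime]}(\alpha)$ holds; intersecting over all $\ell < \ell^\prime$ and combining with exact clustering yields $E_{n,T}(\alpha) \supseteq \mathcal{B} \cap \mathcal{P}$.

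Combining the two set inclusions gives $\pr(E_{n,T}(\alpha)) \ge \pr(\mathcal{B} \cap \mathcal{P}) \ge 1 - \alpha + o(1)$, as claimed. The main obstacle is not this final reinterpretation of rejections as genuine trend differences, which is essentially immediate once $\mathcal{B}$ is available, but rather the verification that $\mathcal{B} \cap \mathcal{P}$ forces exact cluster recovery through the complete-linkage recursion. That part, inherited from Proposition \ref{prop:clustering:1}, is where the real work sits: one must track the ordering of the dissimilarities $\widehat{\Delta}$ through all $n-1$ merge steps and confirm that the algorithm completes every within-group merge before any between-group merge can occur, so that the partition read off at level $\widehat{N} = N$ coincides with $\{G_1, \ldots, G_N\}$.
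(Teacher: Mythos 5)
Your proof is correct, and its skeleton matches the paper's: a single ``good'' event combining size control (probability at least $1-\alpha+o(1)$) with power against the between-group separations guaranteed by \ref{C-clustering} (probability $1-o(1)$), on which both exact cluster recovery (inherited from the proof of Proposition \ref{prop:clustering:1}) and the interval-wise guarantee hold. Where you genuinely deviate is in how the size-control half is packaged and how the key implication is drawn. The paper works with the event $D_{n,T}=\{\widehat{\Phi}_{n,T}\le q_{n,T}(\alpha)\}\cap\{\min_{\ell<\ell^\prime}\min_{i\in G_\ell,\, j\in G_{\ell^\prime}}\widehat{\Psi}_{ij,T}>q_{n,T}(\alpha)\}$, where $\widehat{\Phi}_{n,T}$ is the trend-centred statistic built from $\widehat{\psi}_{ij,T}(u,h)-\widehat{\psi}_{ij,T}^{\textnormal{trend}}(u,h)$, and it obtains the ``where they differ'' guarantee from the observation that whenever the centred statistic at $(u,h,i,j)$ is at most $q_{n,T}(\alpha)$ while the uncentred one exceeds it, the trend term $\widehat{\psi}_{ij,T}^{\textnormal{trend}}(u,h)$ must be nonzero, so $m_i$ and $m_j$ differ somewhere on $\interval$. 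You instead use the no-false-rejection event $\mathcal{B}$ over $\mathcal{M}_0$, import its probability bound directly from the FWER result (Proposition \ref{prop:test:fwer}), and get the guarantee by contraposition: on $\mathcal{B}$ every rejection is of a false local null. The two routes are logically equivalent --- the paper's centred event is contained in $\mathcal{B}$, since centred and uncentred statistics agree under the local null, exactly as exploited in the proof of Proposition \ref{prop:test:fwer} --- but yours is arguably cleaner because it reuses the FWER statement as a black box rather than re-deriving the size control at the level of statistics.

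One slip to fix in the write-up: you credit the within-group bound $\widehat{\Delta}(G_\ell)\le q_{n,T}(\alpha)$ to the power event $\mathcal{P}$. Power only delivers the between-group separation; the within-group bound is precisely what $\mathcal{B}$ provides, because for $i,j$ in the same group every local null $H_0^{[i,j]}(u,h)$ is true, so on $\mathcal{B}$ none of them is rejected and hence $\widehat{\Psi}_{ij,T}\le q_{n,T}(\alpha)$. Since your final inclusions are stated for the intersection $\mathcal{B}\cap\mathcal{P}$, the argument goes through verbatim once this attribution is corrected.
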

According to Proposition \ref{prop:clustering:2}, the sets $\mathcal{S}^{[\ell,\ell^\prime]}_{n, T}(\alpha)$ allow us to locate, with a pre-specified confidence, time intervals where the group-specific trend functions $f_\ell$ and $f_{\ell^\prime}$ differ from each other. In particular, we can claim with asymptotic confidence at least $1 - \alpha$ that the trend functions $f_\ell$ and $f_{\ell^\prime}$ differ on all intervals $\interval \in \mathcal{S}^{[\ell,\ell^\prime]}_{n, T}(\alpha)$.\footnote{This statement remains to hold true when the sets of intervals $\mathcal{S}^{[\ell,\ell^\prime]}_{n, T}(\alpha)$ are replaced by the corresponding sets of minimal intervals.}

\section{Simulations}\label{sec:sim}

In this section, we investigate our testing and clustering methods by means of a simulation study. The simulation design is as follows: We generate data from the model $Y_{it} = m_i(\frac{t}{T}) + \beta_i X_{it} +  \alpha_i  + \varepsilon_{it}$, where the number of time series $i$ is set to $n = 15$ and we consider different time series lengths $T$. For simplicity, we assume that the fixed effect term $\alpha_i$ is equal to $0$ in all time series and we only include a single regressor $X_{it}$ in the model, setting the unknown parameter $\beta_i$ to $1$ for all $i$. 
For each $i$, the errors $\varepsilon_{it}$ follow the AR(1) model $\varepsilon_{it} = a \varepsilon_{i, t-1} + \eta_{it}$, where $a = 0.25$ and the innovations $\eta_{it}$ are i.i.d.\ normally distributed with mean $0$ and variance $0.25$. Similarly, for each $i$, the covariates $X_{it}$ follow an AR($1$) process of the form $X_{it} = a_x X_{i, t-1} + \zeta_{it}$, where $a_x = 0.5$ and the innovations $\zeta_{it}$ are i.i.d.\ normally distributed with mean $0$ and variance $1$. We assume that the covariates $X_{it}$ and the error terms $\varepsilon_{js}$ are independent from each other for all $1 \leq i,j \leq n$ and $1 \leq t, s \leq T$. 
To generate data under the null $H_0: m_1 = \ldots = m_n$, we let $m_i = 0$ for all $i$ without loss of generality. To produce data under the alternative, we define $m_1(u) = b \, (u - 0.5) $ with $b \in \{ 0.75, 1, 1.25 \}$ and set $m_i = 0$ for all $i \ne 1$. Hence, all trend functions are the same except for $m_1$ which is an increasing linear function. Note that the normalisation constraint $\int_0^1 m_1(u) du = 0$ is directly satisfied in this case. For each simulation exercise, we simulate $5000$ data samples.

\begin{table}[t]
\footnotesize{
\begin{center}
\caption{Size of the multiscale test for different sample sizes $T$ and nominal sizes $\alpha$.}
\label{tab:size}
\renewcommand{\arraystretch}{1.2}
%
\begin{tabular}{cccc}
  \hline
  & \multicolumn{3}{c}{nominal size $\alpha$} \\
 $T$ & 0.01 & 0.05 & 0.1 \\
 \hline
100 & 0.009 & 0.045 & 0.087 \\ 
  250 & 0.013 & 0.063 & 0.117 \\ 
  500 & 0.013 & 0.057 & 0.112 \\ 
   \hline
\end{tabular}

\end{center}}
\footnotesize{
\begin{center}
\caption{Power of the multiscale test for different sample sizes $T$ and nominal sizes $\alpha$. Each panel corresponds to a different slope parameter $b$.}\label{tab:power}
\begin{subtable}[b]{0.32\textwidth}
\centering
\caption{$b = 0.75$}\label{tab:power_075}
\renewcommand{\arraystretch}{1.2}
%
\begin{tabular}{cccc}
  \hline
  & \multicolumn{3}{c}{nominal size $\alpha$} \\
 $T$ & 0.01 & 0.05 & 0.1 \\
 \hline
100 & 0.033 & 0.122 & 0.199 \\ 
  250 & 0.209 & 0.434 & 0.549 \\ 
  500 & 0.741 & 0.891 & 0.947 \\ 
   \hline
\end{tabular}

\end{subtable}
\begin{subtable}[b]{0.32\textwidth}
\centering
\caption{$b = 1.00$}\label{tab:power_100}
\renewcommand{\arraystretch}{1.2}
%
\begin{tabular}{cccc}
  \hline
  & \multicolumn{3}{c}{nominal size $\alpha$} \\
 $T$ & 0.01 & 0.05 & 0.1 \\
 \hline
100 & 0.105 & 0.270 & 0.376 \\ 
  250 & 0.635 & 0.840 & 0.901 \\ 
  500 & 0.994 & 0.999 & 0.999 \\ 
   \hline
\end{tabular}

\end{subtable}
\begin{subtable}[b]{0.32\textwidth}
\centering
\caption{$b = 1.25$}\label{tab:power_125}
\renewcommand{\arraystretch}{1.2}
%
\begin{tabular}{cccc}
  \hline
  & \multicolumn{3}{c}{nominal size $\alpha$} \\
 $T$ & 0.01 & 0.05 & 0.1 \\
 \hline
100 & 0.275 & 0.512 & 0.628 \\ 
  250 & 0.933 & 0.986 & 0.993 \\ 
  500 & 1.000 & 1.000 & 1.000 \\ 
   \hline
\end{tabular}

\end{subtable}
\end{center}}
\vspace{-0.4cm}
\end{table}

Our multiscale test is implemented as follows: The estimators $\widehat{\beta}_i$ and $\widehat{\alpha}_i$ of the unknown parameters $\beta_i$ and $\alpha_i$ are computed as described in Section~\ref{subsec:test:prep}. Since the errors $\varepsilon_{it}$ follow an AR($1$) process, we estimate the long-run error variance $\sigma_i^2$ by the difference-based estimator proposed in \cite{KhismatullinaVogt2020}, setting the tuning parameters $q$ and $r$ to $25$ and $10$, respectively. In order to construct our test statistics, we use an Epanechnikov kernel and the grid $\mathcal{G}_T = U_T \times H_T$ with 
\begin{align*}
U_T & = \big\{ u \in [0,1]: u = \textstyle{\frac{5t}{T}} \text{ for some } t \in \naturals \big\} \\
H_T & = \big\{ h \in \big[ \textstyle{\frac{\log T}{T}}, \textstyle{\frac{1}{4}} \big]:  h = \textstyle{\frac{5 t -3}{T}} \text{ for some } t \in \naturals \big\}. 
\end{align*}
We thus consider intervals $\mathcal{I}_{u, h} = [u-h, u+h]$ which contain $5, 15, 25, \ldots$ data points. The critical value $q_{n,T}(\alpha)$ of our test is computed by Monte Carlo methods as described in Section \ref{subsec:test:impl}, where we set $L=5000$.

The simulation results for our multiscale test can be found in Table \ref{tab:size}, which reports its actual size under $H_0$, and in Table \ref{tab:power}, which reports its power against different alternatives. Both the actual size and the power are computed as the number of simulations in which the test rejects the global null $H_0$ divided by the total number of simulations.
Inspecting Table \ref{tab:size}, one can see that the actual size is fairly close to the nominal target $\alpha$ in all the considered scenarios. Hence, the test has approximately the correct size. 
Inspecting Table \ref{tab:power}, one can further see that the test has reasonable power properties. For the smallest slope $b=0.75$ and the smallest time series length $T=100$, the power is only moderate, reflecting the fact that the alternative with $b=0.75$ is not very far away from the null. However, as we increase the slope $b$ and the sample size $T$, the power increases quickly. Already for the slope value $b = 1.00$, we reach a power of $0.99$ for $T = 500$ and for all nominal sizes $\alpha$.

We next investigate the finite sample performance of the clustering algorithm from Section \ref{sec:clustering}. To do so, we consider a very simple scenario: we generate data from the model $Y_{it} = m_i(\frac{t}{T}) + \varepsilon_{it}$, that is, we assume that there are no fixed effects and no covariates. The error terms $\varepsilon_{it}$ are specified as above. Moreover, as before, we set the number of time series to $n = 15$ and we consider different time series lengths $T$. We partition the $n = 15$ time series into $N=3$ groups, each containing $5$ time series. Specifically, we set $G_1 = \{1,\ldots, 5\}$, $G_2 = \{6,\ldots, 10\}$ and $G_3 =  \{11,\ldots, 15\}$, and we assume that $m_i = f_l$ for all $i \in G_l$ and all $l = 1, 2, 3$. The group-specific trend functions $f_1$, $f_2$ and $f_3$ are defined as $f_1(u) = 0$, $f_2(u) = 1 \cdot (u - 0.5)$ and $f_3(u) =  (- 1) \cdot (u - 0.5)$. In order to estimate the groups $G_1$, $G_2$, $G_3$ and their number $N = 3$, we use the same implementation as before followed by the clustering procedure from Section \ref{sec:clustering}.

\addtocounter{table}{-1} 
\begin{table}[t]
\footnotesize{
\begin{center}
\caption{Clustering results for different sample sizes $T$ and nominal sizes $\alpha$.}\label{tab:clustering}
\begin{subtable}[b]{0.48\textwidth}
\centering
\caption{Empirical probabilities that \\ $\widehat{N} = N$}\label{tab:clustering:1}
\renewcommand{\arraystretch}{1.2}
%
\begin{tabular}{cccc}
  \hline
  & \multicolumn{3}{c}{nominal size $\alpha$} \\
 $T$ & 0.01 & 0.05 & 0.1 \\
 \hline
100 & 0.055 & 0.188 & 0.298 \\ 
  250 & 0.713 & 0.922 & 0.939 \\ 
  500 & 0.994 & 0.979 & 0.956 \\ 
   \hline
\end{tabular}

\end{subtable}
\begin{subtable}[b]{0.48\textwidth}
\centering
\caption{\centering Empirical probabilities that $\{ \widehat{G}_1,\ldots,\widehat{G}_{\widehat{N}}\} = \{ G_1,G_2,G_3\}$}\label{tab:clustering:2}
\renewcommand{\arraystretch}{1.2}
%
\begin{tabular}{cccc}
  \hline
  & \multicolumn{3}{c}{nominal size $\alpha$} \\
 $T$ & 0.01 & 0.05 & 0.1 \\
 \hline
100 & 0.009 & 0.045 & 0.077 \\ 
  250 & 0.640 & 0.825 & 0.845 \\ 
  500 & 0.992 & 0.978 & 0.956 \\ 
   \hline
\end{tabular}

\end{subtable}
\end{center}}
\vspace{-0.5cm}
\end{table}

The simulation results are reported in Table \ref{tab:clustering}. The entries in Table \ref{tab:clustering:1} are computed as the number of simulations for which $\widehat{N} = N$ divided by the total number of simulations. They thus specify the empirical probabilities with which the estimate $\widehat{N}$ is equal to the true number of groups $N = 3$. Analogously, the entries of Table \ref{tab:clustering:2} give the empirical probabilities with which the estimated group structure $\{ \widehat{G}_1,\ldots,\widehat{G}_{\widehat{N}}\}$ equals the true one $\{G_1,G_2,G_3\}$. The results in Table \ref{tab:clustering} nicely illustrate the theoretical properties of our clustering algorithm. According to Proposition \ref{prop:clustering:1}, the probability that $\widehat{N} = N$ and $\{ \widehat{G}_1,\ldots,\widehat{G}_{\widehat{N}}\} = \{G_1,G_2,G_3\}$ should be at least $(1-\alpha)$ asymptotically. For the largest sample size $T = 500$, the empirical probabilities reported in Table \ref{tab:clustering} can indeed be seen to exceed the value $(1-\alpha)$ as predicted by Proposition \ref{prop:clustering:1}. For the smaller sample sizes $T=100$ and $T=250$, in contrast, the empirical probabilities are substantially smaller than $(1-\alpha)$. This reflects the asymptotic nature of Proposition \ref{prop:clustering:1} and is not very surprising. It simply mirrors the fact that for the smaller sample sizes $T=100$ and $T=250$, the effective noise level in the simulated data is quite high.

\begin{figure}[t!]
\centering
\includegraphics[width=\textwidth]{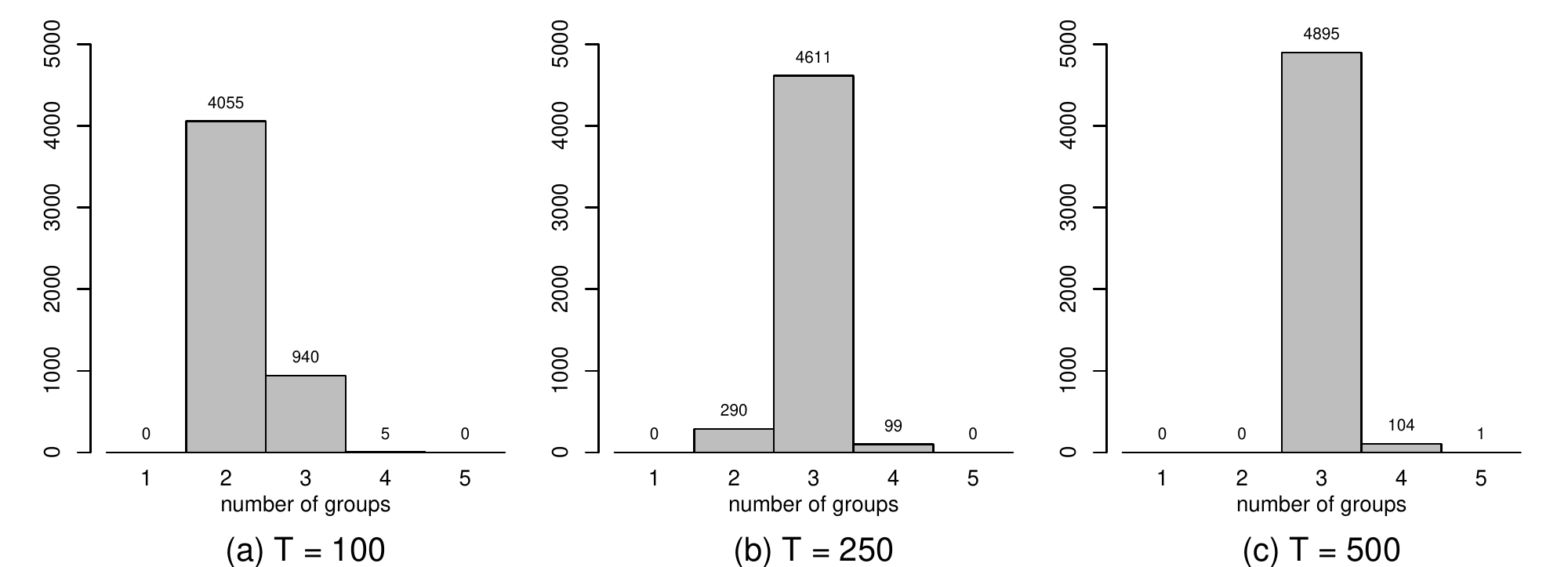}
\caption{Estimated number of groups $\widehat{N}$ for nominal size $\alpha = 0.05$. Each panel corresponds to a different sample size $T$.}\label{fig:clustering:1}
\vspace{0.25cm}

\includegraphics[width=\textwidth]{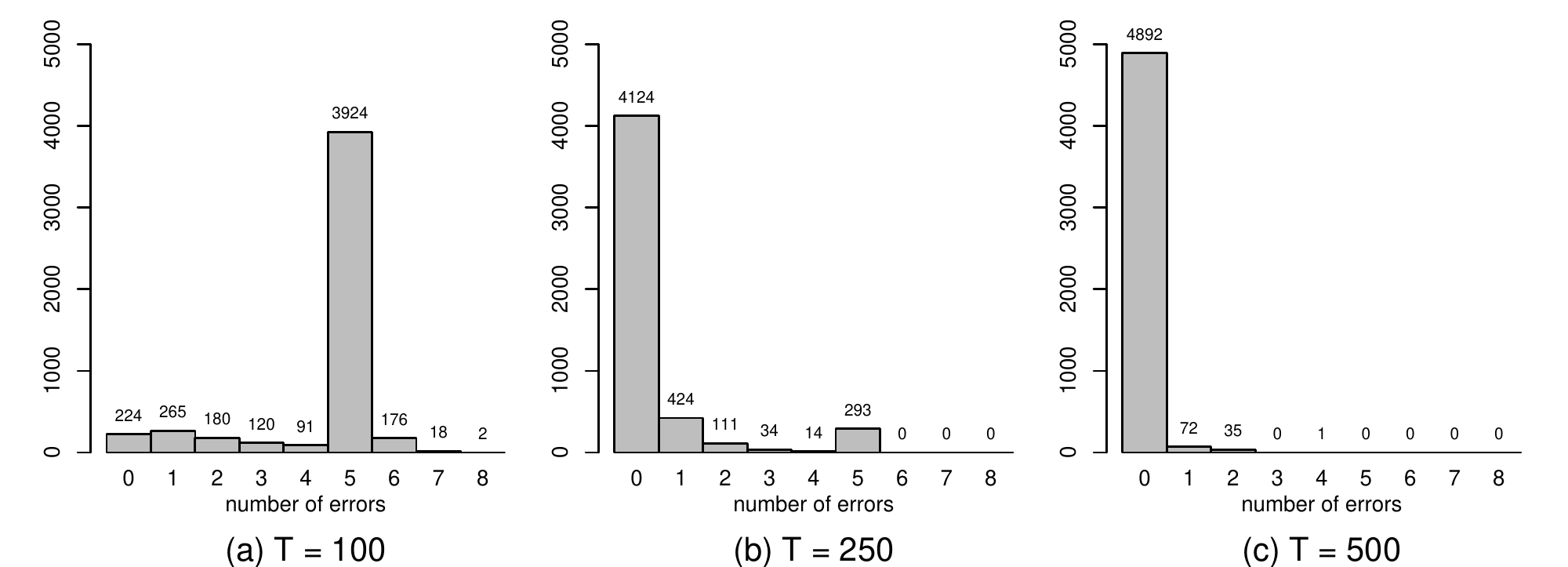}
\caption{Number of classification errors for nominal size $\alpha = 0.05$. Each panel corresponds to a different sample size $T$.}\label{fig:clustering:2}
\end{figure}

Figures \ref{fig:clustering:1} and \ref{fig:clustering:2} give more insight into what happens for $T=100$ and $T=250$. Figure \ref{fig:clustering:1} shows histograms of the $5000$ simulated values of $\widehat{N}$, while Figure \ref{fig:clustering:2} depicts histograms of the number of classification errors produced by our algorithm. By the number of classification errors, we simply mean the number of incorrectly classified time series, which is formally calculated as 
\[ \min_{\pi \in S_{\hat{N}}} \big\{ |G_1 \setminus \widehat{G}_{\pi(1)}| +|G_2 \setminus \widehat{G}_{\pi(2)}| + |G_3 \setminus \widehat{G}_{\pi(3)}| \big\} \]
with $S_{\widehat{N}}$ being the set of all permutations of $\{1, 2, \ldots, \widehat{N}\}$. The histogram of Figure \ref{fig:clustering:1} for $T=100$ clearly shows that our method underestimates the number of groups ($\widehat{N} = 2$ in $4055$ cases out of $5000$). In particular, it fails to detect the difference between two out of three groups in a large number of simulations. This is reflected in the corresponding histogram of Figure \ref{fig:clustering:2} which shows that there are exactly $5$ classification errors in $3924$ of the $5000$ simulation runs. In most of these cases, the estimated group structure $\{ \widehat{G}_1, \widehat{G}_{2}\}$ coincides with either $\{ G_1 \cup G_2,G_3\}$,  $\{ G_1, G_2\cup G_3\}$ or $ \{ G_1 \cup G_3,G_2\}$. In summary, we can conclude that the small empirical probabilities for $T=100$ in Table \ref{tab:clustering} are due to the algorithm underestimating the number of groups. Inspecting the histograms for $T=250$, one can see that the performance of the estimators $\widehat{N}$ and $\{ \widehat{G}_1,\ldots, \widehat{G}_{\widehat{N}} \}$ improves significantly, even though the corresponding empirical probabilities in Table \ref{tab:clustering} are still somewhat below the target $(1-\alpha)$.

\section{Applications}\label{sec:app}
\subsection{Analysis of GDP growth}\label{subsec:app:gdp}

In what follows, we revisit an application example from \cite{Zhang2012}. The aim is to test the hypothesis of a common trend in the GDP growth time series of several OECD countries. Since we do not have access to the original dataset of \cite{Zhang2012} and we do not know the exact data specifications used there, we work with data from the following common sources: Refinitiv Datastream, the OECD.Stat database, Federal Reserve Economics Data (FRED) and the Barro-Lee Educational Attainment dataset \citep*{Barro2013}. We consider a data specification that is as close as possible to the one in \cite{Zhang2012} with one important distinction: In the original study, the authors examined $16$ OECD countries (not specifying which ones) over the time period from the fourth quarter of 1975 up to and including the third quarter of 2010, whereas we consider only $11$ countries (Australia, Austria, Canada, Finland, France, Germany, Japan, Norway, Switzerland, UK and USA) over the same time span. The reason is that we have access to data of good quality only for these 11 countries. 
In the following list, we specify the data for our analysis.\footnote{All data were accessed and downloaded on 7 December 2021.}

\begin{itemize}[leftmargin=0.5cm]
\item \textbf{Gross domestic product ($\boldsymbol{GDP}$):} We use freely available data on \textit{Gross Domestic Product -- Expenditure Approach} from the OECD.Stat 
database (\texttt{https://stats.\linebreak oecd.org/Index.aspx}). To be as close as possible to the specification of the data in \cite{Zhang2012}, we use seasonally adjusted quarterly data on GDP expressed in millions of 2015 US dollars.\footnote{Since the publication of \cite{Zhang2012}, the OECD reference year has changed from 2005 to 2015. We have decided to analyse the latest version of the data in order to be able to make more accurate and up-to-date conclusions.} 

\item \textbf{Capital ($\boldsymbol{K}$):} We use data on \textit{Gross Fixed Capital Formation} from the OECD.Stat 
database. The data are at a quarterly frequency, seasonally adjusted, and expressed in millions of $2015$ US dollars. In contrast to \cite{Zhang2012}, who use data on \textit{Capital Stock at Constant National Prices}, we choose to work with gross fixed capital formation due to data availability. It is worth noting that since accurate data on capital stock is notoriously difficult to collect, the use of gross fixed capital formation as a measure of capital is standard in the literature; see e.g.\ \cite{Sharma1994}, \cite{Lee2002} and \cite{Lee2005}.

\item \textbf{Labour ($\boldsymbol{L}$):} We collect data on the \textit{Number of Employed People} from various sources. For most of the countries (Austria, Australia, Canada, Germany, Japan, UK and USA), we download the OECD data on \textit{Employed Population: Aged 15 and Over} retrieved from FRED (\texttt{https://fred.stlouisfed.org/}). 
The data for France and Switzerland were downloaded from Refinitiv Datastream. For all of the aforementioned countries, the observations are at a quarterly frequency and seasonally adjusted. The data for Finland and Norway were also obtained via Refinitiv Datastream, however, the only quarterly time series that are long enough for our purposes are not seasonally adjusted. Hence, for these two countries, we perform the seasonal adjustment ourselves. We in particular use the default method of the function \verb|seas| from the \verb|R| package \verb|seasonal| \citep*{Sax2018} which is an interface to X-13-ARIMA-SEATS, the seasonal adjustment software used by the US Census Bureau. 
For all of the countries, the data are given in thousands of persons.

\item \textbf{Human capital ($\boldsymbol{H}$):} We use \textit{Educational Attainment for Population Aged 25 and Over} collected from \texttt{http://www.barrolee.com} as a measure of human capital. Since the only available data is five-year census data, we follow \cite{Zhang2012} and use linear interpolation between the observations and constant extrapolation on the boundaries (second and third quarters of 2010) to obtain the quarterly time series.
\end{itemize}

For each of the $n=11$ countries in our sample, we thus observe a quarterly time series $\mathcal{T}_i = \{(Y_{it}, \X_{it}): 1 \le t \le T \}$ of length $T = 140$, where $Y_{it} = \Delta \ln GDP_{it} := \ln GDP_{it} - \ln GDP_{i(t-1)}$ and $\X_{it} = (\Delta \ln L_{it}, \Delta \ln K_{it}, \Delta \ln H_{it})^\top$ with $\Delta \ln L_{it} := \ln L_{it} - \ln L_{i(t-1)}$, $\Delta \ln K_{it} := \ln K_{it} - \ln K_{i(t-1)}$ and $\Delta \ln H_{it} := \ln H_{it} - \ln H_{i(t-1)}$. Without loss of generality, we let $\Delta \ln GDP_{i1} = \Delta \ln L_{i1} = \Delta \ln K_{i1} = \Delta \ln H_{i1} = 0$. Each time series $\mathcal{T}_i$ is assumed to follow the model $Y_{it} = m_i(t/T) + \bfbeta^\top_i \X_{it} + \alpha_i + \varepsilon_{it}$, or equivalently,  
\begin{align}
\Delta \ln GDP_{it}
 & = m_i \Big( \frac{t}{T} \Big) + \beta_{i, 1} \Delta \ln L_{it} + \beta_{i, 2} \Delta \ln K_{it} + \beta_{i, 3} \Delta \ln H_{it} + \alpha_i + \varepsilon_{it} \label{eq:model:app}
\end{align}
for $1 \le t \le T$, where $\bfbeta_i = (\beta_{i, 1}, \beta_{i, 2}, \beta_{i, 3})^\top$ is a vector of unknown parameters, $m_i$ is a country-specific unknown nonparametric time trend and $\alpha_i$ is a country-specific fixed effect.

In order to test the null hypothesis $H_0: m_1 = \ldots = m_n$ with $n = 11$ in model \eqref{eq:model:app}, we implement our multiscale test as follows: 
\begin{itemize}[leftmargin=0.5cm]

\item We choose $K$ to be the Epanechnikov kernel and consider the set of location-scale points $\mathcal{G}_T = U_T \times H_T$, where 
\begin{align*}
U_T & = \big\{ u \in [0,1]: u = \textstyle{\frac{8t + 1}{2T}} \text{ for some } t \in \naturals \big\} \\
H_T & = \big\{ h \in \big[ \textstyle{\frac{\log T}{T}}, \textstyle{\frac{1}{4}} \big]:  h = \textstyle{\frac{4t}{T}} \text{ for some } t \in \naturals \big\}. 
\end{align*}
We thus take into account all locations $u$ on an equidistant grid $U_T$ with step length $4/T$ and all scales $h=4/T, 8/T, 12/T,\ldots$ with $\log T /T \le h \le 1/4$. 
The choice of the grid $\grid$ is motivated by the quarterly frequency of the data: each interval $\interval \in \mathcal{G}_T$ spans $8, 16, 24, \ldots$ quarters, i.e., $2, 4, 6, \ldots$ years. The lower bound $\log T / T$ on the scales $h$ in $H_T$ is motivated by Assumption \ref{C-h}, which requires that $\log T /T \ll h_{\min}$ (given that all moments of $\varepsilon_{it}$ exist).


\item To obtain an estimator $\hat{\sigma}_i^2$ of the long-run error variance $\sigma^2_i$ for each $i$, we assume that the error process $\mathcal{E}_i$ follows an AR($p_i$) model and apply the difference-based procedure of \cite{KhismatullinaVogt2020} to the augmented time series $\{\widehat{Y}_{it}: 1\leq t \leq T\}$ with $\widehat{Y}_{it} = Y_{it} - \widehat{\bfbeta}_i^\top \X_{it} - \widehat{\alpha}_{i}$. We set the tuning parameters $q$ and $r$ of the procedure to $20$ and $10$, respectively, and choose the AR order $p_i$ by minimizing the Bayesian Information Criterion (BIC), which yields $p_i = 3$ for Australia, Canada and the UK and $p_i = 1$ for all other countries.\footnote{We also calculated the values of other information criteria such as FPE, AIC and HQ which, in most of the cases, resulted in the same values of $p_i$.} 

\item The critical values $q_{n, T}(\alpha)$ are computed by Monte Carlo methods as described in Section \ref{subsec:test:impl}, where we set $L=5000$.
\end{itemize}
Besides these choices, we construct and implement the multiscale test exactly as described in Section \ref{sec:test}.

The thus implemented multiscale test rejects the global null hypothesis $H_0$ at the usual significance levels $\alpha =0.01,0.05, 0.1$. This result is in line with the findings in \cite{Zhang2012} where the null hypothesis of a common trend is rejected at level $\alpha = 0.1$.
The main advantage of our multiscale test over the method in \cite{Zhang2012} is that it is much more informative. In particular, it provides information about \textit{which} of the $n=11$ countries have different trends and \textit{where} the trends differ. This information is presented graphically in Figures~\ref{fig:Australia:Norway}--\ref{fig:Australia:France}. Each figure corresponds to a specific pair of countries $(i, j)$ and is divided into three panels (a)--(c). 
Panel (a) shows the augmented time series $\{\widehat{Y}_{it}: 1 \le t \le T\}$ and $\{\widehat{Y}_{jt}: 1 \le t \le T\}$ for the two countries $i$ and $j$ that are compared. 
Panel (b) presents smoothed versions of the time series from (a), in particular, it shows local linear kernel estimates of the two trend functions $m_i$ and $m_j$, where the bandwidth is set to $14$ quarters (that is, to $0.1$ in terms of rescaled time) and an Epanechnikov kernel is used. Panel (c) presents the results produced by our test for the significance level $\alpha = 0.05$:
it depicts in grey the set $\mathcal{S}^{[i, j]}(\alpha)$ of all the intervals for which the test rejects the local null $H_0^{[i, j]}(u, h)$. The set of minimal intervals $\mathcal{S}^{[i, j]}_{min}(\alpha) \subseteq \mathcal{S}^{[i, j]}(\alpha)$ is highlighted in black. According to \eqref{eq:CS-v2}, we can make the following simultaneous confidence statement about the intervals plotted in panels (c) of Figures \ref{fig:Australia:Norway}--\ref{fig:Australia:France}: we can claim, with confidence of about $95\%$, that there is a difference between the functions $m_i$ and $m_j$ on each of these intervals.

\begin{sidewaysfigure}[p!]
\begin{minipage}[t]{0.24\textwidth}
\includegraphics[width=\textwidth]{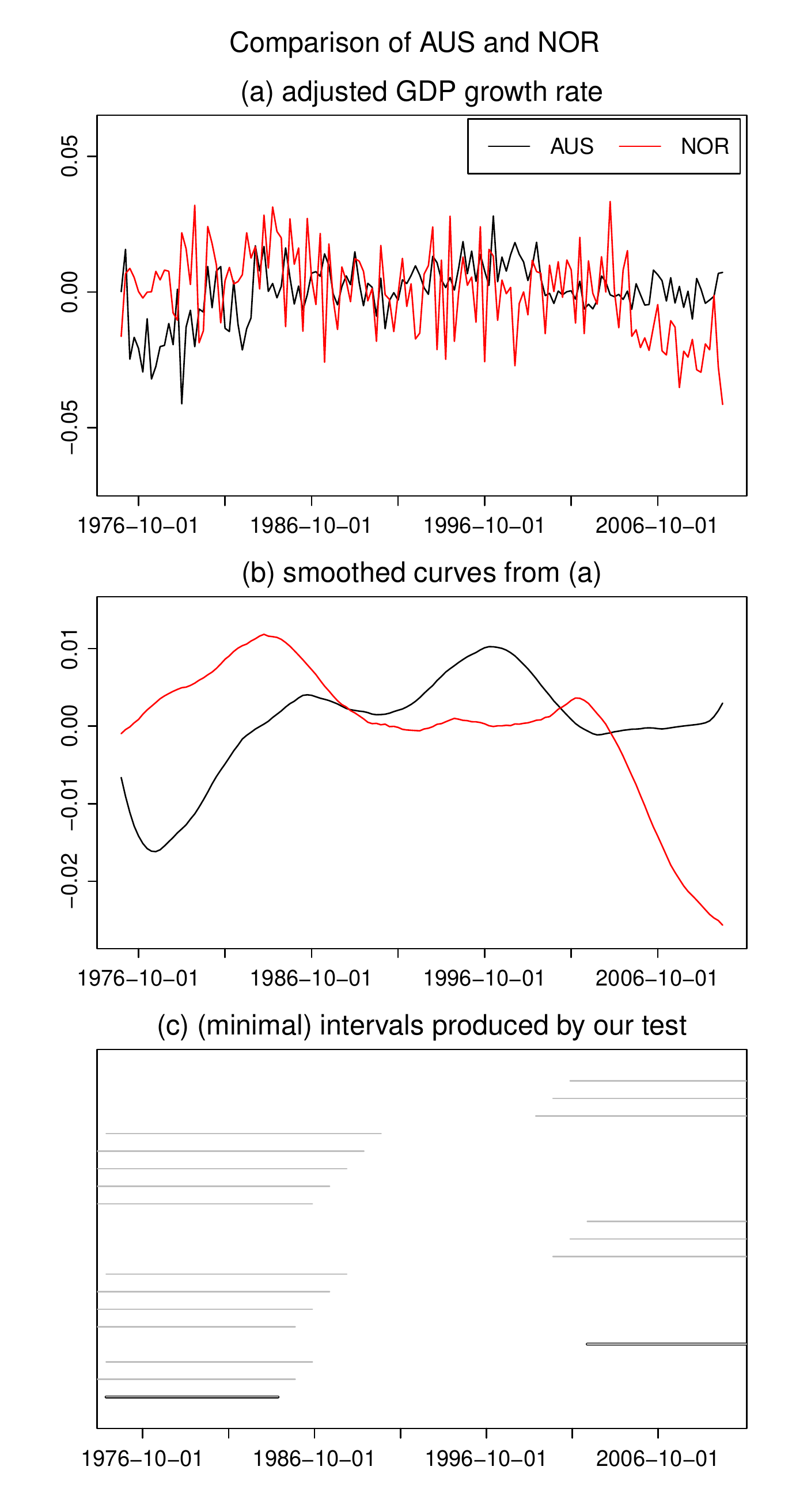}
\caption{Test results for the comparison of Australia and Norway.}\label{fig:Australia:Norway}
\end{minipage}
\hspace{0.1cm}
\begin{minipage}[t]{0.24\textwidth}
\includegraphics[width=\textwidth]{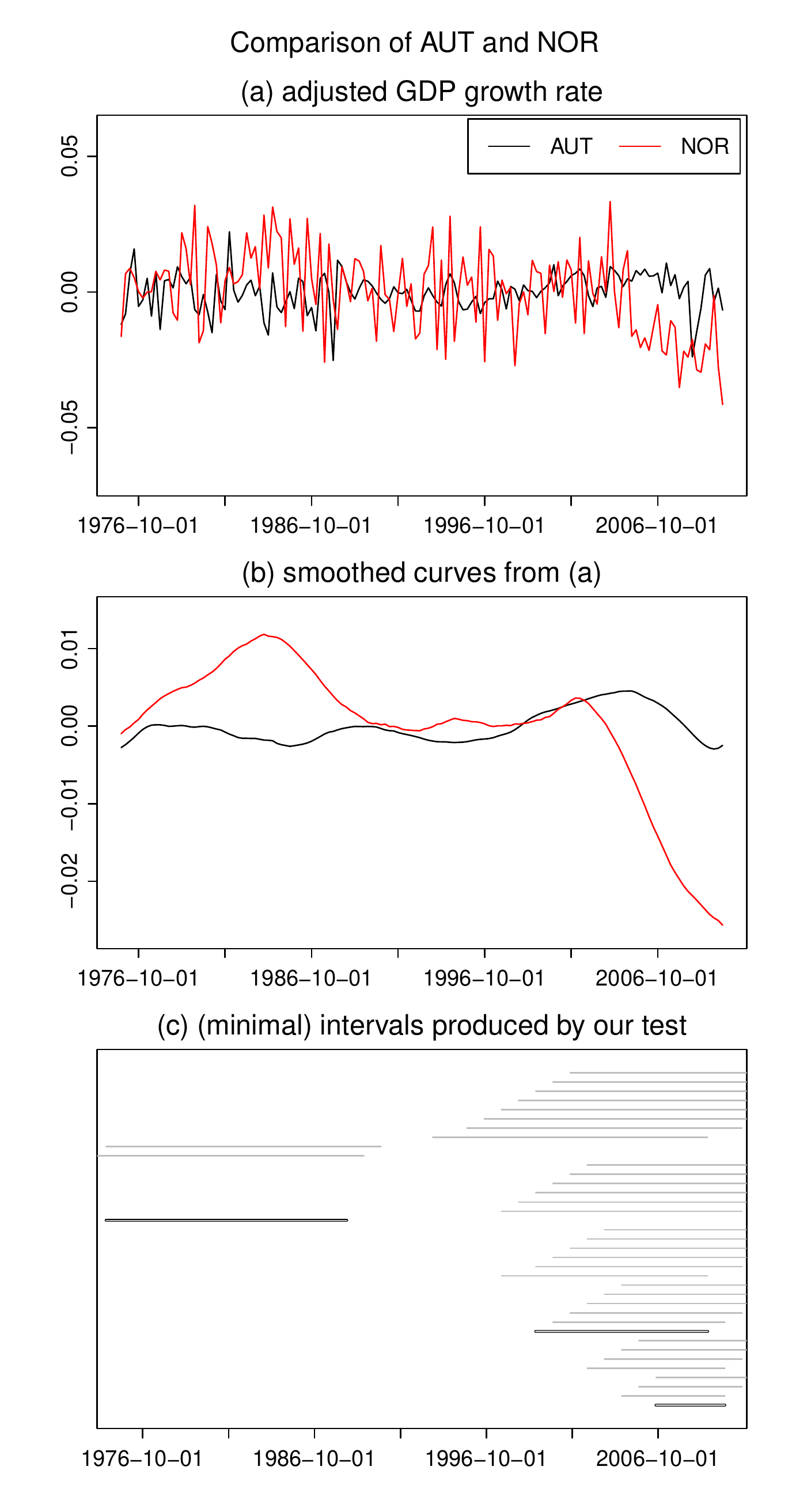}
\caption{Test results for the comparison of Austria and Norway.}\label{fig:Austria:Norway}
\end{minipage}
\hspace{0.1cm}
\begin{minipage}[t]{0.24\textwidth}
\includegraphics[width=\textwidth]{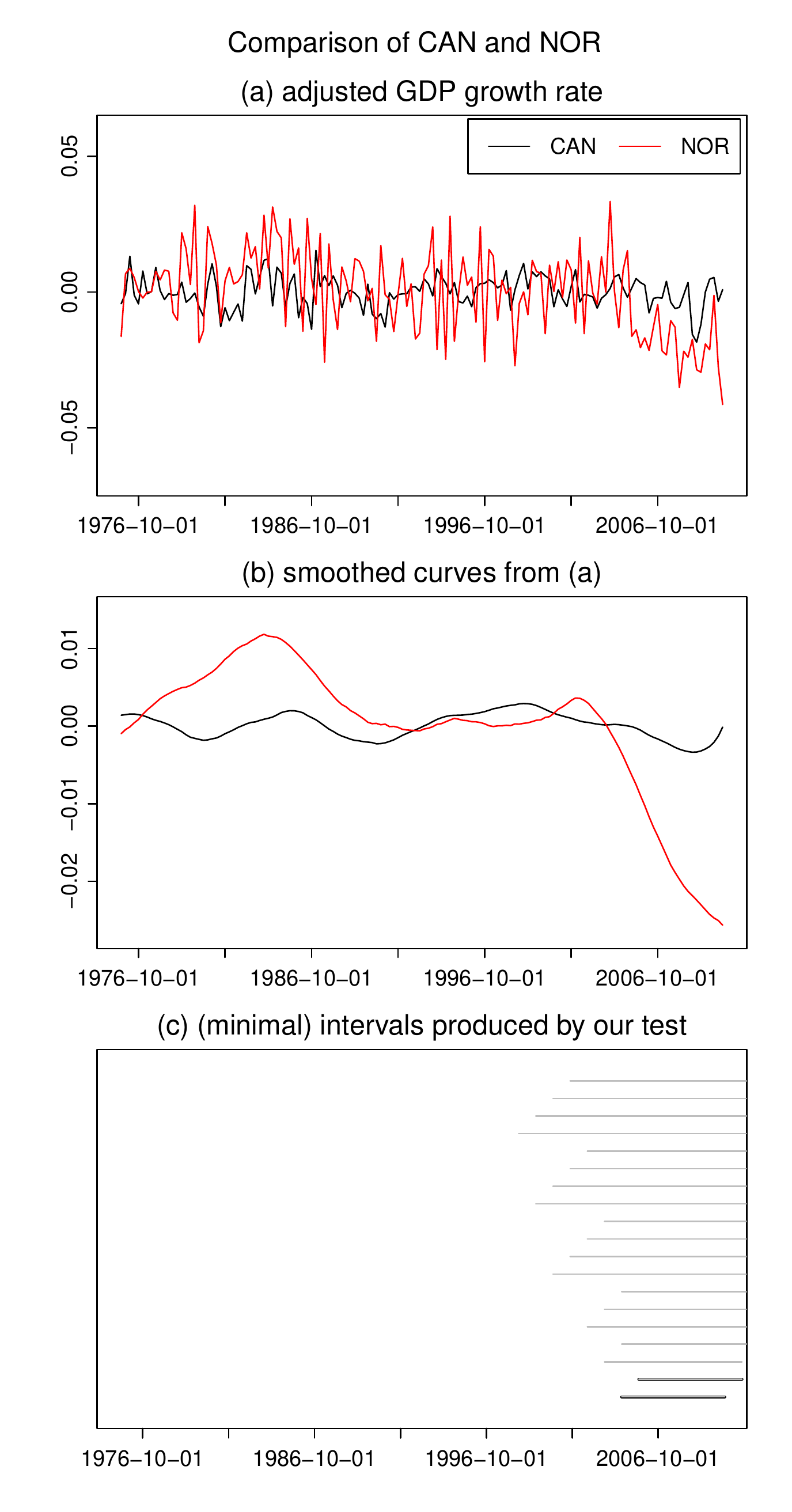}
\caption{Test results for the comparison of Canada and Norway.}\label{fig:Canada:Norway}
\end{minipage}
\hspace{0.1cm}
\begin{minipage}[t]{0.24\textwidth}
\includegraphics[width=\textwidth]{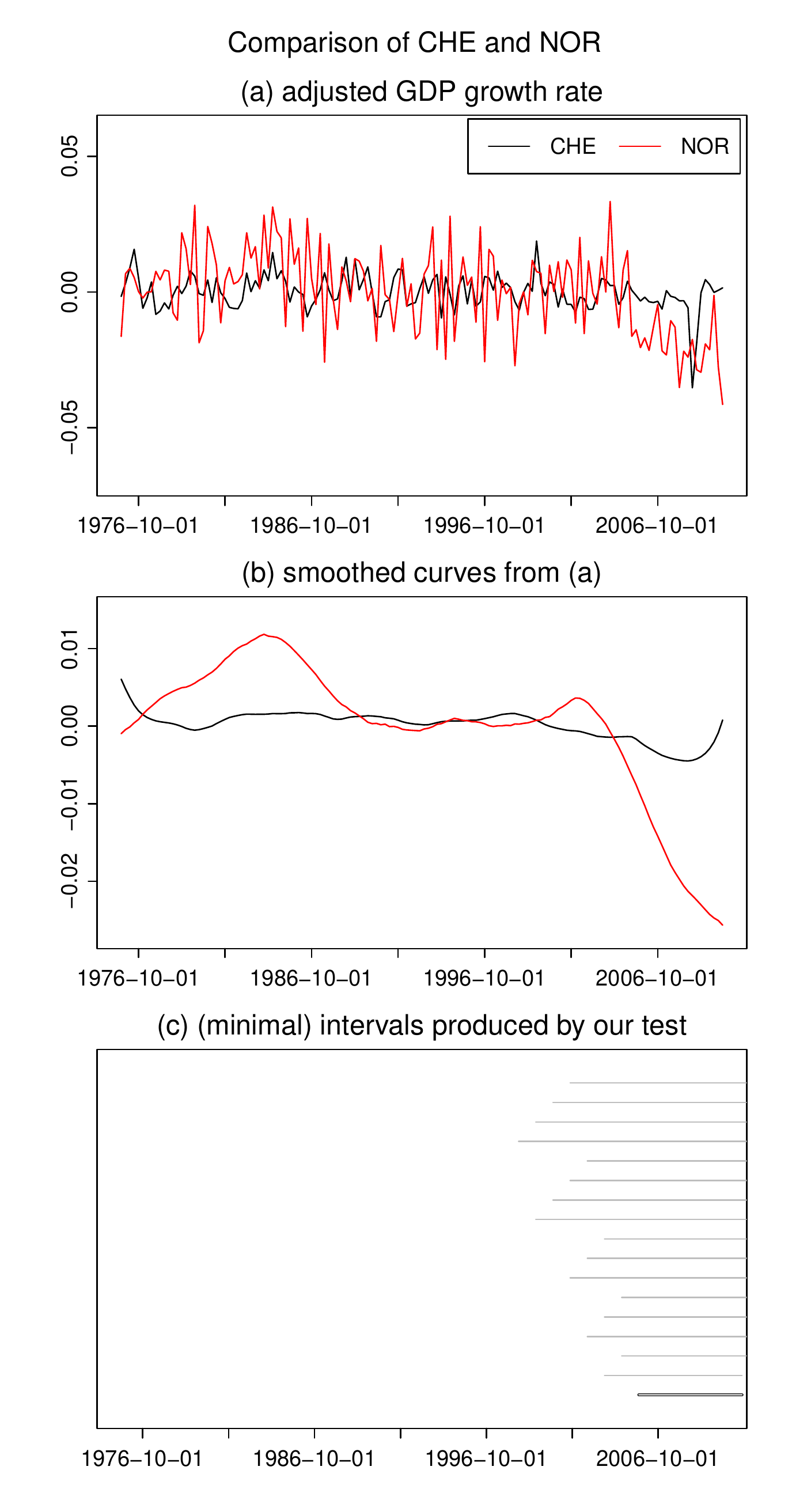}
\caption{Test results for the comparison of Switzerland and Norway.}\label{fig:Switzerland:Norway}
\end{minipage}
\caption*{Note: In each figure, panel (a) shows the two augmented time series, panel (b) presents smoothed versions of the augmented time series, and panel (c) depicts the set of intervals $\mathcal{S}^{[i, j]}(\alpha)$ in grey and the subset of minimal intervals $\mathcal{S}^{[i, j]}_{min}(\alpha)$ in black. }
\end{sidewaysfigure}

\begin{sidewaysfigure}[p!]
\begin{minipage}[t]{0.24\textwidth}
\includegraphics[width=\textwidth]{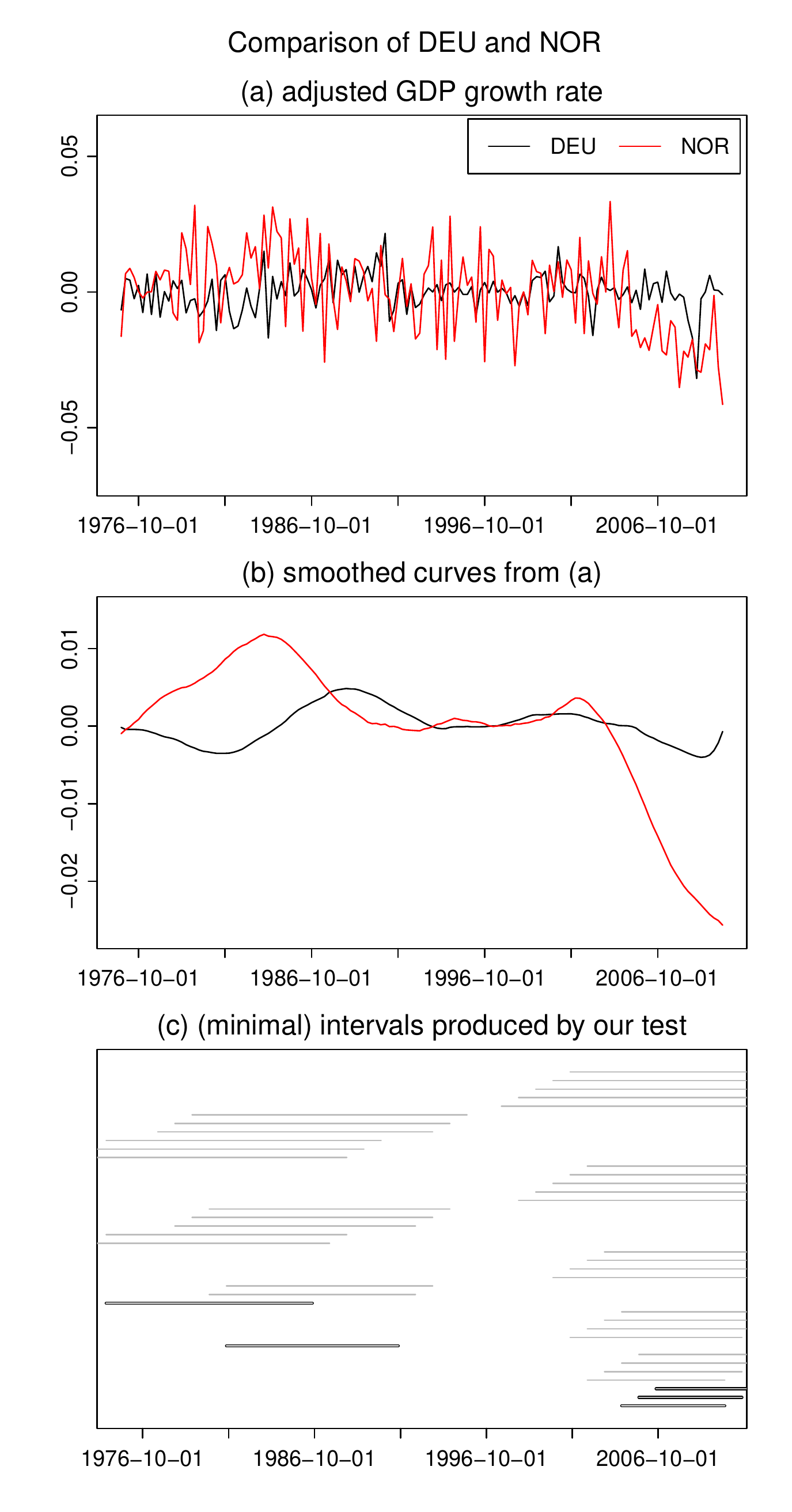}
\caption{Test results for the comparison of Germany and Norway.}\label{fig:Germany:Norway}
\end{minipage}
\hspace{0.1cm}
\begin{minipage}[t]{0.24\textwidth}
\includegraphics[width=\textwidth]{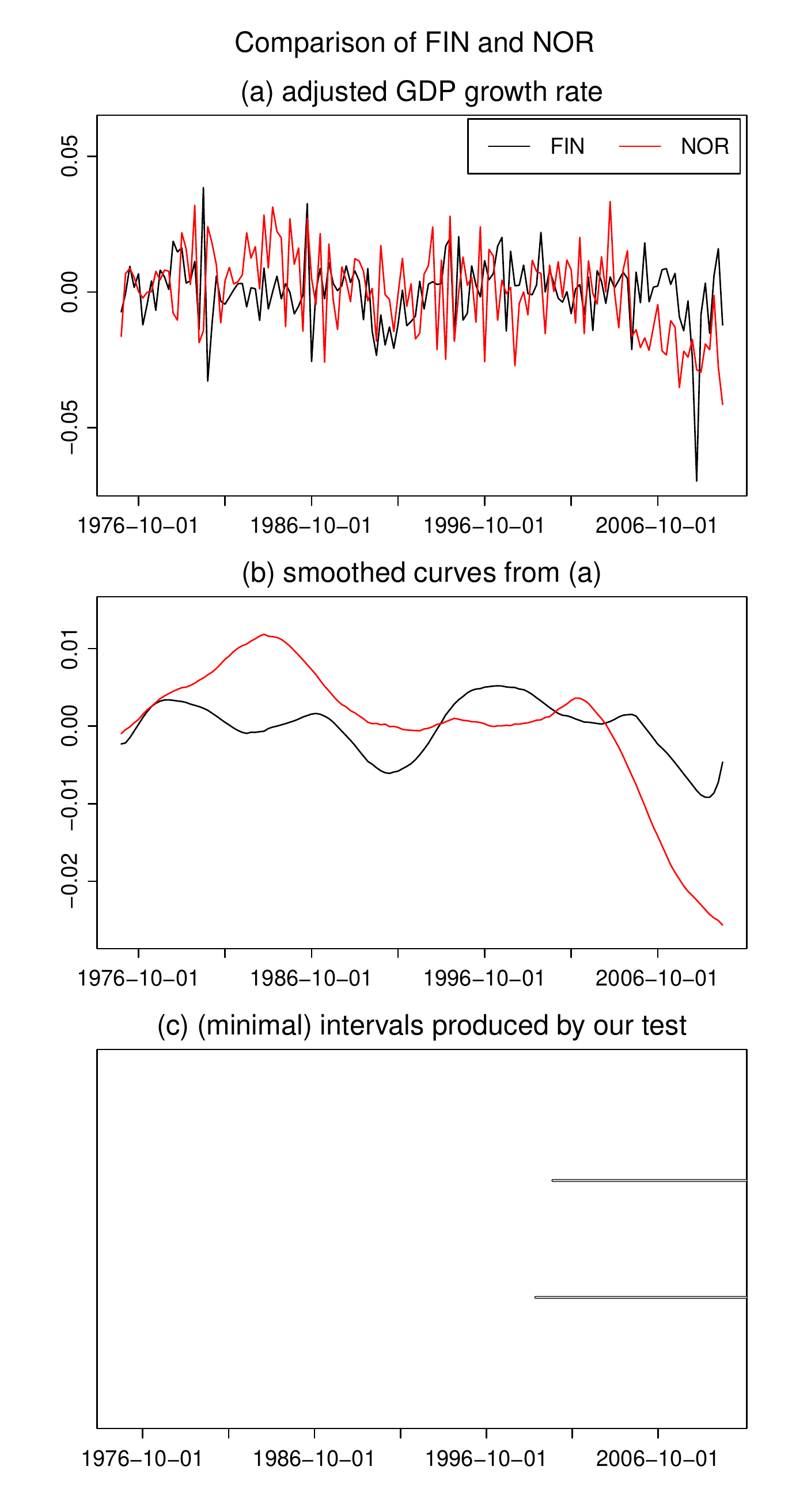}
\caption{Test results for the comparison of Finland and Norway.}\label{fig:Finland:Norway}
\end{minipage}
\hspace{0.1cm}
\begin{minipage}[t]{0.24\textwidth}
\includegraphics[width=\textwidth]{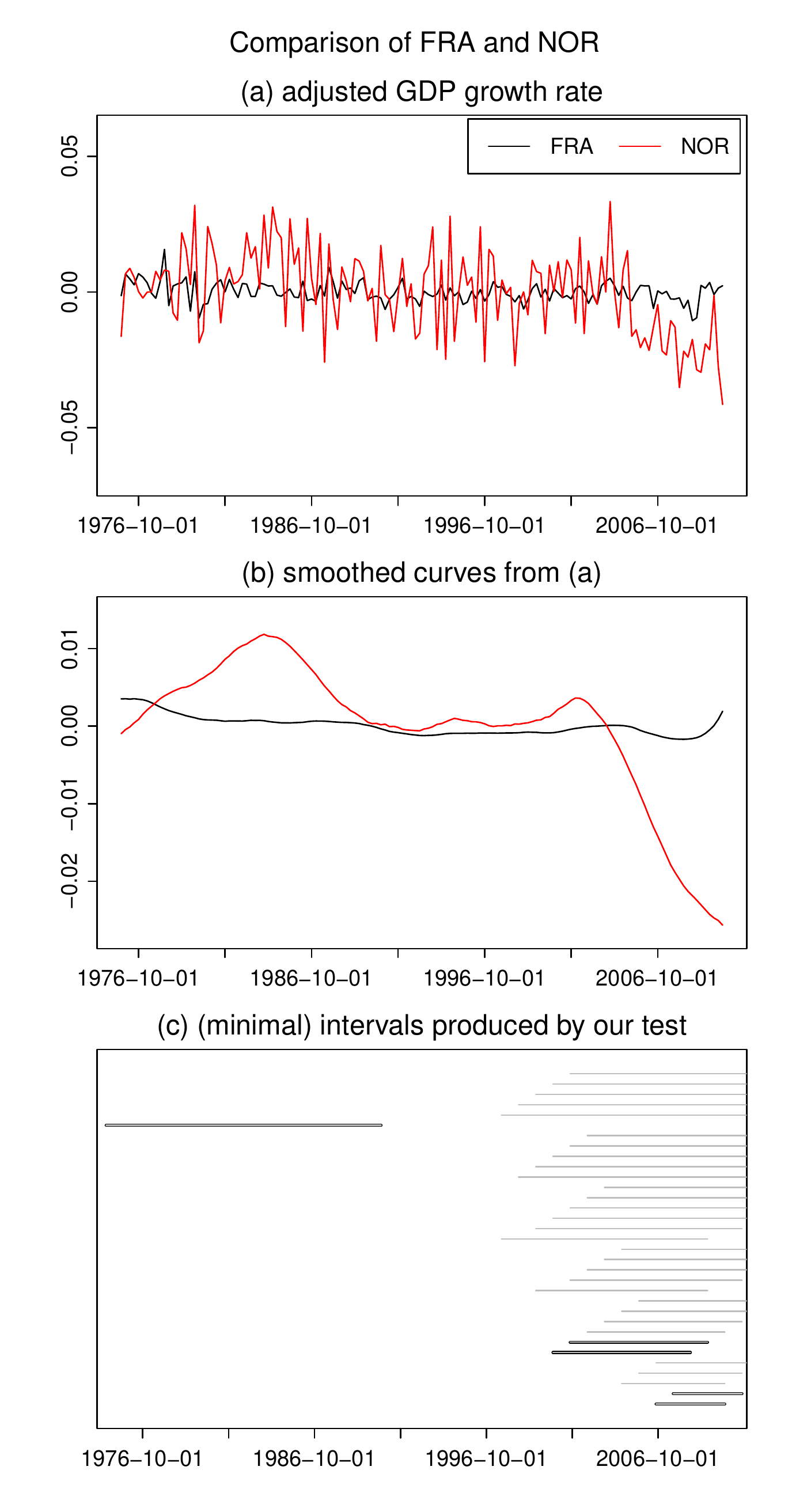}
\caption{Test results for the comparison of France and Norway.}\label{fig:France:Norway}
\end{minipage}
\hspace{0.1cm}
\begin{minipage}[t]{0.24\textwidth}
\includegraphics[width=\textwidth]{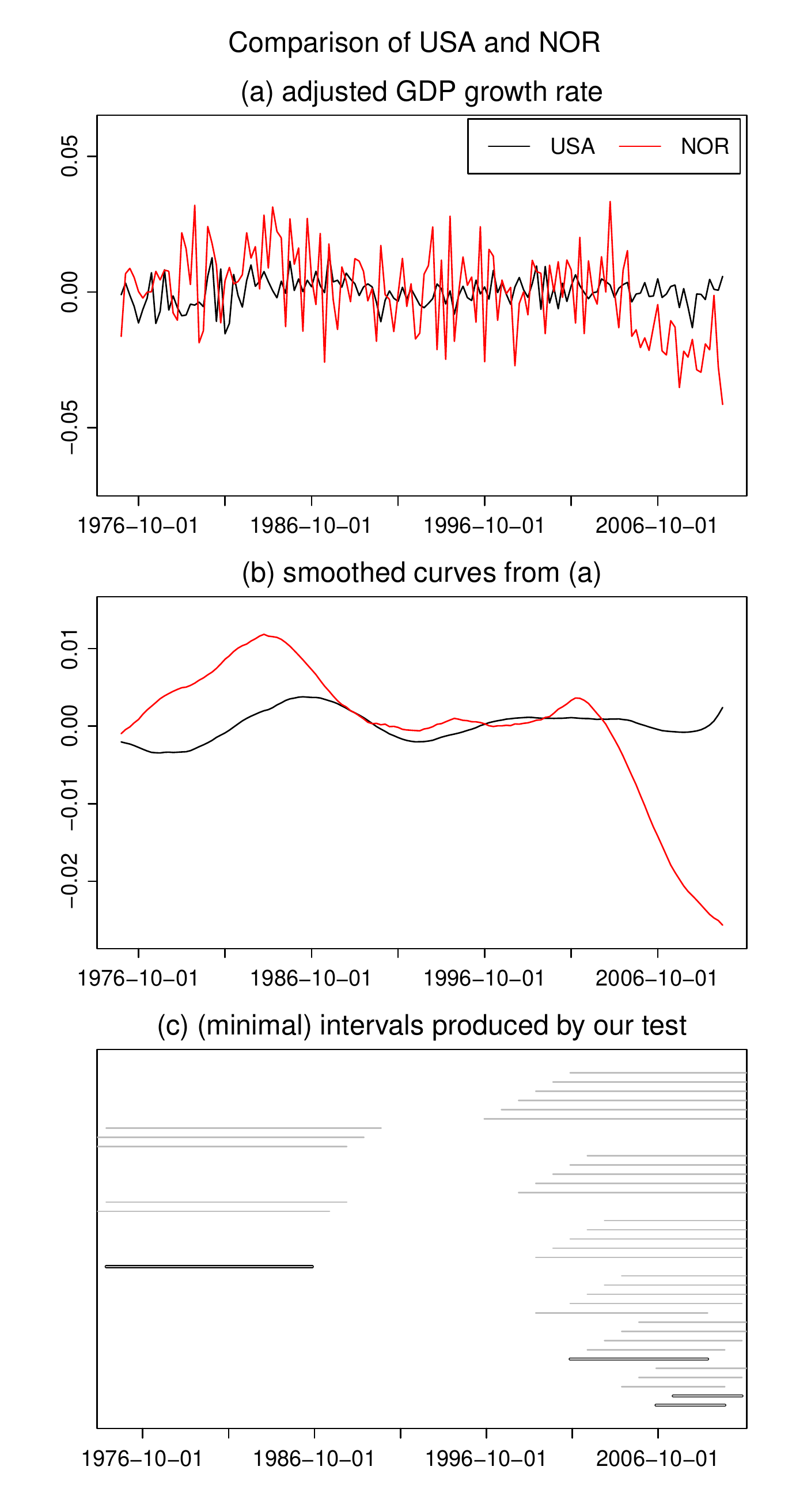}
\caption{Test results for the comparison of the USA and Norway.}\label{fig:USA:Norway}
\end{minipage}
\caption*{Note: In each figure, panel (a) shows the two augmented time series, panel (b) presents smoothed versions of the augmented time series, and panel (c) depicts the set of intervals $\mathcal{S}^{[i, j]}(\alpha)$ in grey and the subset of minimal intervals $\mathcal{S}^{[i, j]}_{min}(\alpha)$ in black.}
\end{sidewaysfigure}

\begin{sidewaysfigure}[p!]
\centering
\begin{minipage}[t]{0.24\textwidth}
\includegraphics[width=\textwidth]{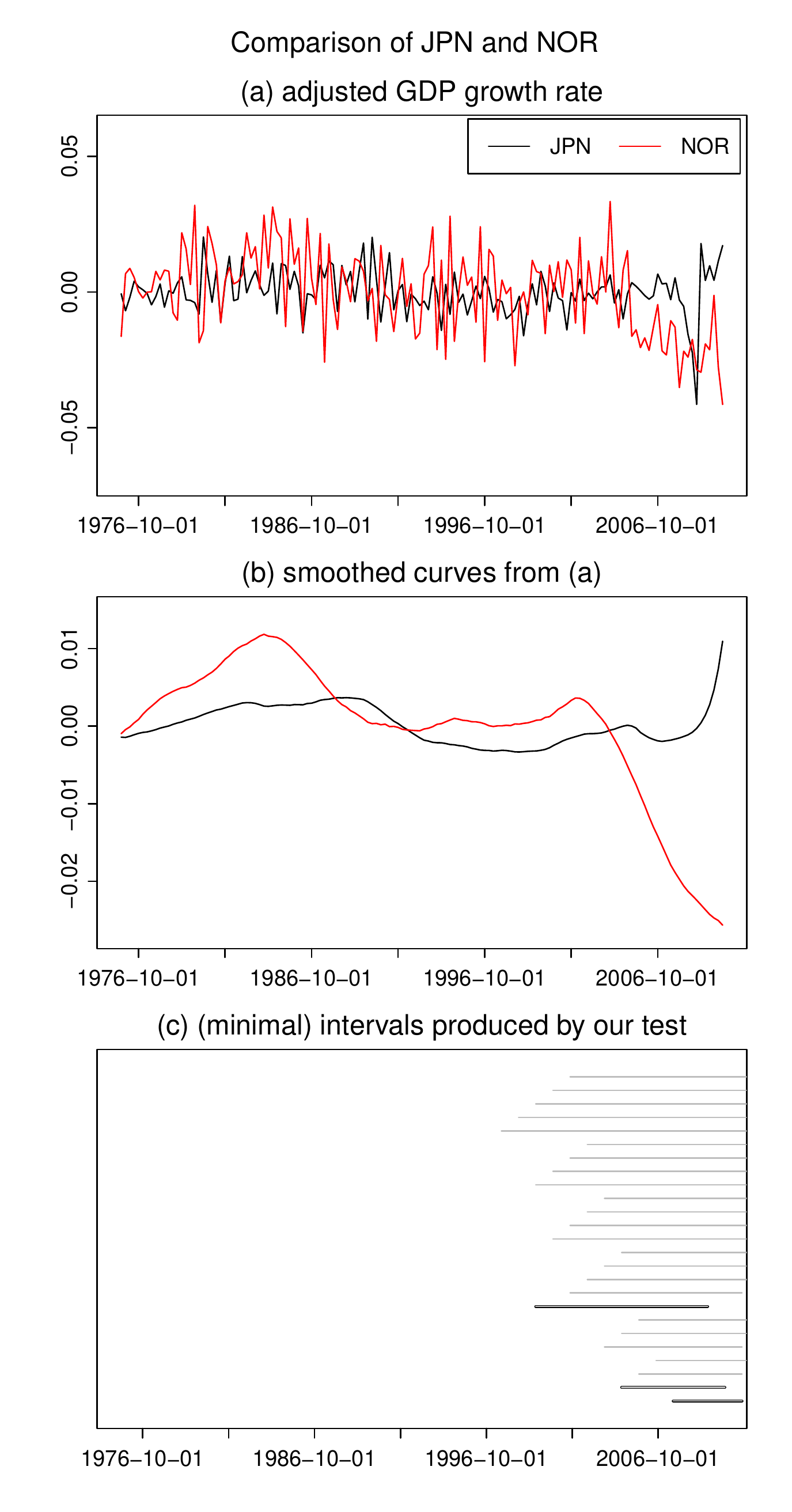}
\caption{Test results for the comparison of Japan and Norway.}\label{fig:Japan:Norway}
\end{minipage}
\hspace{0.1cm}
\begin{minipage}[t]{0.24\textwidth}
\includegraphics[width=\textwidth]{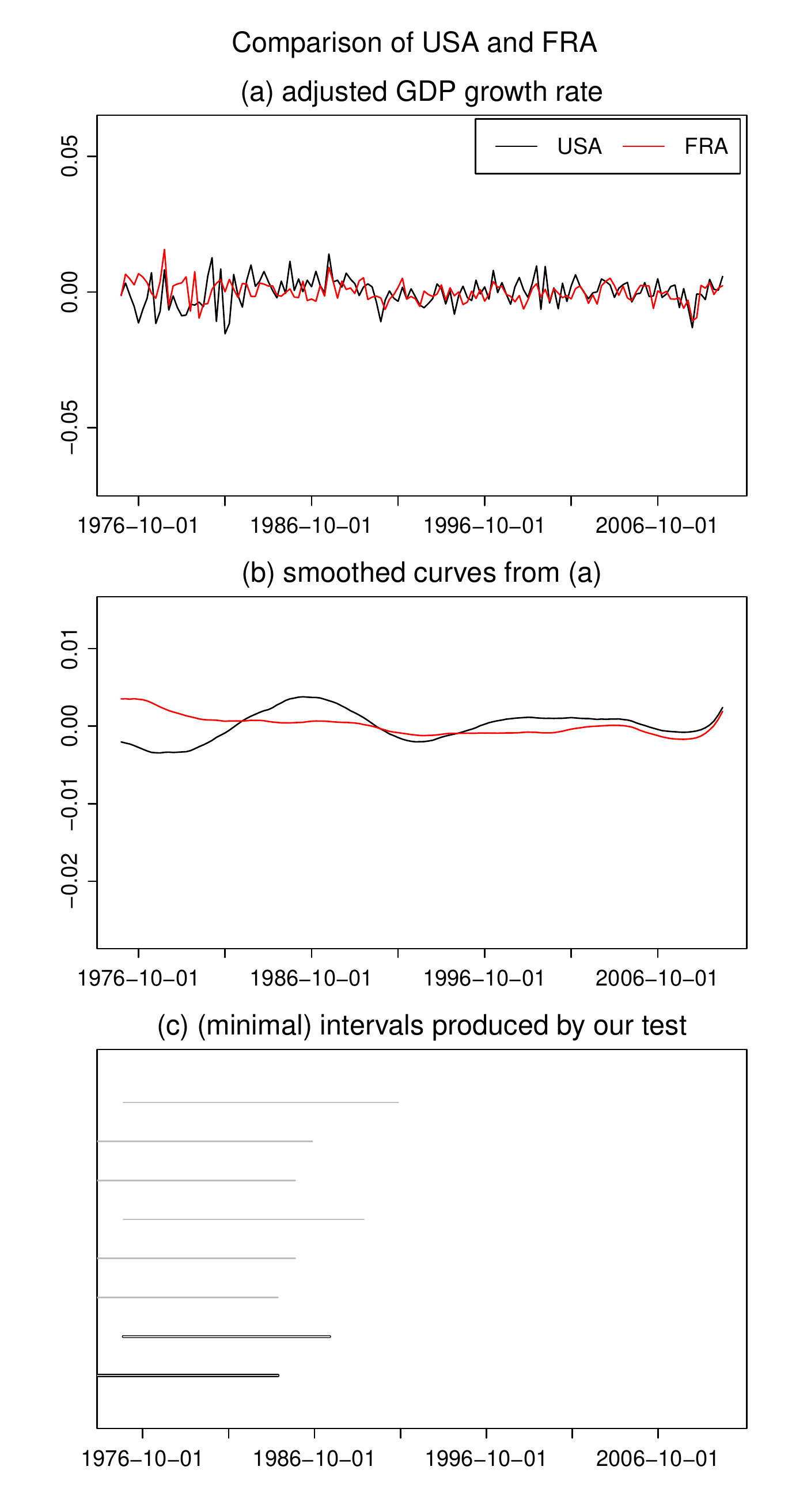}
\caption{Test results for the comparison of the USA and France.}\label{fig:USA:France}
\end{minipage}
\hspace{0.1cm}
\begin{minipage}[t]{0.24\textwidth}
\includegraphics[width=\textwidth]{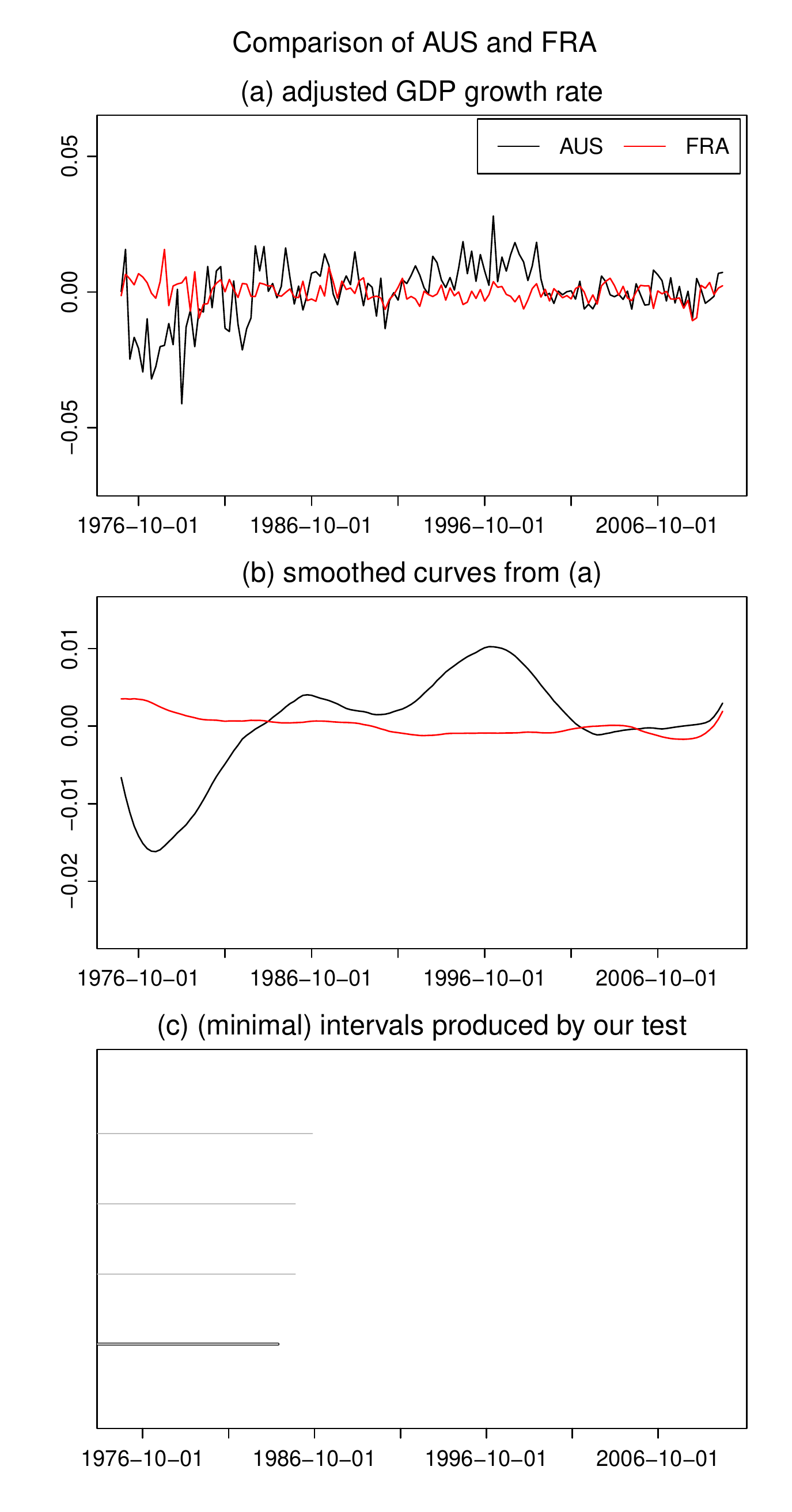}
\caption{Test results for the comparison of Australia and France.}\label{fig:Australia:France}
\end{minipage}
\caption*{Note: In each figure, panel (a) shows the two augmented time series, panel (b) presents smoothed versions of the augmented time series, and panel (c) depicts the set of intervals $\mathcal{S}^{[i, j]}(\alpha)$ in grey and the subset of minimal intervals $\mathcal{S}^{[i, j]}_{min}(\alpha)$ in black.}
\end{sidewaysfigure}

Out of $55$ pairwise comparisons, our test detects differences for $11$ pairs of countries $(i,j)$. These $11$ cases are presented in Figures~\ref{fig:Australia:Norway}--\ref{fig:Australia:France}. In $9$ cases (Figures \ref{fig:Australia:Norway}--\ref{fig:Japan:Norway}), one of the involved countries is Norway. Inspecting the trend estimates in panels (b) of Figures \ref{fig:Australia:Norway}--\ref{fig:Japan:Norway}, the Norwegian trend estimate can be seen to exhibit a strong downward movement at the end of the observation period, whereas the other trend estimates show a much less pronounced downward movement (or even a slight upward movement). According to our test, this is a significant difference between the Norwegian and the other trend functions rather than an artefact of the sampling noise: In all $9$ cases, the test rejects the local null for at least one interval which covers the last $10$ years of the analysed time period (from the first quarter in $2000$ up to the third quarter in $2010$). Apart from these differences at the end of the sampling period, our test also finds differences in the beginning, however, only for part of the pairwise comparisons.

Figures \ref{fig:USA:France} and \ref{fig:Australia:France} present the results of the pairwise comparison between Australia and France and between the USA and France, respectively. In both cases, our test detects differences between the GDP trends only in the beginning of the considered time period. In the case of Australia and France, it is clearly visible in the raw data (panel (a) in Figure~\ref{fig:Australia:France}) that there is a difference between the trends, whereas this is not so obvious in the case of the USA and France. According to our test, there are indeed significant differences in both cases. In particular, we can claim with confidence at least 95\%, that there are differences between the trends of the USA and France (of Australia and France) up to the fourth quarter in 1991 (the fourth quarter in 1986), but there is no evidence of any differences between the trends from 1992 (1987) onwards.

\begin{figure}[t!]
\begin{center}
\includegraphics[width=0.85\textwidth]{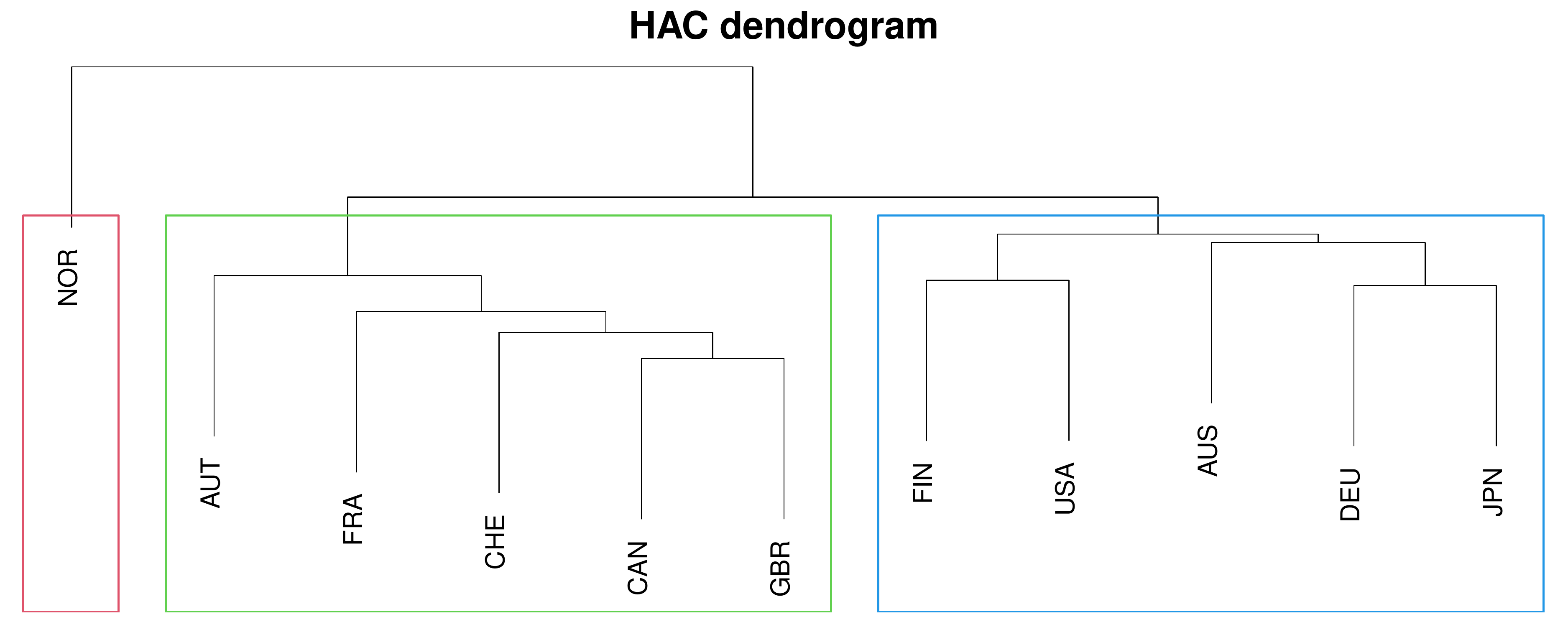}
\caption{Dendrogram of the HAC algorithm. Each coloured rectangle corresponds to one of the clusters.}\label{fig:gdp:dend}

\includegraphics[width=0.85\textwidth]{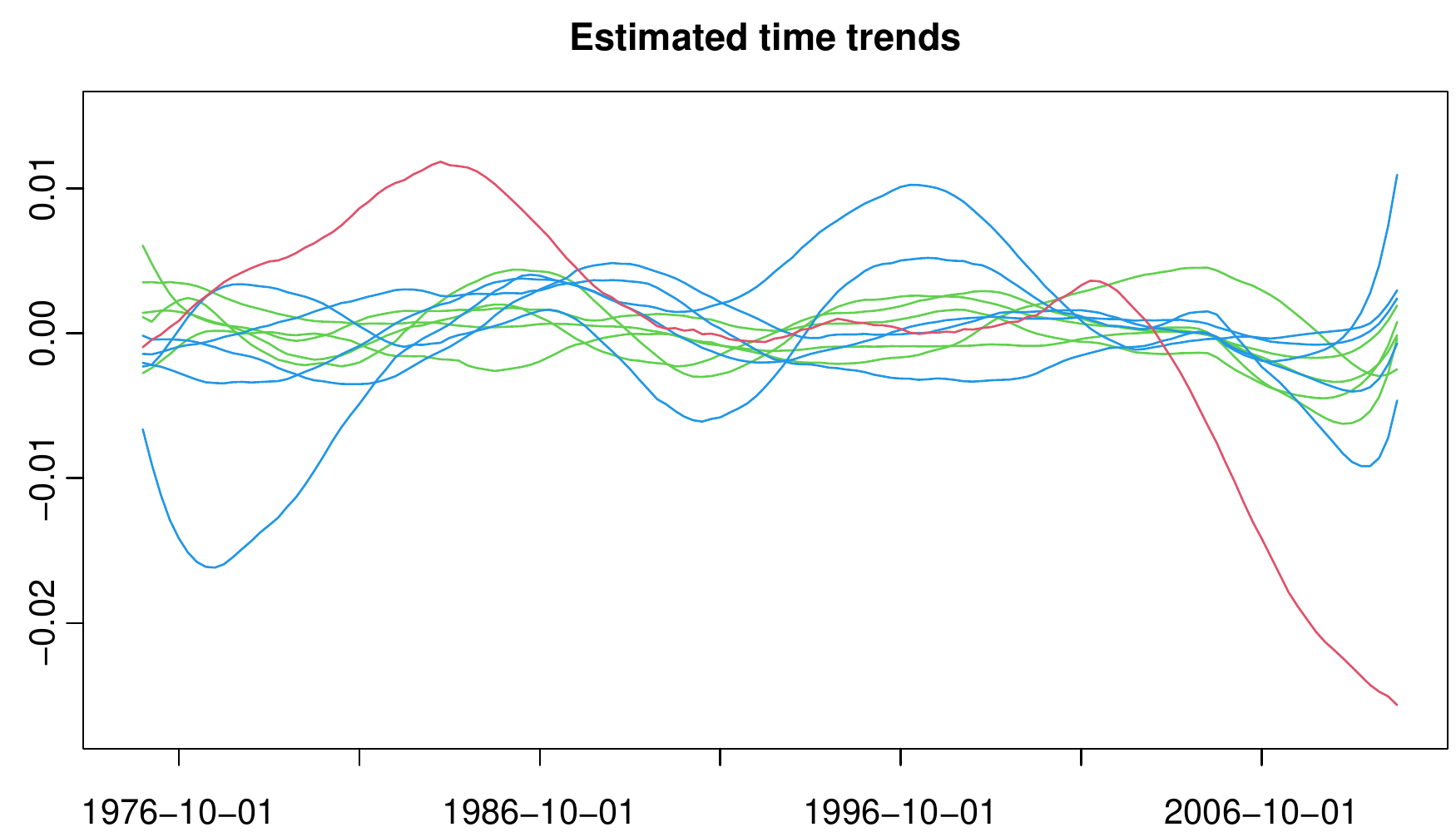}
\caption{Local linear estimates of the $n=11$ time trends (calculated from the augmented time series $\widehat{Y}_{it}$ with bandwidth $h = 0.1$ and Epanechnikov kernel). Each trend estimate is coloured according to the cluster that it is assigned to. }\label{fig:gdp:all_clusters}
\end{center}
\end{figure}

We next apply our clustering techniques to find groups of countries that have the same time trend. We implement our HAC algorithm with $\alpha = 0.05$ and the same choices as detailed above. The dendrogram that depicts the clustering results is plotted in Figure \ref{fig:gdp:dend}. The number of clusters is estimated to be $\widehat{N} = 3$. The rectangles in Figure \ref{fig:gdp:dend} indicate the $\widehat{N} = 3$ clusters. In particular, each rectangle is drawn around the branches of the dendrogram that correspond to one of the three clusters. Figure \ref{fig:gdp:all_clusters} depicts local linear kernel estimates of the $n=11$ GDP time trends (calculated from the augmented time series $\widehat{Y}_{it}$ with bandwidth $0.1$ and Epanechnikov kernel). Their colour indicates which cluster they belong to.

The results in Figures \ref{fig:gdp:dend} and \ref{fig:gdp:all_clusters} show that there is one cluster which consists only of Norway (plotted in red). As we have already discussed above and as becomes apparent from Figure \ref{fig:gdp:all_clusters}, the Norwegian trend exhibits a strong downward movement at the end of the sampling period, whereas the other trends show a much more moderate downward movement (if at all). This is presumably the reason why the clustering procedure puts Norway in a separate cluster. The algorithm further finds two other clusters, one consisting of the 5 countries Australia, Finland, Germany, Japan and the USA (plotted in blue in Figures \ref{fig:gdp:dend} and \ref{fig:gdp:all_clusters}) and the other one consisting of the 5 countries Austria, Canada, France, Switzerland and the UK (plotted in  green in Figures \ref{fig:gdp:dend} and \ref{fig:gdp:all_clusters}). Visual inspection of the trend estimates in Figure \ref{fig:gdp:all_clusters} suggests that the GDP time trends in the blue cluster exhibit more pronounced decreases and increases than the GDP time trends in the green cluster. Hence, overall, the clustering procedure appears to produce a reasonable grouping of the GDP trends.

\subsection{Analysis of house prices}\label{subsec:app:hp}

We next analyse a historical dataset on nominal annual house prices from \cite{Knoll2017} that contains data for $14$ advanced economies covering $143$ years from $1870$ to $2012$. In our analysis, we consider 8 countries (Australia, Belgium, Denmark, France, Netherlands, Norway, Sweden and USA) over the time period 1890--2012. The data for all these countries except one (Belgium) contain no missing values, and for Belgium there are only five missing observations\footnote{The missing values in the Belgium time series span five years during World War I.} which we impute by linear interpolation. The time series of the other $6$ countries contain more than $10$ missing values each, which is why we exclude them from our analysis.

We deflate the nominal house prices with the corresponding consumer price index (CPI) to obtain real house prices ($HP$). Variables that can potentially influence the average house prices are numerous, and there seems to be no general consensus about what the main determinants are. Possible determinants include, but are not limited to, demographic factors such as population growth (\citealt{Holly2010}, \citealt{Wang2014}, \citealt{Churchill2021}); fundamental economic factors such as real GDP (\citealt{Huang2013}, \citealt{Churchill2021}), interest rate and inflation (\citealt{Abelson2005}, \citealt{Otto2007}, \citealt{Huang2013}, \citealt{Jorda2015}); urbanisation (\citealt{Chen2011}, \citealt{Wang2017}); government subsidies and regulations (\citealt{Malpezzi1999}); stock markets (\citealt{Gallin2006}); etc. In our analysis, we focus on the following determinants of the average house prices: real GDP ($GDP$), population size ($POP$), long-term interest rate ($I$) and inflation ($INFL$) which is measured as change in CPI. Most other factors (such as government regulations, construction costs, and real wages) vary rather slowly over time and can be captured by time trend, fixed effects and slope heterogeneity.
Data for CPI, real GDP, population size and long-term interest rate are taken from the Jordà-Schularick-Taylor Macrohistory Database\footnote{See \cite{Jorda2017} for a detailed description of the variable construction.}, which is freely available at \url{http://www.macrohistory.net/data/} (accessed on 13 January 2022).

In summary, we observe a panel of $n = 8$ time series $\mathcal{T}_i = \{(Y_{it}, \X_{it}): 1 \le t \le T \}$ of length $T = 123$ for each country $i \in \{1,\ldots, 8\}$, where $Y_{it} = \ln HP_{it}$ and $\X_{it} = (\ln GDP_{it}, \ln POP_{it}, I_{it}, INFL_{it})^\top$. For each $i$, the time series $\mathcal{T}_i$ is assumed to follow the model $Y_{it} = m_i(t/T) + \bfbeta^\top_i \X_{it} + \alpha_i + \varepsilon_{it}$, or equivalently, 
\begin{equation}\label{eq:model:app4}
\ln HP_{it} =  m_i \Big( \frac{t}{T} \Big) + \beta_{i, 1} \ln GDP_{it} + \beta_{i, 2} \ln POP_{it} + \beta_{i, 3} I_{it} + \beta_{i, 4} INFL_{it} + \alpha_i + \varepsilon_{it}
\end{equation}
for $1 \le t \le T$, where $\bfbeta_i = (\beta_{i, 1}, \beta_{i, 2}, \beta_{i, 3}, \beta_{i, 4})^\top$ is a vector of unknown parameters, $m_i$ is a country-specific unknown nonparametric time trend and $\alpha_i$ is a fixed effect.

The inclusion of a nonparametric trend function $m_i$ in model \eqref{eq:model:app4} is supported by the literature. \cite{Ugarte2009}, for example, model the trend in average Spanish house prices by means of splines. \cite{Winter2022} include a long-term stochastic trend component when describing the dynamic behaviour of real house prices in $8$ advanced economies. Including a nonparametric trend function when modelling the evolution of house prices is also the main conclusion in \cite{Zhang2016}, where it is shown that the time series of logarithmic US house prices is trend-stationary, i.e., can be transformed into a stationary series by subtracting a deterministic trend.

We implement the multiscale test from Section \ref{sec:test} in the same way as in the previous application example with one minor modification: we let $\mathcal{G}_T = U_T \times H_T$ with 
\begin{align*}
U_T & = \big\{ u \in [0,1]: u = \textstyle{\frac{t}{T}} \text{ for some } t \in \naturals \big\} \\
H_T & = \big\{ h \in \big[ \textstyle{\frac{\log T}{T}}, \textstyle{\frac{1}{4}} \big]:  h = \textstyle{\frac{5t - 3}{T}} \text{ for some } t \in \naturals \big\}. 
\end{align*}
We thus take into account all locations $u$ on an equidistant grid $U_T$ with step length $1/T$ and all scales $h=2/T, 7/T, 12/T,\ldots$ with $\log T /T \le h \le 1/4$. This implies that each interval $\interval =[u-h,u+h]$ with $(u,h) \in \grid$ spans $5, 15, 25, \ldots$ years. The lower bound $\log T / T$ is motivated by Assumption \ref{C-h}. As in Section \ref{subsec:app:gdp}, we assume that for each $i$, the error process $\mathcal{E}_i = \{\varepsilon_{it}: 1 \leq t \leq T\}$ follows an AR($p_i$) model and we estimate the long-run variances $\sigma_i^2$ by the difference-based estimator from \cite{KhismatullinaVogt2020} with tuning parameters $q$ and $r$ equal to $15$ and $10$, respectively. 
We choose $p_i$ by minimizing the BIC. For $7$ out of $8$ countries the order $p_i$ determined by BIC\footnote{Applying other information criteria such as FPE, AIC and HQ yields exactly the same results in these cases.} is equal to $1$. For the sake of simplicity, we thus assume that $p_i = 1$ for all $i$.

\begin{sidewaysfigure}[p!]
\begin{minipage}[t]{0.24\textwidth}
\includegraphics[width=\textwidth]{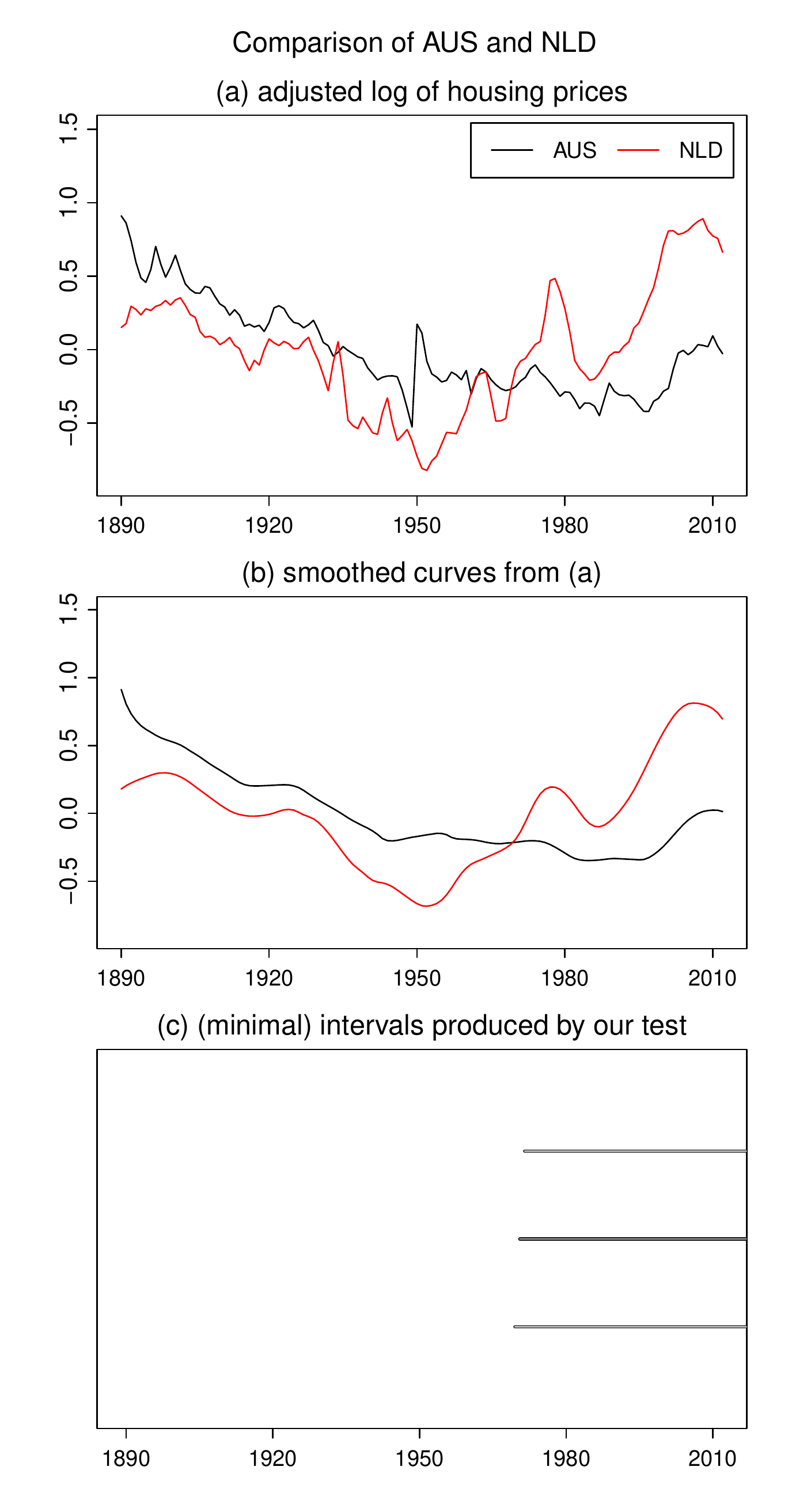}
\caption{Test results for the comparison of the house prices in Australia and the Netherlands.}\label{fig:hp:Australia:Netherlands}
\end{minipage}
\hspace{0.1cm}
\begin{minipage}[t]{0.24\textwidth}
\includegraphics[width=\textwidth]{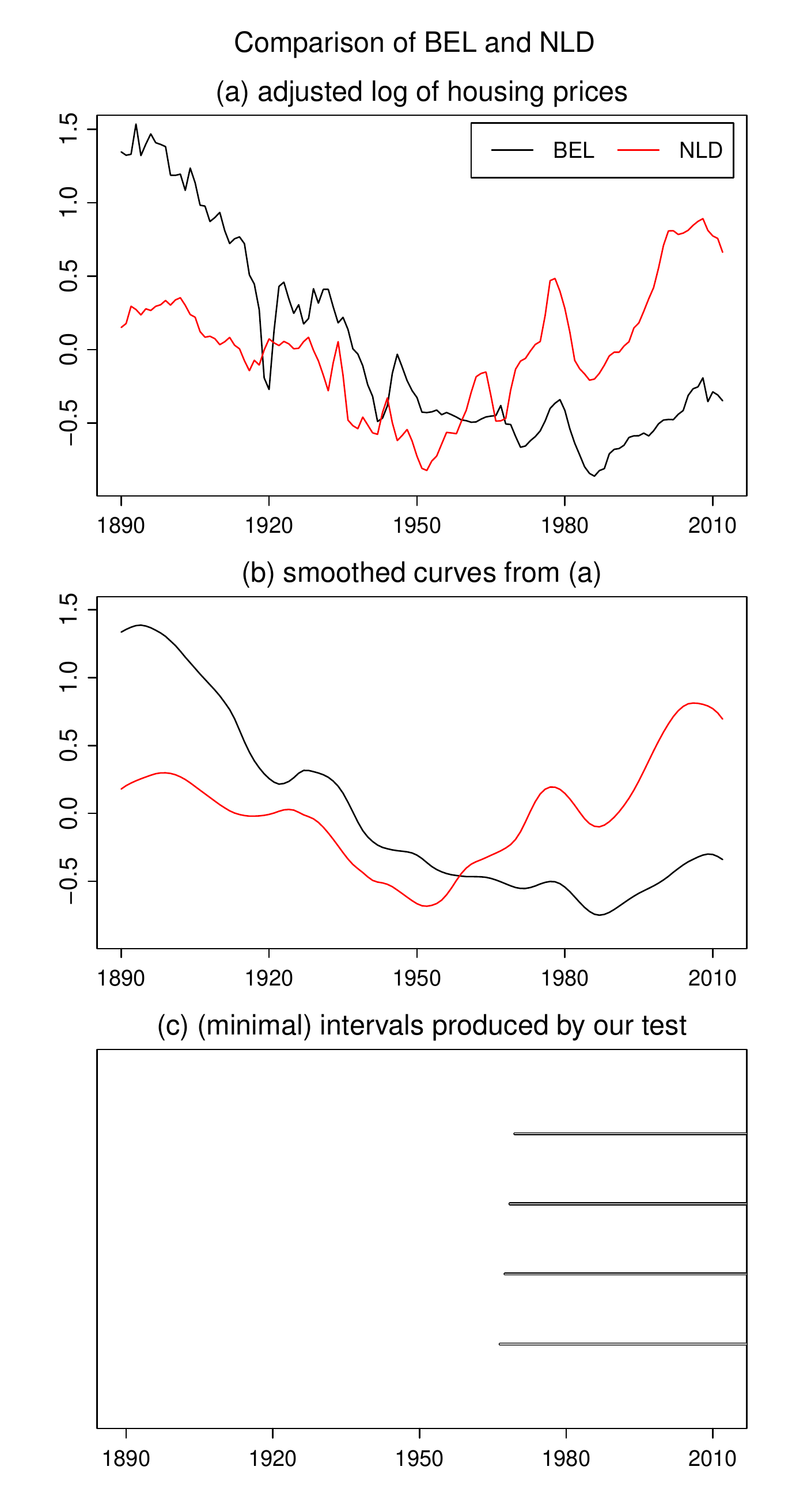}
\caption{Test results for the comparison of the house prices in Belgium and the Netherlands.}\label{fig:hp:Belgium:Netherlands}
\end{minipage}
\hspace{0.1cm}
\begin{minipage}[t]{0.24\textwidth}
\includegraphics[width=\textwidth]{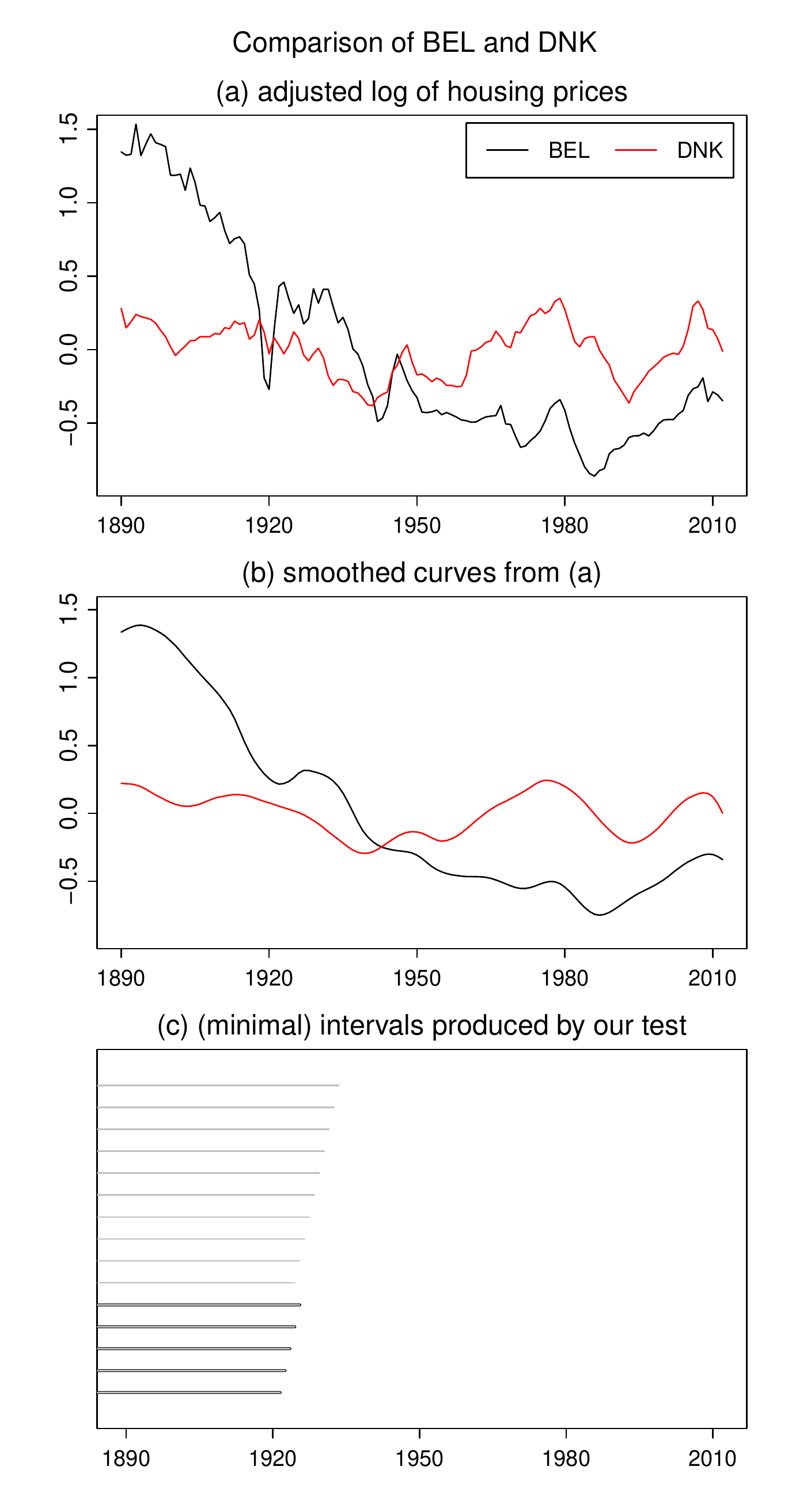}
\caption{Test results for the comparison of the house prices in Belgium and Denmark.}\label{fig:hp:Belgium:Denmark}
\end{minipage}
\hspace{0.1cm}
\begin{minipage}[t]{0.24\textwidth}
\includegraphics[width=\textwidth]{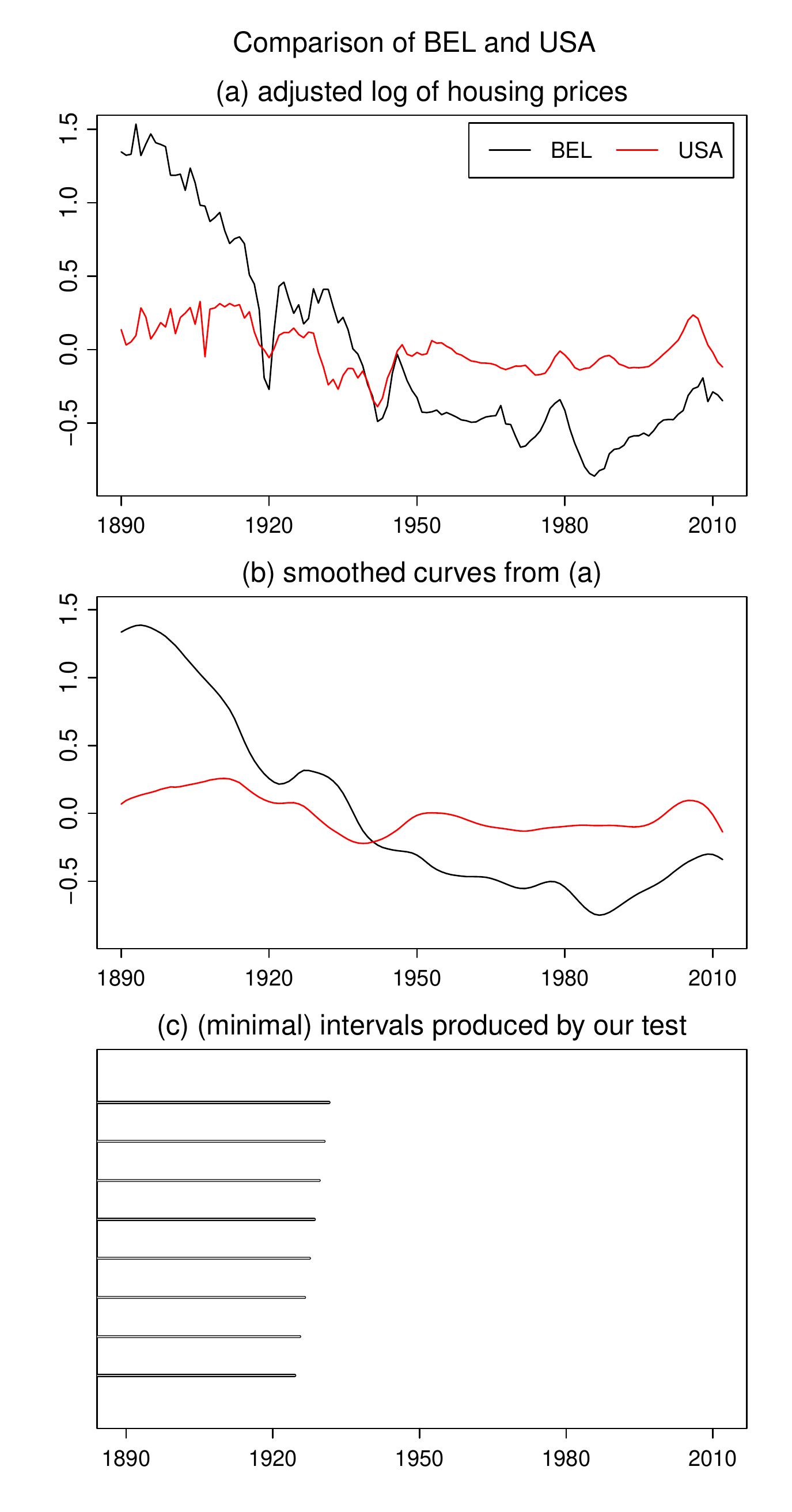}
\caption{Test results for the comparison of the house prices in Belgium and the USA.}\label{fig:hp:Belgium:USA}
\end{minipage}
\caption*{Note: In each figure, panel (a) shows the two augmented time series of the house prices, panel (b) presents smoothed versions of the augmented time series, and panel (c) depicts the set of intervals $\mathcal{S}^{[i, j]}(\alpha)$ in grey and the subset of minimal intervals $\mathcal{S}^{[i, j]}_{min}(\alpha)$ in black.}
\end{sidewaysfigure}

We are now ready to apply our test to the data. The overall null hypothesis $H_0$ is rejected at levels $\alpha = 0.05$ and $\alpha = 0.1$. The detailed test results for the significance level $\alpha =0.05$ are presented in Figures \ref{fig:hp:Australia:Netherlands}--\ref{fig:hp:Belgium:USA}. As in Section \ref{subsec:app:gdp}, each figure corresponds to the comparison of a pair of countries $(i, j)$ for which our test detects differences between the trends. Panel (a) shows the augmented time series $\{\widehat{Y}_{it}: 1 \le t \le T\}$ and $\{\widehat{Y}_{jt}: 1 \le t \le T\}$ for the two countries $i$ and $j$ under consideration. Panel (b) presents smoothed versions of the time series from (a) with the bandwidth window covering $15$ years. Panel (c) presents the test results for the level $\alpha=0.05$. As before, the set of intervals $\mathcal{S}^{[i, j]}(\alpha)$ for which our test rejects and the set of minimal intervals $\mathcal{S}^{[i, j]}_{min}(\alpha) \subseteq \mathcal{S}^{[i, j]}(\alpha)$ are depicted in grey and black, respectively. According to \eqref{eq:CS-v2}, we can make the following simultaneous confidence statement about the intervals plotted in panels (c) of Figures \ref{fig:hp:Australia:Netherlands}--\ref{fig:hp:Belgium:USA}: we can claim, with confidence of about $95\%$, that there is a difference between the trends $m_i$ and $m_j$ on each of these intervals.

Overall, our findings are in line with the observations in \cite{Knoll2017}, where the authors find considerable cross-country heterogeneity in the house price trends. The authors note that before World War II, the countries exhibit similar trends in real house prices, while the trends start to diverge sometime after World War II. This fits with our findings in Figures \ref{fig:hp:Australia:Netherlands} and \ref{fig:hp:Belgium:Netherlands} which show that our test detects differences between the trends of Australia and the Netherlands (of Belgium and the Netherlands) starting only from 1968 (1966) onwards. Contrary to \cite{Knoll2017}, however, our test also finds significant differences in the first half of the observed time period, specifically, between the time trends of Belgium and Denmark and of Belgium and the USA (Figures \ref{fig:hp:Belgium:Denmark} and \ref{fig:hp:Belgium:USA}, respectively). This discrepancy in the results is most certainly due to the fact that the method used in \cite{Knoll2017} does not account for effects of other factors such as GDP or population growth, while our test allows us to include various determinants of the average house prices in model \eqref{eq:model:app4}. 

\begin{figure}[t!]
\begin{center}
\includegraphics[width=0.85\textwidth]{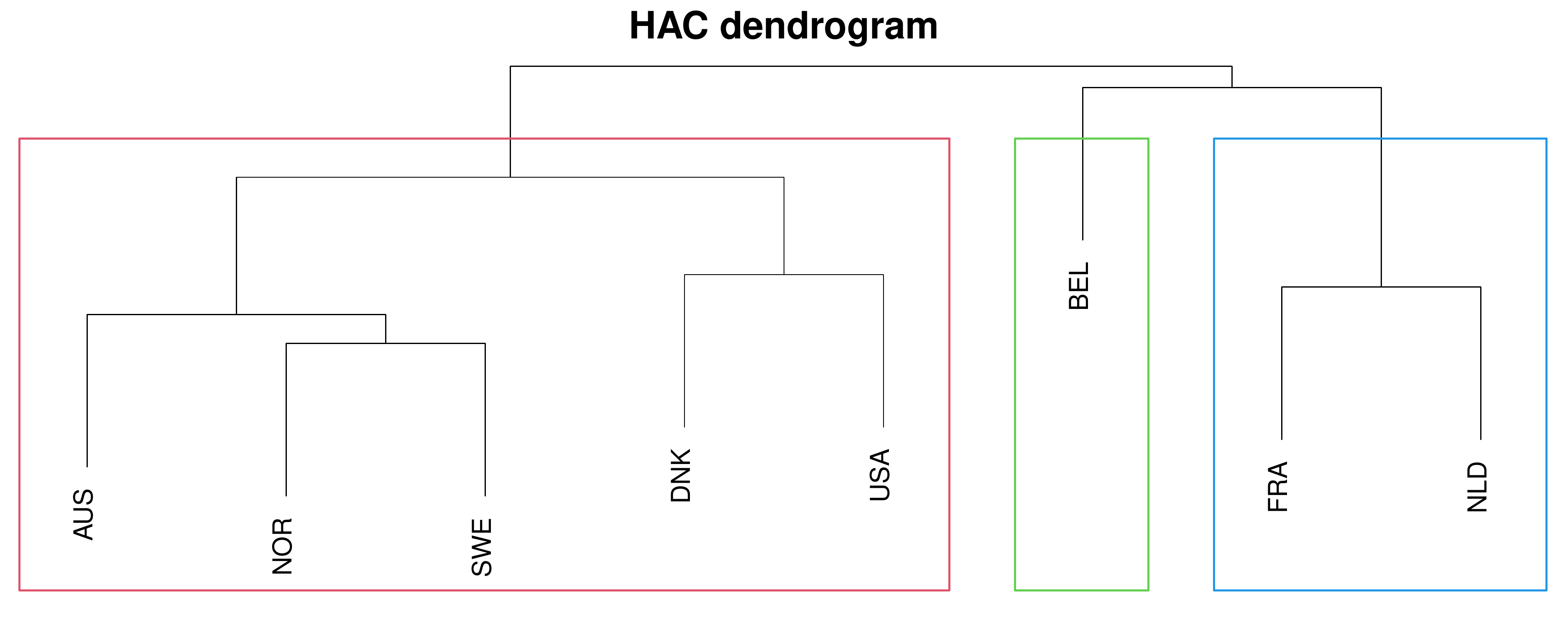}
\caption{Dendrogram of the HAC algorithm. Each coloured rectangle corresponds to one of the clusters.}\label{fig:hp:dend}

\includegraphics[width=0.85\textwidth]{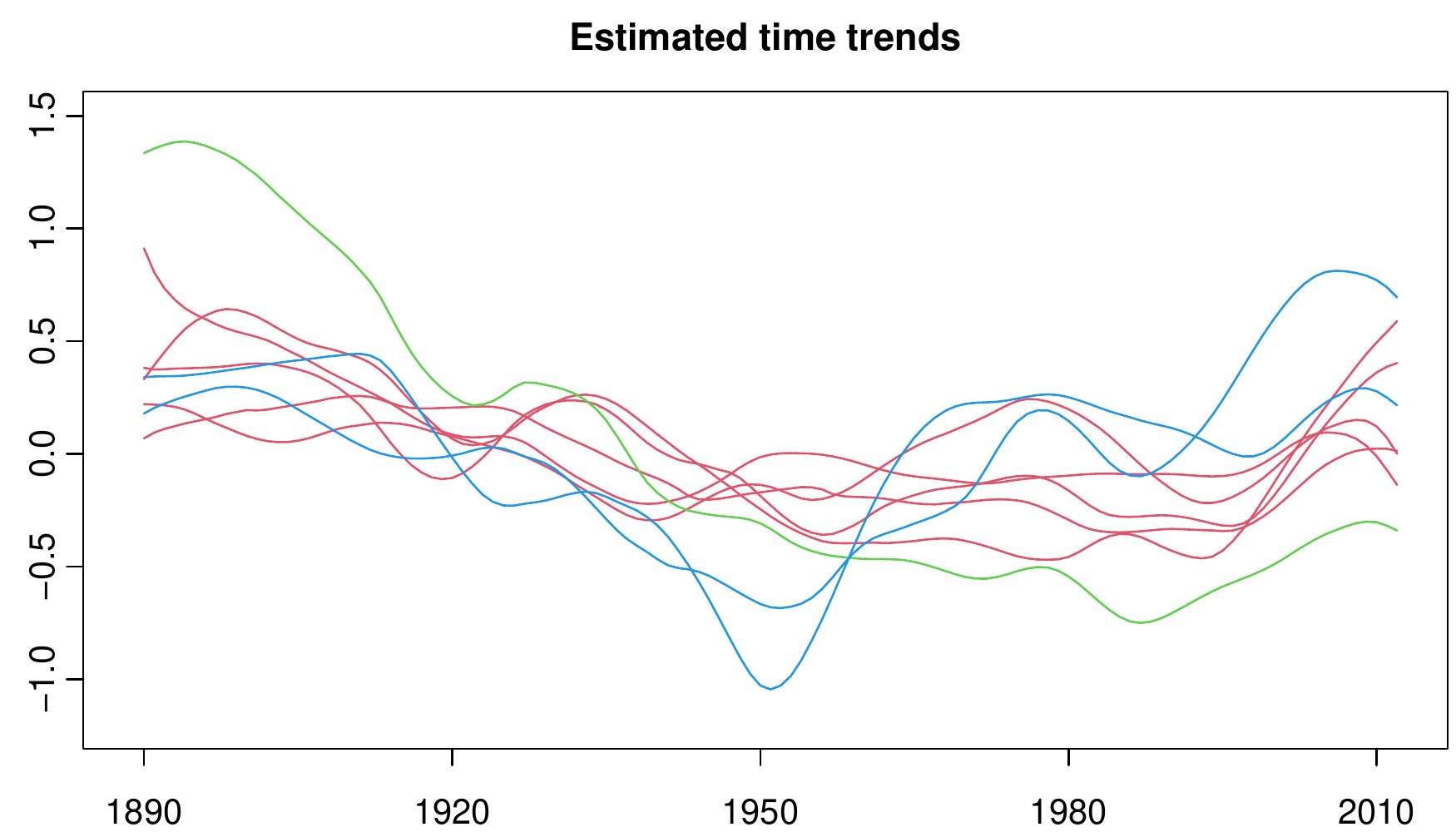}
\caption{Local linear estimates of the $n=8$ time trends (calculated from the augmented time series $\widehat{Y}_{it}$ with a bandwidth window covering $15$ years and an Epanechikov kernel). Each trend estimate is coloured according to the cluster that it is assigned to. }\label{fig:hp:all_clusters}
\end{center}
\end{figure}

We next apply the clustering procedure from Section \ref{sec:clustering} to the data. As in the previous application example, we set $\alpha = 0.05$. The results are displayed in Figures \ref{fig:hp:dend} and \ref{fig:hp:all_clusters}. Specifically, Figure \ref{fig:hp:dend} shows the dendrogram with the results of the HAC algorithm and Figure \ref{fig:hp:all_clusters} presents local linear kernel estimates of the trend curves (calculated with a bandwidth window of $15$ years and an Epanechnikov kernel). The number of clusters is estimated to be $\widehat{N} = 3$. As before, the coloured rectangles in Figure \ref{fig:hp:dend} are drawn around the countries that belong to the same cluster and the same colours are used to display the trend estimates in Figure \ref{fig:hp:all_clusters}.

\pagebreak
Inspecting the results, we can see that there is one cluster consisting only of Belgium (plotted in green). Figure \ref{fig:hp:all_clusters} suggests that the Belgium time trend indeed evolves somewhat differently from the other trends in the first $30$ years of the observed time period. 
The algorithm further detects a cluster that consists of two countries: France and the Netherlands (plotted in blue). The time trends of these two countries display some kind of dip around 1950, which is not present in the time trends of the other countries. Overall, our algorithm thus appears to produce a reasonable clustering of the house price trends.

\section*{Acknowledgements}

Financial support by the Deutsche Forschungsgemeinschaft (DFG, German Research Foundation), Germany – grant VO 2503/1-1, project number 430668955 – is gratefully acknowledged.

\bibliographystyle{ims}
{\small
\setlength{\bibsep}{0.55em}
\bibliography{bibliography}}

\newpage
\allowdisplaybreaks[3]
\appendix

\section{Appendix}\label{appendix}

In what follows, we prove the theoretical results from Sections \ref{sec:theo} and \ref{sec:clustering}. We use the following notation: The symbol $C$ denotes a universal real constant which may take a different value on each occurrence. For $a,b \in \reals$, we write $a \vee b = \max\{a,b\}$. For $x \in \reals_{\geq 0}$, we let $\lfloor x \rfloor$ denote the integer value of $x$ and $\lceil x \rceil$ the smallest integer greater than or equal to $x$. For any set $A$, the symbol $|A|$ denotes the cardinality of $A$. The expression $X \stackrel{\mathcal{D}}{=} Y$ means that the two random variables $X$ and $Y$ have the same distribution. Finally, we sometimes use the notation $a_T \ll b_T$ to express that $a_T = o(b_T)$.

\enlargethispage{0.25cm}
\subsection*{Auxiliary results}\label{subsec:appendix:aux}

Let $\{Z_t\}_{t=-\infty}^\infty$ be a stationary time series process with $Z_t \in \mathcal{L}^q$ for some $q > 2$ and $\ex[Z_t] = 0$. Assume that $Z_t$ can be represented as $Z_t = g(\ldots, \eta_{t-1}, \eta_t)$, where $\eta_t$ are i.i.d.\ variables and $g: \reals^\infty \to \reals$ is a measurable function. We first state a Nagaev-type inequality from \cite{Wu2016}.

\begin{definitionA}\label{defA-DAN} 
Let $q > 0$ and $\alpha > 0$. The dependence adjusted norm of the process $Z. = \{Z_t\}_{t=-\infty}^\infty$ is given by 
$\|Z.\|_{q, \alpha} = \sup_{t\geq 0} (t+1)^{\alpha} \sum_{s=t}^{\infty} \delta_{q}(g,s)$.
\end{definitionA}

\begin{propA}[\cite{Wu2016}, Theorem 2]\label{theo-wu2016}
Assume that $\|Z.\|_{q, \alpha} < \infty$ with $q > 2$ and $\alpha > 1/2 - 1/q$. Let $S_T = a_1 Z_1 + \ldots + a_T Z_T$, where $a_1,\ldots,a_T$ are real numbers with $\sum_{t=1}^T a_t^2 = T$. Then for any $w>0$,
\[ \pr(|S_T| \geq w) \leq C_1 \frac{|a|_{q}^{q}\|Z.\|^{q}_{q, \alpha}}{w^{q}} + C_2 \exp \left( - \frac{C_3 w^2} {T \|Z.\|^2_{2, \alpha}}\right), \]
where $C_1, C_2, C_3$ are constants that only depend on $q$ and $\alpha$.
\end{propA}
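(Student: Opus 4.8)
The plan is to prove this as a Fuk--Nagaev-type inequality, separating the deviation of $S_T$ into a heavy-tail contribution that yields the polynomial term and a bounded-bulk contribution that yields the sub-Gaussian term. First I would fix a truncation level $\tau>0$, to be calibrated against $w$ only at the very end, and decompose $Z_t=Z_t^\prime+Z_t^{\prime\prime}$ with $Z_t^\prime=Z_t\ind\{|Z_t|\le\tau\}-\ex[Z_t\ind\{|Z_t|\le\tau\}]$ the centred truncation and $Z_t^{\prime\prime}$ the centred remainder. Writing $S_T^\prime=\sum_t a_tZ_t^\prime$ and $S_T^{\prime\prime}=\sum_t a_tZ_t^{\prime\prime}$, a union bound gives $\pr(|S_T|\ge w)\le\pr(|S_T^\prime|\ge w/2)+\pr(|S_T^{\prime\prime}|\ge w/2)$, and the two pieces are then handled by genuinely different tools.

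For the remainder $S_T^{\prime\prime}$, I would apply Markov's inequality at order $q$ together with a Rosenthal/Burkholder-type moment bound for weakly dependent sums phrased through the physical dependence measure $\delta_q(g,s)$. The point is that the heavy tails of $Z_t^{\prime\prime}$ contribute through the $q$-th moment rather than the variance, so that $\ex|S_T^{\prime\prime}|^q$ is controlled by $|a|_q^q$ times a power of the dependence adjusted norm $\|Z.\|_{q,\alpha}$; dividing by $(w/2)^q$ then reproduces the first term $C_1|a|_q^q\|Z.\|_{q,\alpha}^q/w^q$.

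For the bounded part $S_T^\prime$, the summands obey $|a_tZ_t^\prime|\le 2|a_t|\tau$, so the target is a Bernstein/Freedman-type exponential bound. Because the $Z_t$ are only weakly dependent, the natural route is a martingale approximation in the spirit of \cite{Wu2005}: using the projection operators $\mathcal{P}_k(\cdot)=\ex[\cdot\mid\mathcal{F}_k]-\ex[\cdot\mid\mathcal{F}_{k-1}]$, one represents $S_T^\prime$ as a martingale plus a coboundary remainder whose size is governed by the summability of $\delta_2(g,s)$ that $\alpha>1/2-1/q$ guarantees. I would then bound the predictable quadratic variation by $T\|Z.\|_{2,\alpha}^2$ -- here the normalisation $\sum_t a_t^2=T$ enters together with $\|Z.\|_{2,\alpha}\le\|Z.\|_{q,\alpha}<\infty$ -- and the jumps by $\max_t|a_t|\tau$, and apply Freedman's inequality to obtain a bound of the form $\exp(-Cw^2/(T\|Z.\|_{2,\alpha}^2+\tau w\max_t|a_t|))$.

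Finally I would add the two estimates and calibrate $\tau$ so that the jump contribution in the exponent is dominated by the variance contribution at the deviation scale of interest, which leaves the clean term $C_2\exp(-C_3w^2/(T\|Z.\|_{2,\alpha}^2))$ while keeping the remainder term at the stated polynomial order. The main obstacle, as is typical for such inequalities under the physical dependence measure, lies in this bounded-bulk step: transferring from the dependence structure to a martingale with simultaneously controlled quadratic variation and bounded jumps, and carrying the two distinct weight norms $|a|_q$ and $(\sum_t a_t^2)^{1/2}$ through the heavy-tail and bulk regimes respectively. Checking that the martingale-approximation remainder is negligible precisely at scale $w$, and that a single truncation level can be made to match both regimes, is where the delicate bookkeeping concentrates, and the short-range dependence condition $\alpha>1/2-1/q$ is exactly what makes this balancing feasible.
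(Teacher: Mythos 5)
First, a point of orientation: the paper does not prove this statement at all. It is Proposition~A.2, imported verbatim from Wu and Wu (2016, Theorem~2) and then used as a black box (in Lemma~A.2 and in the proof of Proposition~A.5). So there is no internal proof to compare yours against; the relevant benchmark is the proof in the cited reference, which does \emph{not} proceed by truncation plus a martingale/Freedman argument, but by approximating $Z_t$ with $m$-dependent sequences $\ex[Z_t \mid \eta_{t-m},\ldots,\eta_t]$, splitting the $m$-dependent part into independent blocks, and applying the classical Nagaev inequality for independent summands blockwise; the approximation error is controlled through the dependence-adjusted norms.

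Your sketch is in the right family (Fuk--Nagaev via truncation), but as written it has a genuine gap at the Freedman step. Freedman's inequality requires control of the \emph{predictable} quadratic variation $\langle M\rangle_T=\sum_t \ex[d_t^2\mid\mathcal{F}_{t-1}]$, which is a random variable; the bound ``$\langle M\rangle_T\le T\|Z.\|_{2,\alpha}^2$'' that you invoke holds for its expectation, not almost surely. The only deterministic bound available after truncation is $\langle M\rangle_T\le C\tau^2\sum_t a_t^2=C\tau^2 T$, which puts $\tau^2$ rather than $\|Z.\|_{2,\alpha}^2$ into the exponent; and since the heavy-tail part forces $\tau$ to be calibrated to $w$ (up to logarithmic factors), this yields $\exp(-cw^2/(\tau^2T))\approx\exp(-c/T)$, which is useless. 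Recovering the stated variance scale requires an additional concentration bound for $\pr(\langle M\rangle_T>2T\|Z.\|_{2,\alpha}^2)$ --- but the conditional-variance process is itself a weighted sum of dependent functionals of the same kind, so you run into a recursion that must be broken either by an induction on the moment order or by exactly the $m$-dependence/blocking device of the original proof. A second, smaller issue: the Rosenthal bound you propose for the tail part produces an extra term of order $(\sum_t a_t^2\,\|Z.''\|_{2,\alpha}^2)^{q/2}/w^q$, which is not of the stated form $|a|_q^q\|Z.\|_{q,\alpha}^q/w^q$; absorbing it requires restricting to the nontrivial regime $w^2\gtrsim T\|Z.\|_{2,\alpha}^2$ and, more delicately, transferring the truncation gain to the dependence-adjusted norm of the truncated process (the physical dependence measure of $Z_t\,\ind\{|Z_t|>\tau\}$ involves differences of indicators and does not simply inherit a factor $\tau^{1-q/2}$). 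Neither obstacle is obviously fatal, but both sit exactly where the proof lives, and the sketch currently asserts rather than establishes them.
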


The following lemma is a simple consequence of the above inequality.

\begin{lemmaA}\label{lemma-wlln}
Let $\sum_{s=t}^\infty \delta_{q}(g,s) = O(t^{-\alpha})$ for some $q > 2$ and $\alpha > 1/2 - 1/q$. Then 
\[ \frac{1}{\sqrt{T}} \sum_{t=1}^T Z_t = O_p(1). \]
\end{lemmaA}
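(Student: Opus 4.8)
The plan is to apply the Nagaev-type inequality of Proposition~\ref{theo-wu2016} directly to the equal-weight sum, taking the coefficients $a_1 = \ldots = a_T = 1$. With this choice one has $\sum_{t=1}^T a_t^2 = T$ as required by the proposition, the partial sum $S_T = a_1 Z_1 + \ldots + a_T Z_T$ coincides with $\sum_{t=1}^T Z_t$, and $|a|_q^q = \sum_{t=1}^T |a_t|^q = T$.

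Before invoking the inequality, I would first verify that the dependence adjusted norms $\|Z.\|_{q, \alpha}$ and $\|Z.\|_{2, \alpha}$ of Definition~\ref{defA-DAN} are finite. By the hypothesis $\sum_{s \ge t} \delta_q(g,s) = O(t^{-\alpha})$, for $t \ge 1$ the quantity $(t+1)^\alpha \sum_{s \ge t} \delta_q(g,s)$ is of order $(t+1)^\alpha t^{-\alpha} = O(1)$; for $t=0$ the factor $\sum_{s \ge 0} \delta_q(g,s)$ is finite, since the tail from $s=1$ obeys the same bound and $\delta_q(g,0) \le 2 \| Z_0 \|_q < \infty$. Hence $\|Z.\|_{q, \alpha} < \infty$, and because $\delta_2(g,s) \le \delta_q(g,s)$ for $q \ge 2$ (monotonicity of $\mathcal{L}^p$-norms) we also obtain $\|Z.\|_{2, \alpha} \le \|Z.\|_{q, \alpha} < \infty$. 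The standing assumption $\alpha > 1/2 - 1/q$ with $q > 2$ is precisely the condition under which Proposition~\ref{theo-wu2016} applies.

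With these norms in hand, I would evaluate the inequality at the threshold $w = K \sqrt{T}$ for a constant $K > 0$ to be chosen, which gives
\[ \pr \Big( \Big| \frac{1}{\sqrt{T}} \sum_{t=1}^T Z_t \Big| \ge K \Big) \le C_1 \frac{T\, \|Z.\|_{q, \alpha}^q}{K^q\, T^{q/2}} + C_2 \exp \Big( - \frac{C_3 K^2}{\|Z.\|_{2, \alpha}^2} \Big). \]
The key observation is that the first term equals $C_1 \|Z.\|_{q, \alpha}^q\, K^{-q}\, T^{-(q/2 - 1)}$, which -- since $q > 2$ forces the exponent $q/2 - 1 > 0$ -- is bounded above by $C_1 \|Z.\|_{q, \alpha}^q\, K^{-q}$ uniformly over $T \ge 1$, while the second (exponential) term does not depend on $T$ at all. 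Given any $\varepsilon > 0$, both terms can therefore be made smaller than $\varepsilon/2$ simultaneously for every $T$ by choosing $K$ large enough. This yields $\sup_T \pr(|T^{-1/2} \sum_{t=1}^T Z_t| \ge K) \le \varepsilon$, which is exactly the defining property of $O_p(1)$.

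I do not anticipate a serious obstacle: the lemma is a routine corollary once the Nagaev inequality of Proposition~\ref{theo-wu2016} is available. The only step that demands a little care is the uniform-in-$T$ control of the polynomial term, which rests on the strict inequality $q > 2$ guaranteeing a strictly positive exponent $q/2 - 1$; were $q = 2$ permitted, that term would merely be $O(1)$ rather than vanishing, and the argument would fail. It is therefore worth flagging that $q > 2$ enters the proof in an essential way.
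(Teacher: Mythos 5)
Your proof is correct and follows essentially the same route as the paper: apply Proposition \ref{theo-wu2016} with $a_t \equiv 1$, note that the hypothesis gives $\|Z.\|_{2,\alpha} \le \|Z.\|_{q,\alpha} < \infty$, evaluate at threshold $\sqrt{T}\,w$, and use $q > 2$ to bound the polynomial term uniformly in $T$. The only difference is that you spell out the finiteness of the dependence adjusted norm (including the $t=0$ term) in more detail than the paper, which simply asserts it.
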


\begin{proof}[\textnormal{\textbf{Proof of Lemma \ref{lemma-wlln}.}}]
Let $\eta > 0$ be a fixed number. We apply Proposition \ref{theo-wu2016} to the sum $S_T = \sum_{t=1}^T a_t Z_t$ with $a_t = 1$ for all $t$. The assumption $\sum_{s=t}^\infty \delta_{q}(g,s) = O(t^{-\alpha})$ implies that $\|Z.\|_{2, \alpha} \le \|Z.\|_{q, \alpha} \le C_Z < \infty$. Hence, for $w$ chosen sufficiently large, we get 
\begin{align*}
\pr\Big( \Big| \sum_{t=1}^T Z_t \Big| \geq \sqrt{T} w\Big) 
 & \leq C_1 \frac{T C_Z^q}{T^{q/2} w^q} + C_2 \exp \left( - \frac{C_3 T w^2} {T C_Z^2}\right) \\
 & = \frac{\{C_1 C_Z^q\} T^{1-q/2}}{w^q} + C_2 \exp \left( - \frac{C_3 w^2} {C_Z^2}\right) \le \eta
\end{align*}
for all $T$. This means that $\sum_{t=1}^T Z_t / \sqrt{T} = O_p(1)$. 
\end{proof}

Let $\Delta \varepsilon_{it} = \varepsilon_{it} - \varepsilon_{it-1}$ and $\Delta \X_{it} = \X_{it} - \X_{it-1}$. By Assumptions \ref{C-err1} and \ref{C-reg1}, $\Delta \varepsilon_{it} = \Delta g_i(\mathcal{F}_{it})$ and $\Delta \X_{it} = \Delta \boldsymbol{h}_{i}(\mathcal{G}_{it})$. We further define
\begin{align*}
\boldsymbol{a}_{i}(\mathcal{H}_{it}) & := \Delta \boldsymbol{h}_{i}(\mathcal{G}_{it}) \Delta g_i(\mathcal{F}_{it})\phantom{^\top} = \Delta \X_{it} \Delta \varepsilon_{it} \\
\boldsymbol{b}_{i}(\mathcal{G}_{it}) & := \Delta \boldsymbol{h}_{i}(\mathcal{G}_{it}) \Delta \boldsymbol{h}_{i}(\mathcal{G}_{it})^\top = \Delta \X_{it} \Delta \X_{it}^\top,
\end{align*}
where $\boldsymbol{a}_i = (a_{ij})_{j=1}^d$, $\boldsymbol{b}_i = (b_{ikl})_{k,l=1}^d$ and $\mathcal{H}_{it} = (\mathcal{H}_{it,1},\ldots,\mathcal{H}_{it,d})^\top$ with $\mathcal{H}_{it,j} = (\ldots, \nu_{it-1,j},\linebreak \nu_{it,j})$ and $\nu_{it,j} = (\eta_{it},\xi_{it,j})$. The next result gives bounds on the physical dependence measures of the processes $\{ \boldsymbol{a}_{i}(\mathcal{H}_{it}) \}_{t=-\infty}^\infty$ and $\{ \boldsymbol{b}_{i}(\mathcal{G}_{it}) \}_{t=-\infty}^\infty$.

\begin{lemmaA}\label{lemma-bounds-dep-measure}
Let Assumptions \ref{C-err1}, \ref{C-err3}, \ref{C-reg1} and \ref{C-reg3} be satisfied. Then for each $i$, $j$, $k$ and $l$, it holds that
\begin{align*}
 & \sum_{s=t}^\infty \delta_p(a_{ij}, s) = O(t^{-\alpha}) \, \qquad \text{for } p = \min\{q,q^\prime\}/2 \text{ and some } \alpha > 1/2 - 1/p \\
 & \sum_{s=t}^\infty \delta_p(b_{ikl}, s) = O(t^{-\alpha}) \qquad \text{for } p = q^\prime/2 \text{ and some } \alpha > 1/2 - 1/p.
\end{align*}
\end{lemmaA}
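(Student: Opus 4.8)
The plan is to bound the physical dependence measures of the two product processes by those of their differenced factors, using the elementary product-rule identity together with H\"older's inequality, and then to transfer the decay rates of Assumptions \ref{C-err3} and \ref{C-reg3} through the differencing operation. The whole argument thus splits into a differencing step, a product step, and a final rate-combination step.

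First I would record bounds for the \emph{differenced} processes $\Delta g_i$ and $\Delta \boldsymbol{h}_i$. Since coupling the innovation at time $0$ affects $g_i(\mathcal{F}_{is})$ and $g_i(\mathcal{F}_{is-1})$ separately, the triangle inequality gives $\delta_q(\Delta g_i, s) \le \delta_q(g_i, s) + \delta_q(g_i, s-1)$, and likewise $\delta_{q'}(\Delta h_{ij}, s) \le \delta_{q'}(h_{ij}, s) + \delta_{q'}(h_{ij}, s-1)$. Summing over $s \ge t$ and invoking \ref{C-err3} and \ref{C-reg3} yields $\sum_{s \ge t}\delta_q(\Delta g_i, s) = O(t^{-\gamma}(\log t)^{-A})$ and $\sum_{s \ge t}\delta_{q'}(\Delta h_{ij}, s) = O(t^{-\alpha_X})$ for some $\alpha_X > 1/2 - 1/q'$, so the differenced processes inherit the same cumulative decay rates. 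The moment bounds $\|\Delta \varepsilon_{it}\|_q \le 2C$ and $\|\Delta X_{it,j}\|_{q'} < \infty$ follow at once from \ref{C-err1} and \ref{C-reg1}.

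The core step is to control the products. For $b_{ikl} = \Delta X_{it,k}\Delta X_{it,l}$, I would write the coupled difference as $(\Delta X_{is,k} - \Delta X_{is,k}')\Delta X_{is,l} + \Delta X_{is,k}'(\Delta X_{is,l} - \Delta X_{is,l}')$, where primes denote replacement of the time-$0$ innovation; H\"older with exponents $(q', q')$ together with $\|\Delta X_{is,l}\|_{q'} \le C$ then gives $\delta_{q'/2}(b_{ikl}, s) \le C\{\delta_{q'}(\Delta h_{ik}, s) + \delta_{q'}(\Delta h_{il}, s)\}$. For $a_{ij} = \Delta X_{it,j}\Delta\varepsilon_{it}$, I would observe that replacing $\nu_{i0,j} = (\eta_{i0}, \xi_{i0,j})$ affects $\Delta X_{is,j}$ only through $\xi_{i0,j}$ and $\Delta\varepsilon_{is}$ only through $\eta_{i0}$, so that $\|\Delta X_{is,j} - \Delta X_{is,j}'\|_{q'} = \delta_{q'}(\Delta h_{ij}, s)$ and $\|\Delta\varepsilon_{is} - \Delta\varepsilon_{is}'\|_q = \delta_q(\Delta g_i, s)$. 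The same product-rule split followed by H\"older, now with exponent pairs $(q', r)$ and $(r', q)$ chosen so that $1/q' + 1/r = 2/\min\{q,q'\}$ and $1/r' + 1/q = 2/\min\{q,q'\}$ with $r \le q$ and $r' \le q'$, yields $\delta_{p}(a_{ij}, s) \le C\{\delta_{q'}(\Delta h_{ij}, s) + \delta_q(\Delta g_i, s)\}$ with $p = \min\{q,q'\}/2$.

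Finally I would sum over $s \ge t$ and combine the two steps. For $b_{ikl}$ this gives $\sum_{s\ge t}\delta_{q'/2}(b_{ikl}, s) = O(t^{-\alpha_X})$ with $\alpha_X > 1/2 - 1/q' > 1/2 - 2/q' = 1/2 - 1/p$, as claimed. For $a_{ij}$ the bound is the sum of the covariate contribution (rate $\alpha_X > 1/2 - 1/q'$) and the error contribution (rate $\gamma$), and one checks that both exceed $1/2 - 1/p = 1/2 - 2/\min\{q,q'\}$: the covariate term is immediate since $1/q' \le 2/\min\{q,q'\}$, while $\gamma \ge 1 > 1/2$ for every $q > 4$ makes the error term negligible. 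The main obstacle is the bookkeeping for $a_{ij}$: one must track the mixed-innovation coupling carefully to identify the two individual dependence measures correctly, and one must check that the H\"older exponents can be balanced so that each bounded factor is measured in a norm no larger than the available moment. Once these exponent choices are pinned down, the rate verification is routine.
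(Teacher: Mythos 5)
Your proof is correct and follows essentially the same route as the paper's: both arguments rest on the product-rule coupling decomposition $hg - h'g' = (h-h')g + h'(g-g')$, a H\"older/Cauchy--Schwarz bound at level $p$, the triangle inequality to absorb the differencing, and the decay rates from \ref{C-err3} and \ref{C-reg3} together with the observations that $\gamma \ge 1$ and $1/2 - 1/q' > 1/2 - 1/p$. The only organizational difference is that the paper first expands $\Delta h_{ij}(\mathcal{G}_{it})\Delta g_i(\mathcal{F}_{it}) - \Delta h_{ij}(\mathcal{G}_{it}')\Delta g_i(\mathcal{F}_{it}')$ into four undifferenced product terms and applies symmetric Cauchy--Schwarz at exponents $(2p,2p)$, whereas you keep the differenced factors intact (via the preliminary bound $\delta_q(\Delta g_i, s) \le \delta_q(g_i,s) + \delta_q(g_i,s-1)$) and balance asymmetric H\"older exponents; both yield the same bounds up to constants.
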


\begin{proof}[\textnormal{\textbf{Proof of Lemma \ref{lemma-bounds-dep-measure}.}}]
We only prove the first statement. The second one follows by analogous arguments. By the definition of the physical dependence measure and the Cauchy-Schwarz inequality, we have with $p = \min\{q,q^\prime\}/2$ that
\begin{align*}
\delta_p(a_{ij}, s) 
 & = \| a_{ij}(\mathcal{H}_{it,j}) - a_{ij}(\mathcal{H}_{it,j}^\prime) \|_p \\
 &= \| \Delta h_{ij}(\mathcal{G}_{it}) \Delta g_i(\mathcal{F}_{it}) -  \Delta h_{ij}(\mathcal{G}_{it}^\prime) \Delta g_i(\mathcal{F}_{it}^\prime) \|_p \\
 &\leq \| h_{ij}(\mathcal{G}_{it}) g_i(\mathcal{F}_{it}) - h_{ij}(\mathcal{G}_{it}^\prime) g_i(\mathcal{F}_{it}^\prime) \|_p \\
 &\quad + \| h_{ij}(\mathcal{G}_{it-1}) g_i(\mathcal{F}_{it-1}) - h_{ij}(\mathcal{G}_{it-1}^\prime) g_i(\mathcal{F}_{it-1}^\prime) \|_p \\
 &\quad + \| h_{ij}(\mathcal{G}_{it-1}) g_i(\mathcal{F}_{it}) - h_{ij}(\mathcal{G}_{it-1}^\prime) g_i(\mathcal{F}_{it}^\prime) \|_p \\
 &\quad + \| h_{ij}(\mathcal{G}_{it}) g_i(\mathcal{F}_{it-1}) - h_{ij}(\mathcal{G}_{it}^\prime) g_i(\mathcal{F}_{it-1}^\prime) \|_p \\
 & = \| \{ h_{ij}(\mathcal{G}_{it}) - h_{ij}(\mathcal{G}_{it}^\prime)\} g_i(\mathcal{F}_{it}) + h_{ij}(\mathcal{G}_{it}^\prime) \{g_i(\mathcal{F}_{it}) - g_i(\mathcal{F}_{it}^\prime)\} \|_p \\
 & \quad + \| \{ h_{ij}(\mathcal{G}_{it-1}) - h_{ij}(\mathcal{G}_{it-1}^\prime)\} g_i(\mathcal{F}_{it-1}) + h_{ij}(\mathcal{G}_{it-1}^\prime) \{g_i(\mathcal{F}_{it-1}) - g_i(\mathcal{F}_{it-1}^\prime)\} \|_p \\
 &\quad + \| \{ h_{ij}(\mathcal{G}_{it-1}) - h_{ij}(\mathcal{G}_{it-1}^\prime)\} g_i(\mathcal{F}_{it}) + h_{ij}(\mathcal{G}_{it-1}^\prime) \{g_i(\mathcal{F}_{it}) - g_i(\mathcal{F}_{it}^\prime)\} \|_p \\
 &\quad + \| \{ h_{ij}(\mathcal{G}_{it}) - h_{ij}(\mathcal{G}_{it}^\prime)\} g_i(\mathcal{F}_{it-1}) + h_{ij}(\mathcal{G}_{it}^\prime) \{g_i(\mathcal{F}_{it-1}) - g_i(\mathcal{F}_{it-1}^\prime)\} \|_p \\
 &\leq \delta_{2p}(h_{ij}, t) \| g_i(\mathcal{F}_t) \|_{2p} + \delta_{2p} (g_i, t) \| h_{ij}(\mathcal{G}_{it}^\prime) \|_{2p} \\
&\quad + \delta_{2p}(h_{ij}, t-1) \| g_i(\mathcal{F}_{t-1}) \|_{2p} + \delta_{2p} (g_i, t-1) \| h_{ij}(\mathcal{G}_{it-1}^\prime) \|_{2p} \\
&\quad + \delta_{2p}(h_{ij}, t-1) \| g_i(\mathcal{F}_t) \|_{2p} + \delta_{2p} (g_i, t) \| h_{ij}(\mathcal{G}_{it-1}^\prime) \|_{2p} \\
&\quad + \delta_{2p}(h_{ij}, t) \| g_i(\mathcal{F}_{t-1}) \|_{2p} + \delta_{2p} (g_i, t-1) \| h_{ij}(\mathcal{G}_{it}^\prime) \|_{2p}, 
\end{align*}
where $\mathcal{H}_{it,j}^\prime  = (\ldots, \nu_{i(-1),j}, \nu^\prime_{i0,j}, \nu_{i1,j}, \ldots, \nu_{it-1,j}, \nu_{it,j})$, $\mathcal{G}_{it,j}^\prime  = (\ldots, \xi_{i(-1),j}, \xi^\prime_{i0,j}, \xi_{i1,j}, \ldots, \linebreak \xi_{it-1,j}, \xi_{it,j})$ and $\mathcal{F}_{it}^\prime  = (\ldots, \eta_{i(-1)}, \eta^\prime_{i0}, \eta_{i1}, \ldots, \eta_{it-1}, \eta_{it})$ are coupled processes with $\nu_{i0,j}^\prime$, $\xi_{i0,j}^\prime$ and $\eta_{i0,j}^\prime$ being i.i.d.\ copies of $\nu_{i0,j}$, $\xi_{i0,j}$ and $\eta_{i0}$. From this and Assumptions \ref{C-err1}, \ref{C-err3}, \ref{C-reg1} and \ref{C-reg3}, it immediately follows that $\sum_{s=t}^\infty \delta_p(a_{ij}, s) = O(t^{-\alpha})$.
\end{proof}

We now show that the estimator $\widehat{\bfbeta}_i$ is $\sqrt{T}$-consistent for each $i$ under our conditions.

\begin{lemmaA}\label{lemma-beta-rate}
Let Assumptions \ref{C-err1}, \ref{C-err3} and \ref{C-reg1}--\ref{C-reg-err} be satisfied. Then for each $i$, it holds that
\[ \widehat{\bfbeta}_i - \bfbeta_i = O_p\Big(\frac{1}{\sqrt{T}}\Big). \]
\end{lemmaA}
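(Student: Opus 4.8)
The plan is to start from the least-squares formula \eqref{eq:beta:est} together with the first-differenced model \eqref{model_with_regs} and to decompose $\widehat{\bfbeta}_i - \bfbeta_i$ into a normalising-matrix inverse multiplied by a score-type vector, which in turn splits into a bias part (from the discretised trend) and a noise part (from the differenced errors). Writing $r_{it} = m_i(t/T) - m_i((t-1)/T)$, the differenced equation reads $\Delta Y_{it} = \bfbeta_i^\top \Delta \X_{it} + r_{it} + \Delta \varepsilon_{it}$, so substituting into \eqref{eq:beta:est} and factoring out $T^{-1}$ yields
\[ \widehat{\bfbeta}_i - \bfbeta_i = \Big( \frac{1}{T}\sum_{t=2}^T \Delta \X_{it} \Delta \X_{it}^\top \Big)^{-1} \Big\{ \frac{1}{T}\sum_{t=2}^T \Delta \X_{it}\, r_{it} + \frac{1}{T}\sum_{t=2}^T \Delta \X_{it} \Delta \varepsilon_{it} \Big\}. \]
I would then bound the three pieces separately.

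For the normalising matrix, I would observe that its $(k,l)$-entry is the time average of $b_{ikl}(\mathcal{G}_{it})$ in the notation preceding Lemma \ref{lemma-bounds-dep-measure}. That lemma shows that the centred process $\{ b_{ikl}(\mathcal{G}_{it}) - \ex[b_{ikl}(\mathcal{G}_{it})] \}$ meets the decay hypothesis of Lemma \ref{lemma-wlln} with $p = q^\prime/2 > 2$ (since $q^\prime > 4$), so that $\frac{1}{T}\sum_{t=2}^T \Delta \X_{it} \Delta \X_{it}^\top = \ex[\Delta \X_{it} \Delta \X_{it}^\top] + O_p(T^{-1/2})$. As $\ex[\Delta \X_{it} \Delta \X_{it}^\top]$ is invertible by Assumption \ref{C-reg2}, the continuous mapping theorem gives that the inverse is well-defined with probability tending to one and equals $(\ex[\Delta \X_{it} \Delta \X_{it}^\top])^{-1} + o_p(1) = O_p(1)$.

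The bias part is negligible: Lipschitz continuity of $m_i$ (Assumption \ref{C-trend}) gives $|r_{it}| \le L/T$, so $\| \frac{1}{T}\sum_{t=2}^T \Delta \X_{it}\, r_{it} \| \le (L/T)\,\frac{1}{T}\sum_{t=2}^T \| \Delta \X_{it} \| = O_p(T^{-1})$, where the time average is $O_p(1)$ because $\ex\|\Delta\X_{it}\| < \infty$ under Assumption \ref{C-reg1}. The main contribution is the noise part, whose $j$-th component is the time average of $a_{ij}(\mathcal{H}_{it}) = (\Delta \X_{it} \Delta \varepsilon_{it})_j$; this has mean zero by the orthogonality condition \ref{C-reg-err}. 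Lemma \ref{lemma-bounds-dep-measure} again places $\{a_{ij}(\mathcal{H}_{it})\}$ within the scope of Lemma \ref{lemma-wlln}, now with $p = \min\{q,q^\prime\}/2 > 2$, whence $\frac{1}{\sqrt T}\sum_{t=2}^T \Delta \X_{it} \Delta \varepsilon_{it} = O_p(1)$ and so $\frac{1}{T}\sum_{t=2}^T \Delta \X_{it} \Delta \varepsilon_{it} = O_p(T^{-1/2})$. Combining the three bounds gives $\widehat{\bfbeta}_i - \bfbeta_i = O_p(1)\big\{O_p(T^{-1}) + O_p(T^{-1/2})\big\} = O_p(T^{-1/2})$, as claimed.

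Given the auxiliary results, the argument is largely an assembly exercise, and I do not anticipate a genuine analytic obstacle. The only step requiring real care is confirming that the product processes $\{\Delta \X_{it} \Delta \X_{it}^\top\}$ and $\{\Delta \X_{it} \Delta \varepsilon_{it}\}$ satisfy the finite dependence-adjusted norm and decay exponent demanded by the Nagaev-type inequality; this is exactly what Lemma \ref{lemma-bounds-dep-measure} supplies through its Cauchy--Schwarz splitting of the physical dependence measures. The one point that invokes a structural model assumption rather than pure moment bookkeeping is the vanishing mean of the noise term, which relies entirely on \ref{C-reg-err}; without it the estimator would carry an $O_p(1)$ asymptotic bias.
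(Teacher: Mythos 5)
Your proposal is correct and follows essentially the same route as the paper's proof: the identical decomposition of $\widehat{\bfbeta}_i - \bfbeta_i$ into the inverse normalising matrix times a bias term (bounded via Lipschitz continuity of $m_i$ and Markov's inequality) plus a noise term (mean zero by \ref{C-reg-err}, controlled via Lemma \ref{lemma-bounds-dep-measure} feeding into the Nagaev-type bound of Lemma \ref{lemma-wlln}), with invertibility from \ref{C-reg2} handling the matrix inverse. The only differences are cosmetic (you normalise by $T^{-1}$ rather than working with $\sqrt{T}(\widehat{\bfbeta}_i - \bfbeta_i)$, and cite the continuous mapping theorem where the paper cites Slutsky's lemma).
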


\begin{proof}[\textnormal{\textbf{Proof of Lemma \ref{lemma-beta-rate}.}}]
The estimator $\widehat{\bfbeta}_i$ can be written as
\begin{align*}
\widehat{\bfbeta}_i &= \Big( \sum_{t=2}^T \Delta \X_{it} \Delta \X_{it}^\top \Big)^{-1} \sum_{t=2}^T \Delta \X_{it} \Delta Y_{it} \\
& =  \Big( \sum_{t=2}^T \Delta \X_{it} \Delta \X_{it}^\top \Big)^{-1} \sum_{t=2}^T \Delta \X_{it} \bigg(\Delta \X_{it}^\top \bfbeta_i +  \Delta m_{it}+ \Delta \varepsilon_{it} \bigg) \\
&= \bfbeta_i + \Big( \sum_{t=2}^T \Delta \X_{it} \Delta \X_{it}^\top \Big)^{-1} \sum_{t=2}^T \Delta \X_{it} \Delta m_{it} +  \Big( \sum_{t=2}^T \Delta \X_{it} \Delta \X_{it}^\top \Big)^{-1} \sum_{t=2}^T \Delta \X_{it} \Delta \varepsilon_{it}, 
\end{align*}
where $\Delta \X_{it} = \X_{it} - \X_{it-1}$, $\Delta \varepsilon_{it} = \varepsilon_{it} - \varepsilon_{it-1}$ and $\Delta m_{it} = m_i (\frac{t}{T}) - m_i(\frac{t-1}{T})$. Hence, 
\begin{align}
 \sqrt{T}( \widehat{\bfbeta}_i - \bfbeta_i) = &\Big( \frac{1}{T}\sum_{t=2}^T \Delta \X_{it} \Delta \X_{it}^\top \Big)^{-1} \frac{1}{\sqrt{T}}\sum_{t=2}^T \Delta \X_{it} \Delta m_{it} \nonumber \\
&\quad+  \Big(\frac{1}{T} \sum_{t=2}^T \Delta \X_{it} \Delta \X_{it}^\top \Big)^{-1}\frac{1}{\sqrt{T}} \sum_{t=2}^T \Delta \X_{it} \Delta \varepsilon_{it}. \label{theo:beta:proof1}
\end{align}
In what follows, we show that 
\begin{align}
\frac{1}{\sqrt{T}} \sum_{t=2}^T \Delta \X_{it} \Delta \varepsilon_{it} & = O_p(1) \label{lemma-beta-rate-claim1} \\
\Big( \frac{1}{T}\sum_{t=2}^T \Delta \X_{it} \Delta \X_{it}^\top \Big)^{-1} & = O_p(1) \label{lemma-beta-rate-claim2} \\
\frac{1}{\sqrt{T}} \sum_{t=2}^T \Delta \X_{it} \Delta m_{it} & = O_p\Big(\frac{1}{\sqrt{T}}\Big). \label{lemma-beta-rate-claim3}
\end{align} 
Lemma \ref{lemma-beta-rate} follows from applying these three statements together with standard arguments to formula \eqref{theo:beta:proof1}.

Since $\ex[\Delta \X_{it} \Delta \varepsilon_{it}] = 0$ by \ref{C-reg-err} and $\sum_{s=t}^\infty \delta_p(a_{ij}, s) = O(t^{-\alpha})$ for some $p > 2$, $\alpha > 1/2 - 1/p$ and all $j$ by Lemma \ref{lemma-bounds-dep-measure}, the claim \eqref{lemma-beta-rate-claim1} follows upon applying Lemma \ref{lemma-wlln}. Another application of Lemma \ref{lemma-wlln} yields that 
\[ \frac{1}{T}\sum_{t=2}^T \Big\{ \Delta \X_{it} \Delta \X_{it}^\top - \ex[\Delta \X_{it} \Delta \X_{it}^\top] \Big\} = O_p\Big(\frac{1}{\sqrt{T}}\Big). \]
As $\ex[\Delta \X_{it} \Delta \X_{it}^\top]$ is invertible, we can invoke Slutsky's lemma to obtain \eqref{lemma-beta-rate-claim2}. By assumption, $m_i$ is Lipschitz continuous, which implies that $|\Delta m_{it}| = |m_i (\frac{t}{T}) - m_i (\frac{t-1}{T}) | \leq C/T$ for all $t \in \{1, \ldots, T\}$ and some constant $C > 0$. Hence, 
\begin{align*}
\Big| \frac{1}{\sqrt{T}}\sum_{t=2}^T \Delta X_{it,j} \Delta m_{it}\Big| &\leq \frac{1}{\sqrt{T}}\sum_{t=2}^T \big|\Delta X_{it,j} \big| \cdot \big| \Delta m_{it} \big| \\
	& \leq \frac{C}{\sqrt{T}} \cdot \frac{1}{T} \sum_{t=2}^T \left|\Delta X_{it,j} \right| = O_p\Big(\frac{1}{\sqrt{T}}\Big),
\end{align*}
where we have used that $T^{-1} \sum_{t=2}^T |\Delta X_{it,j}| = O_p(1)$ by Markov's inequality. This yields \eqref{lemma-beta-rate-claim3}.
\end{proof}

\begin{lemmaA}\label{lemmaA:lrv}
Let $s_T \asymp T^{1/3}$. Under Assumptions \ref{C-err1}--\ref{C-reg-err}, 
$$\widehat{\sigma}_i^2 = \sigma_i^2 + O_p(T^{-1/3})$$
for each $i$, where $\widehat{\sigma}_i^2$ is the subseries variance estimator of $\sigma_i^2$ introduced in \eqref{eq:lrv}.
\end{lemmaA}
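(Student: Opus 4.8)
The plan is to insert the model equation \eqref{eq:model_full} into the estimator \eqref{eq:lrv} and to split it into a leading term that depends only on the errors and two asymptotically negligible remainders coming from the estimation of $\bfbeta_i$ and from the trend $m_i$. Abbreviating the $m$-th summand inside the square brackets of \eqref{eq:lrv} by $\xi_m$ and substituting $Y_{it}=\bfbeta_i^\top\X_{it}+m_i(t/T)+\alpha_i+\varepsilon_{it}$, the intercept $\alpha_i$ and the true linear part $\bfbeta_i^\top\X_{it}$ cancel, leaving $\xi_m = D_m + R_m^{\beta} + R_m^{\mathrm{tr}}$, where
\[ D_m = \sum_{t=1}^{s_T}\big(\varepsilon_{i(t+ms_T)}-\varepsilon_{i(t+(m-1)s_T)}\big), \qquad R_m^{\beta}=(\bfbeta_i-\widehat{\bfbeta}_i)^\top\sum_{t=1}^{s_T}\big(\X_{i(t+ms_T)}-\X_{i(t+(m-1)s_T)}\big), \]
and $R_m^{\mathrm{tr}}=\sum_{t=1}^{s_T}\{m_i(\tfrac{t+ms_T}{T})-m_i(\tfrac{t+(m-1)s_T}{T})\}$. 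Expanding $\xi_m^2=D_m^2+(R_m^\beta+R_m^{\mathrm{tr}})^2+2D_m(R_m^\beta+R_m^{\mathrm{tr}})$ and averaging, it suffices to analyse the error part $(2(M-1)s_T)^{-1}\sum_m D_m^2$, the remainder $(2(M-1)s_T)^{-1}\sum_m (R_m^{\beta}+R_m^{\mathrm{tr}})^2$, and the cross term separately.

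The core of the argument is the error part. Writing $D_m=B_{m+1}-B_m$ with the block sums $B_m=\sum_{t=1}^{s_T}\varepsilon_{i(t+(m-1)s_T)}$, this is precisely the differenced subseries estimator of the long-run variance $\sigma_i^2=\sum_\ell \gamma_i(\ell)$, $\gamma_i(\ell)=\cov(\varepsilon_{i0},\varepsilon_{i\ell})$, studied by \cite{Carlstein1986}, \cite{WuZhao2007} and \cite{Kim2016}. First I would compute the bias: a direct second-moment calculation gives $\ex[D_m^2]=\ex[B_{m+1}^2]+\ex[B_m^2]-2\ex[B_{m+1}B_m]$, and the polynomial decay in \ref{C-err3} (which for $q>4$ forces $\gamma>1$ and hence $\sum_\ell |\ell|\,|\gamma_i(\ell)|<\infty$) pins down $(2s_T)^{-1}\ex[D_m^2]=\sigma_i^2+O(s_T^{-1})$, the boundary covariance $\ex[B_{m+1}B_m]$ being a negligible $O(1)$ term. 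Next I would bound the stochastic fluctuation through a variance computation: the moment bound of Proposition \ref{theo-wu2016} together with $q>4$ from \ref{C-err1} yields $\ex[D_m^4]=O(s_T^2)$ and cross-block covariances $\cov(D_m^2,D_{m'}^2)$ decaying in $|m-m'|$, so that $\var\big((2(M-1)s_T)^{-1}\sum_m D_m^2\big)=O(M^{-1})=O(s_T/T)$ because $M=\lfloor T/s_T\rfloor\asymp T^{2/3}$. With $s_T\asymp T^{1/3}$ the bias and the standard deviation are both of order $T^{-1/3}$, giving $(2(M-1)s_T)^{-1}\sum_m D_m^2 = \sigma_i^2 + O_p(T^{-1/3})$.

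It then remains to show that the remainders are of smaller order. For the trend part, Lipschitz continuity (\ref{C-trend}) gives $|R_m^{\mathrm{tr}}|\le L\,s_T^2/T$, so its contribution is $O(s_T^3/T^2)=O(T^{-1})$ for $s_T\asymp T^{1/3}$. For the covariate part, set $U_m=\sum_{t=1}^{s_T}(\X_{i(t+ms_T)}-\X_{i(t+(m-1)s_T)})$; bounding the contribution by $\|\widehat{\bfbeta}_i-\bfbeta_i\|^2\,(2(M-1)s_T)^{-1}\sum_m\|U_m\|^2$ and using $\ex\|U_m\|^2=O(s_T)$ (via \ref{C-reg3} and Lemma \ref{lemma-wlln}), the sum $\sum_m\|U_m\|^2$ is $O_p(M s_T)=O_p(T)$, and combined with $\widehat{\bfbeta}_i-\bfbeta_i=O_p(T^{-1/2})$ from Lemma \ref{lemma-beta-rate} this piece is $O_p(T^{-1})$. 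A Cauchy--Schwarz estimate renders the cross term $O_p(T^{-1/2})$. All three remainders are thus $o_p(T^{-1/3})$, and adding them to the leading term yields $\widehat{\sigma}_i^2=\sigma_i^2+O_p(T^{-1/3})$.

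The main obstacle is the leading-term analysis: establishing the exact $T^{-1/3}$ rate requires translating the physical-dependence bounds of \ref{C-err3} into summability of the weighted autocovariances (for the $O(s_T^{-1})$ bias) and into the fourth-moment and cross-block covariance bounds that deliver $\var(\cdot)=O(M^{-1})$. This is exactly the content supplied by the subseries-estimator theory of \cite{WuZhao2007} and \cite{Kim2016}, so in practice I would invoke their results for this step and devote the new work to the covariate and trend corrections above, which are routine once Lemma \ref{lemma-beta-rate} is available.
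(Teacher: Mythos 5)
Your proposal is correct and takes essentially the same route as the paper's proof: the paper likewise splits $\widehat{\sigma}_i^2 = \widehat{\sigma}_{i,A}^2 + \widehat{\sigma}_{i,B}^2 - \widehat{\sigma}_{i,C}^2$, where $\widehat{\sigma}_{i,A}^2$ is the subseries statistic built from $Y_{it}^\circ = m_i(t/T) + \varepsilon_{it}$ (trend kept inside the leading term and its $\sigma_i^2 + O_p(T^{-1/3})$ rate delegated to \cite{Carlstein1986} and \cite{WuZhao2007}, whose theory covers trend plus noise), while the quadratic term in $\widehat{\bfbeta}_i - \bfbeta_i$ and the cross term are bounded exactly as you do via Lemma \ref{lemma-beta-rate}. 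One caution: your side claim that $q>4$ (hence $\gamma>1$) forces $\sum_\ell |\ell|\,|\gamma_i(\ell)| < \infty$ does not follow (that summability needs roughly $\gamma > 2$), but this is immaterial to your argument since, like the paper, you ultimately invoke the subseries-estimator literature for the leading-term rate rather than rely on that sketch.
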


\begin{proof}[\textnormal{\textbf{Proof of Lemma \ref{lemmaA:lrv}}}]
Let $Y_{it}^\circ := m_i(t/T) + \varepsilon_{it}$. Using simple arithmetic calculations, we can rewrite $\widehat{\sigma}_i^2$ as $\widehat{\sigma}_i^2 = \widehat{\sigma}_{i,A}^2 + \widehat{\sigma}_{i,B}^2 - \widehat{\sigma}_{i,C}^2$, where
\begin{align}
\widehat{\sigma}_{i,A}^2 &= \frac{1}{2(M-1)s_T}\sum_{m=1}^M \left[\sum_{t = 1}^{s_T} \left(Y_{i(t + ms_T)}^\circ - Y_{i(t + (m-1)s_T)}^\circ\right)\right]^2 \nonumber\\
\widehat{\sigma}_{i,B}^2 &= \frac{1}{2(M-1)s_T}\sum_{m=1}^M \left[\sum_{t = 1}^{s_T} (\widehat{\bfbeta}_i - \bfbeta_i)^\top\left(\X_{i(t+ms_T)} - \X_{i(t+(m-1)s_T)}\right) \right]^2 \nonumber \\
\widehat{\sigma}_{i,C}^2 &= \frac{1}{(M-1)s_T}\sum_{m=1}^M \bigg[\sum_{t = 1}^{s_T}  \left(Y_{i(t + ms_T)}^\circ - Y_{i(t + (m-1)s_T)}^\circ\right) \nonumber \\ & \phantom{\quad - \frac{1}{(M-1)s_T}\sum_{m=1}^M \bigg[} \times \sum_{t = 1}^{s_T} (\widehat{\bfbeta}_i - \bfbeta_i)^\top\left(\X_{i(t+ms_T)} - \X_{i(t+(m-1)s_T)}\right) \bigg]. \nonumber
\end{align}
By \cite{Carlstein1986} and \cite{WuZhao2007}, we have $\widehat{\sigma}_{i,A}^2 = \sigma_i^2 + O_p(T^{-1/3})$. Moreover, under our assumptions, it is straightforward to see that $\widehat{\sigma}_{i,B}^2 = O_p(T^{-1/3})$ and $\widehat{\sigma}_{i,C}^2 = O_p(T^{-1/3})$.
\end{proof}

\subsection*{Proof of Theorem \ref{theo:stat:global}}\label{subsec-appendix-stat-equality}

We first summarize the main proof strategy, which splits up into five steps, and then fill in the details. We in particular defer the proofs of some intermediate results to the end of the section.

\subsubsection*{Step 1}

To start with, we consider a simplified setting where the parameter vectors $\bfbeta_i$ are known. In this case, we can replace the estimators $\widehat{\bfbeta}_i$ in the definition of the statistic $\widehat{\Phi}_{n,T}$ by the true vectors $\bfbeta_i$ themselves. This leads to the simpler statistic 
\[ \doublehattwo{\Phi}_{n,T} = \max_{1 \le i < j \le n} \max_{(u,h) \in \mathcal{G}_T}\Bigg\{ \bigg| \frac{\doublehat{\phi}_{ij,T}(u,h)} {\{ \doublehattwo{\sigma}_i^2 + \doublehattwo{\sigma}_j^2 \}^{1/2}} \bigg| - \lambda(h)\Bigg\}, \]
where
\[ \doublehat{\phi}_{ij,T}(u,h) = \sum_{t=1}^T w_{t,T}(u,h) \big\{ (\varepsilon_{it} - \bar{\varepsilon}_i) - (\varepsilon_{jt} - \bar{\varepsilon}_j)  \big\} \]
and $\doublehattwo{\sigma}_i^2$ is computed in exactly the same way as $\widehat{\sigma}_i^2$ except that all occurrences of $\widehat{\bfbeta}_i$ are replaced by $\bfbeta_i$. By assumption, $\widehat{\sigma}_i^2 = \sigma^2_i + o_p(\rho_T)$ with $\rho_T = o(1/\log T)$. For most estimators of $\sigma_i^2$ including those discussed in Section \ref{sec:theo}, this assumption immediately implies that $\doublehattwo{\sigma}_i^2 = \sigma^2_i + o_p(\rho_T)$ as well. In the sequel, we thus take for granted that the estimator $\doublehattwo{\sigma}_i^2$ has this property.

We now have a closer look at the statistic $\doublehattwo{\Phi}_{n,T}$. We in particular show that there exists an identically distributed version $\widetilde{\Phi}_{n, T}$ of $\doublehattwo{\Phi}_{n, T}$ which is close to the Gaussian statistic $\Phi_{n, T}$ from \eqref{eq:Phi}. More formally, we prove the following result.  
\begin{propA}\label{propA:strong_approx}
There exist statistics $\{ \widetilde{\Phi}_{n,T}: T =1,2,\ldots \}$ with the following two properties: (i) $\widetilde{\Phi}_{n, T}$ has the same distribution as $\doublehattwo{\Phi}_{n, T}$ for any $T$, and (ii)
\begin{equation*}
\big| \widetilde{\Phi}_{n, T} - \Phi_{n,T} \big| = o_p(\delta_T),
\end{equation*}
where $\delta_T = T^{1/q} / \sqrt{T h_{\min}} + \rho_T \sqrt{\log T}$ and $\Phi_{n,T}$ is a Gaussian statistic as defined in \eqref{eq:Phi}. 
\end{propA}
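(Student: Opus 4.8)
The plan is to replace the dependent errors by Gaussian proxies through a strong approximation and then to carry the resulting coupling through the kernel weights and the variance normalisation. First I would invoke the Koml\'os--Major--Tusn\'ady-type approximation for weakly dependent sequences of \cite{BerkesLiuWu2014}. Since the processes $\mathcal{E}_i$ are independent across $i$ by \ref{C-err2} and each of them admits the causal representation and dependence decay prescribed in \ref{C-err1}--\ref{C-err3}, one can construct on a sufficiently rich probability space a version $\{\tilde{\varepsilon}_{it}\}$ of the errors, distributionally identical to $\{\varepsilon_{it}\}$, together with standard normal variables $\{Z_{it}\}$ (independent across $i$) such that, for every $i$,
\[ \max_{1 \le s \le T}\Big| \sum_{t=1}^s \tilde{\varepsilon}_{it} - \sigma_i \sum_{t=1}^s Z_{it} \Big| = o_p\big(T^{1/q}\big). \]
The somewhat delicate exponents in \ref{C-err3} are calibrated precisely so that this coupling holds with the sharp partial-sum rate $o_p(T^{1/q})$. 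Taking $\widetilde{\Phi}_{n,T}$ to be $\doublehattwo{\Phi}_{n,T}$ recomputed from $\{\tilde{\varepsilon}_{it}\}$ yields property (i) at once, and the variables $\{Z_{it}\}$ are exactly those entering the Gaussian statistic $\Phi_{n,T}$ in \eqref{eq:Phi}.

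The core estimate is then to show that the kernel averages formed from $\tilde{\varepsilon}_{it}$ and from $\sigma_i Z_{it}$ are uniformly close. Writing $E_{is} = \sum_{t=1}^s(\tilde{\varepsilon}_{it} - \sigma_i Z_{it})$ and applying Abel summation,
\[ \Big| \sum_{t=1}^T w_{t,T}(u,h)\big(\tilde{\varepsilon}_{it} - \sigma_i Z_{it}\big)\Big| \le \Big( \sum_{t=1}^{T-1}\big|w_{t+1,T}(u,h) - w_{t,T}(u,h)\big| + \big|w_{T,T}(u,h)\big|\Big) \max_{1 \le s \le T}\big|E_{is}\big|. \]
Because $K$ is Lipschitz by \ref{C-ker} and the weights are normalised so that $\sum_t w_{t,T}(u,h)^2 = 1$, each weight is of order $(Th)^{-1/2}$, consecutive increments are of order $(Th)^{-3/2}$, and there are $O(Th)$ nonzero summands; hence the total-variation factor in brackets is $O((Th)^{-1/2})$ uniformly in $(u,h)$. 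Combined with the coupling rate this gives $o_p(T^{1/q}/\sqrt{Th_{\min}})$, uniformly over $(u,h) \in \grid$ and over the finitely many $i$ (no union bound over $\grid$ is needed, since $\max_s|E_{is}|$ is free of $(u,h)$). The centred pieces contribute $\big(\bar{\tilde{\varepsilon}}_i - \sigma_i \bar{Z}_i\big)\sum_t w_{t,T}(u,h) = o_p(T^{1/q-1})\cdot O(\sqrt{Th})$, which is of strictly smaller order because $h \le h_{\max} = o(1)$.

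Finally I would pass to the fully normalised statistics. Using that both $x \mapsto |x|$ and the maximum are $1$-Lipschitz,
\[ \big| \widetilde{\Phi}_{n,T} - \Phi_{n,T}\big| \le \max_{1\le i<j\le n,\,(u,h)\in\grid} \Big| \frac{\doublehat{\phi}_{ij,T}(u,h)}{(\doublehattwo{\sigma}_i^2 + \doublehattwo{\sigma}_j^2)^{1/2}} - \frac{\phi_{ij,T}(u,h)}{(\sigma_i^2+\sigma_j^2)^{1/2}}\Big|, \]
where on the left the numerators are built from $\tilde{\varepsilon}_{it}$. Splitting the difference into $(\doublehat{\phi}_{ij,T} - \phi_{ij,T})/(\doublehattwo{\sigma}_i^2+\doublehattwo{\sigma}_j^2)^{1/2}$ plus $\phi_{ij,T}\{(\doublehattwo{\sigma}_i^2+\doublehattwo{\sigma}_j^2)^{-1/2} - (\sigma_i^2+\sigma_j^2)^{-1/2}\}$, the first term is $o_p(T^{1/q}/\sqrt{Th_{\min}})$ by the previous step, since $\sigma^2 > 0$ keeps the denominator bounded away from zero. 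For the second term, the assumption $\doublehattwo{\sigma}_i^2 = \sigma_i^2 + o_p(\rho_T)$ gives $(\doublehattwo{\sigma}_i^2+\doublehattwo{\sigma}_j^2)^{-1/2} - (\sigma_i^2+\sigma_j^2)^{-1/2} = o_p(\rho_T)$, while $\phi_{ij,T}(u,h)/(\sigma_i^2+\sigma_j^2)^{1/2}$ is a centred Gaussian with variance bounded by a constant, so a Gaussian maximal inequality over the $O(T^\theta)$ points of $\grid$ (with $\theta$ from \ref{C-grid}) and the finitely many pairs yields $\max|\phi_{ij,T}/(\sigma_i^2+\sigma_j^2)^{1/2}| = O_p(\sqrt{\log T})$. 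Hence the second term is $o_p(\rho_T\sqrt{\log T})$, and the two contributions together give $|\widetilde{\Phi}_{n,T} - \Phi_{n,T}| = o_p(\delta_T)$ with $\delta_T = T^{1/q}/\sqrt{Th_{\min}} + \rho_T\sqrt{\log T}$.

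The technical heart and the main obstacle is the first step: verifying that \ref{C-err1}--\ref{C-err3} place the error processes exactly within the scope of \cite{BerkesLiuWu2014}, so that the coupling can be realised jointly and independently across the $n$ rows with the sharp rate $o_p(T^{1/q})$. Once this rate is secured, the summation-by-parts bound and the normalisation argument are comparatively routine, the only real care being to keep every bound uniform over the polynomially large grid $\grid$ and over all scales down to $h_{\min}$.
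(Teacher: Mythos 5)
Your proposal is correct and follows essentially the same route as the paper's proof: the Berkes--Liu--Wu strong approximation applied independently across the $n$ rows, a summation-by-parts bound with the total-variation factor $O((Th_{\min})^{-1/2})$ of the kernel weights, and a final split into a numerator-coupling term of order $o_p(T^{1/q}/\sqrt{Th_{\min}})$ plus a variance-normalisation term of order $o_p(\rho_T\sqrt{\log T})$ controlled by the Gaussian maximal bound $O_p(\sqrt{\log T})$ over the polynomially large grid. The only cosmetic differences are that the paper absorbs the centring terms into the Abel partial sums (whose terminal value vanishes) rather than bounding $\bar{\tilde{\varepsilon}}_i - \sigma_i\bar{Z}_i$ separately, and organises the last step through an intermediate statistic $\Phi_{n,T}^{\diamond}$ instead of a termwise decomposition.
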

The proof makes heavy use of strong approximation theory for dependent processes. As it is quite technical, it is postponed to the end of this section.

\subsubsection*{Step 2}

In this step, we establish some properties of the Gaussian statistic $\Phi_{n,T}$. Specifically, we prove the following result.  
\begin{propA}\label{propA:anticon}
It holds that 
\begin{equation*}
\sup_{x \in \reals} \pr \big( | \Phi_{n,T} - x | \le \delta_T \big) = o(1),
\end{equation*}
where $\delta_T = T^{1/q} / \sqrt{T h_{\min}} + \rho_T \sqrt{\log T}$.
\end{propA}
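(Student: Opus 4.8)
The plan is to read $\Phi_{n,T}$ as the maximum of a centered Gaussian vector (with the deterministic scale corrections $\lambda(h)$ absorbed into the thresholds) and then to apply the anti-concentration inequality of \cite{Nazarov2003}. Since $|a| = \max\{a,-a\}$, I would first write
\[
\Phi_{n,T} = \max_{k \in \mathcal{K}} \big\{ \widetilde{Y}_k - \lambda(h_k) \big\}, \qquad
\widetilde{Y}_k := \frac{\varsigma_k\, \phi_{i_k j_k, T}(u_k, h_k)}{(\sigma_{i_k}^2 + \sigma_{j_k}^2)^{1/2}},
\]
where $\mathcal{K}$ indexes all tuples $(u,h,i,j,\varsigma)$ with $(u,h)\in\grid$, $1\le i<j\le n$ and a sign $\varsigma\in\{+1,-1\}$. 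Each $\widetilde{Y}_k$ is a linear functional of the i.i.d.\ standard normal array $\{Z_{it}\}$, so $(\widetilde{Y}_k)_{k\in\mathcal{K}}$ is a centered Gaussian vector; by \ref{C-grid} its dimension is $p := |\mathcal{K}| = 2\binom{n}{2}|\grid| = O(T^\theta)$, whence $\log p = O(\log T)$.

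The key quantitative input is a uniform lower bound on the variances. Writing $A(u,h) = \sum_{t=1}^T w_{t,T}(u,h)$ and using independence across $i$ together with $\sum_t w_{t,T}^2 = 1$, a direct computation gives $\var\big(\sum_t w_{t,T}(Z_{it}-\bar Z_i)\big) = \sum_t (w_{t,T} - A/T)^2 = 1 - A(u,h)^2/T$, so that
\[
\var(\widetilde{Y}_k) = \frac{(\sigma_{i_k}^2+\sigma_{j_k}^2)\,(1 - A(u_k,h_k)^2/T)}{\sigma_{i_k}^2+\sigma_{j_k}^2} = 1 - \frac{A(u_k,h_k)^2}{T},
\]
independently of $i,j$ and of the $\sigma$'s. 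Because $K$ has support $[-1,1]$, at most $2Th+1$ of the weights $w_{t,T}(u,h)$ are nonzero, so Cauchy--Schwarz yields $A(u,h)^2 \le (2Th+1)\sum_t w_{t,T}^2 = O(Th)$, hence $A(u,h)^2/T = O(h) = O(h_{\max}) = o(1)$ uniformly over $\grid$ by \ref{C-h}. Thus $\underline{\sigma}^2 := \min_{k}\var(\widetilde{Y}_k) \ge 1/2$ for all large $T$.

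Next I would invoke Nazarov's inequality. Since $\pr(\Phi_{n,T} \le x) = \pr\big(\widetilde{Y}_k \le x + \lambda(h_k)\ \forall k\big)$, applying the anti-concentration bound of \cite{Nazarov2003} to the centered vector $(\widetilde{Y}_k)$ with the thresholds $y_k = x + \lambda(h_k)$ gives, uniformly in $x$,
\[
\pr\big(x < \Phi_{n,T} \le x + \epsilon\big) \le \frac{C\,\epsilon}{\underline{\sigma}}\big(\sqrt{\log p} + 1\big).
\]
Taking $\epsilon = 2\delta_T$ and using $\underline{\sigma}\ge 1/\sqrt2$ and $\log p = O(\log T)$ yields $\sup_{x}\pr(|\Phi_{n,T} - x| \le \delta_T) \le C'\,\delta_T\sqrt{\log T}$. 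It then remains to verify $\delta_T\sqrt{\log T} = o(1)$: the term $\rho_T(\sqrt{\log T})^2 = \rho_T\log T \to 0$ since $\rho_T = o(1/\log T)$, while the first term satisfies $\big(T^{1/q}\sqrt{\log T}/\sqrt{Th_{\min}}\big)^2 = T^{2/q}\log T/(Th_{\min}) \to 0$, which is precisely the content of \ref{C-h} ($h_{\min}\gg T^{-(1-2/q)}\log T$). This closes the argument.

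The two places requiring care — and hence the main (though short) obstacles — are the uniform variance lower bound, which rests on the normalisation $\|w_{\cdot,T}(u,h)\|_2 = 1$ together with the Cauchy--Schwarz support count, and the observation that the deterministic corrections $\lambda(h_k)$ are harmless because they can be pushed into the thresholds of Nazarov's inequality rather than shifting the means of the Gaussian vector.
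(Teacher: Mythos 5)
Your proof is correct and follows essentially the same route as the paper's: rewrite $\Phi_{n,T}$ as the maximum of a centered Gaussian vector of polynomial dimension (duplicating coordinates with signs to handle the absolute values and pushing the corrections $\lambda(h)$ into the thresholds), apply Nazarov's anti-concentration inequality, and conclude from $\delta_T \sqrt{\log T} = o(1)$, which holds under \ref{C-h} and $\rho_T = o(1/\log T)$. If anything, you make explicit two points the paper only asserts, namely the uniform variance lower bound $\var(\widetilde{Y}_k) = 1 - A(u,h)^2/T \ge 1/2$ and the final rate verification.
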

Roughly speaking, this proposition says that the random variable $\Phi_{n,T}$ does not concentrate too strongly in small regions of the form $[x-\delta_T,x+\delta_T]$ with $\delta_T$ converging to $0$. The main technical tool for deriving it are anti-concentration bounds for Gaussian random vectors. The details are provided below.

\subsubsection*{Step 3}

We now use Steps 1 and 2 to prove that 
\begin{equation}\label{eq:claim-step3}
\sup_{x \in \reals} \big| \pr(\doublehattwo{\Phi}_{n, T} \le x) - \pr(\Phi_{n,T} \le x) \big| = o(1). 
\end{equation}
\begin{proof}[\textnormal{\textbf{Proof of (\ref{eq:claim-step3}).}}]
It holds that
\begin{align*}
 & \sup_{x \in \reals} \Big| \pr(\doublehattwo{\Phi}_{n, T} \le x) - \pr(\Phi_{n,T} \le x) \Big| \\
 & = \sup_{x \in \reals} \Big| \pr(\widetilde{\Phi}_{n, T} \le x) - \pr(\Phi_{n,T} \le x) \Big| \\
 & = \sup_{x \in \reals} \Big| \ex \Big[ \ind(\widetilde{\Phi}_{n, T} \le x) - \ind (\Phi_{n,T} \le x) \Big] \Big| \\
 & \le \sup_{x \in \reals} \Big| \ex \Big[ \big\{ \ind(\widetilde{\Phi}_{n, T} \le x) - \ind (\Phi_{n,T} \le x) \big\} \ind \big( |\widetilde{\Phi}_{n, T} - \Phi_{n,T}| \le \delta_T \big) \Big] \Big| \\
 & \quad + \ex \Big[ \ind \big( |\widetilde{\Phi}_{n, T} - \Phi_{n,T}| > \delta_T \big) \Big].
\end{align*} 
Moreover, since  
\[ \ex \Big[ \ind \big( |\widetilde{\Phi}_{n, T} - \Phi_{n,T}| > \delta_T \big) \Big] = \pr  \big( |\widetilde{\Phi}_{n, T} - \Phi_{n,T}| > \delta_T \big) = o(1) \]
by Step 1 and 
\begin{align*}
 & \sup_{x \in \reals} \Big| \ex \Big[ \big\{ \ind(\widetilde{\Phi}_{n, T} \le x) - \ind (\Phi_{n,T} \le x) \big\} \ind \big( |\widetilde{\Phi}_{n, T} - \Phi_{n,T}| \le \delta_T \big) \Big] \Big| \\
 & \le \sup_{x \in \reals} \ex \Big[ \ind \big( |\Phi_{n, T} - x| \le \delta_T, |\widetilde{\Phi}_{n, T} - \Phi_{n,T}| \le \delta_T \big) \Big] \\
 & \le \sup_{x \in \reals} \pr \big( |\Phi_{n, T} - x| \le \delta_T \big) = o(1)
\end{align*}
by Step 2, we arrive at \eqref{eq:claim-step3}.
\end{proof}

\subsubsection*{Step 4}

In this step, we show that the auxiliary statistic $\doublehattwo{\Phi}_{n,T}$ is close to $\widehat{\Phi}_{n,T}$ in the following sense.
\begin{propA}\label{propA:step4}
It holds that 
\[ \doublehattwo{\Phi}_{n,T} - \widehat{\Phi}_{n,T} = o_p(\delta_T) \]
with $\delta_T = T^{1/q}/\sqrt{T h_{\min}} + \rho_T \sqrt{\log T}$.
\end{propA}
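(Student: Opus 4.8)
The plan is to bound $\doublehattwo{\Phi}_{n,T} - \widehat{\Phi}_{n,T}$ through the maximal discrepancy between the individual studentized kernel averages, using the elementary inequality $|\max_a f(a) - \max_a g(a)| \le \max_a |f(a) - g(a)|$. Since the additive correction $\lambda(h)$ is common to $\doublehattwo{\phi}^0_{ij,T}$ and $\widehat{\phi}^0_{ij,T}$, it cancels, so it suffices to control
\[ R_{n,T} := \max_{1 \le i < j \le n} \max_{(u,h) \in \grid} \Bigg| \frac{|\doublehat{\phi}_{ij,T}(u,h)|}{\{\doublehattwo{\sigma}_i^2 + \doublehattwo{\sigma}_j^2\}^{1/2}} - \frac{|\widehat{\phi}_{ij,T}(u,h)|}{\{\widehat{\sigma}_i^2 + \widehat{\sigma}_j^2\}^{1/2}} \Bigg|. \]
Writing $D_{1} = \{\doublehattwo{\sigma}_i^2 + \doublehattwo{\sigma}_j^2\}^{1/2}$ and $D_{2} = \{\widehat{\sigma}_i^2 + \widehat{\sigma}_j^2\}^{1/2}$, I would split each summand via
\[ \bigg| \frac{|\doublehat{\phi}_{ij,T}|}{D_1} - \frac{|\widehat{\phi}_{ij,T}|}{D_2} \bigg| \le \frac{|\doublehat{\phi}_{ij,T} - \widehat{\phi}_{ij,T}|}{D_1} + \frac{|\widehat{\phi}_{ij,T}|}{D_2} \cdot \frac{|D_2 - D_1|}{D_1}, \]
treating the numerator and denominator discrepancies separately. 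Throughout I use that $D_1, D_2 \convp \sqrt{2}\,\sigma > 0$ (since $\doublehattwo{\sigma}_i^2, \widehat{\sigma}_i^2 = \sigma^2 + o_p(\rho_T)$), so both stay bounded away from zero with probability tending to one, uniformly over the finitely many pairs $(i,j)$.

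For the numerator term, observe that $\doublehat{\phi}_{ij,T}(u,h) - \widehat{\phi}_{ij,T}(u,h) = -\sum_{t=1}^T w_{t,T}(u,h)\{(\bfbeta_i - \widehat{\bfbeta}_i)^\top(\X_{it} - \bar{\X}_i) - (\bfbeta_j - \widehat{\bfbeta}_j)^\top(\X_{jt} - \bar{\X}_j)\}$, which by Cauchy--Schwarz is bounded by $\sum_{k \in \{i,j\}} \|\bfbeta_k - \widehat{\bfbeta}_k\| \cdot \|\sum_{t=1}^T w_{t,T}(u,h)(\X_{kt} - \bar{\X}_k)\|$. Lemma \ref{lemma-beta-rate} gives $\|\bfbeta_k - \widehat{\bfbeta}_k\| = O_p(T^{-1/2})$, so the task reduces to a uniform bound on the weighted covariate sums. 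Splitting $\sum_t w_{t,T}(\X_{kt} - \bar{\X}_k) = \sum_t w_{t,T}\X_{kt} - \bar{\X}_k\sum_t w_{t,T}$ and using $\sum_t w_{t,T}(u,h) = O(\sqrt{Th_{\max}})$ together with $\bar{\X}_k = O_p(T^{-1/2})$ (Lemma \ref{lemma-wlln}), the centering term is $o_p(1)$ uniformly. For the leading term I would apply the Nagaev inequality (Proposition \ref{theo-wu2016}) coordinatewise with weights $a_t = \sqrt{T}\,w_{t,T}(u,h)$ (so $\sum a_t^2 = T$), using the dependence-adjusted bounds of Lemma \ref{lemma-bounds-dep-measure} and \ref{C-reg3} and the uniform bound $\max_t|w_{t,T}(u,h)| = O((Th_{\min})^{-1/2})$, and then a union bound over the $O(T^\theta)$ points of $\grid$. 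The moment condition $q' > \theta q$ from \ref{C-reg1}, matched against the grid cardinality \ref{C-grid} and the scale restriction \ref{C-h}, makes the polynomial tail survive the union bound and yields $\max_{(u,h)}\|\sum_t w_{t,T}(u,h)(\X_{kt} - \bar{\X}_k)\| = o_p(T^{1/q})$. Since $\sqrt{T}\,\delta_T \ge T^{1/q}/\sqrt{h_{\min}} \gg T^{1/q}$ (because $h_{\max} = o(1)$), multiplying by $O_p(T^{-1/2})$ and dividing by $D_1$ shows the numerator term is $o_p(\delta_T)$. This uniform control of the weighted covariate sums over the polynomial-size grid is the main obstacle; everything else is routine.

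For the denominator term, $D_2^2 - D_1^2 = (\widehat{\sigma}_i^2 - \doublehattwo{\sigma}_i^2) + (\widehat{\sigma}_j^2 - \doublehattwo{\sigma}_j^2) = o_p(\rho_T)$, and since $D_1 + D_2$ is bounded away from zero, $|D_2 - D_1|/D_1 = o_p(\rho_T)$ uniformly over $(i,j)$. It then remains to bound $\max_{(u,h),\,i<j} |\widehat{\phi}_{ij,T}(u,h)|/D_2$. Using the numerator bound just established and $\max_{(u,h),\,i<j}|\doublehat{\phi}_{ij,T}|/D_1 \le \doublehattwo{\Phi}_{n,T} + \lambda(h_{\min})$, together with $\doublehattwo{\Phi}_{n,T} = O_p(1)$ — which follows from Step~3, i.e.\ \eqref{eq:claim-step3}, and the tightness of the Gaussian statistic $\Phi_{n,T}$ — and $\lambda(h_{\min}) = O(\sqrt{\log T})$, I obtain $\max |\widehat{\phi}_{ij,T}|/D_2 = O_p(\sqrt{\log T})$. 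Hence the denominator term is $O_p(\sqrt{\log T}) \cdot o_p(\rho_T) = o_p(\rho_T\sqrt{\log T}) = o_p(\delta_T)$, since $\rho_T\sqrt{\log T}$ is precisely one of the two summands defining $\delta_T$. Combining the two bounds yields $R_{n,T} = o_p(\delta_T)$, which is the assertion of the proposition.
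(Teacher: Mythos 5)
Your overall strategy is essentially the paper's own: split $\doublehattwo{\Phi}_{n,T}-\widehat{\Phi}_{n,T}$ into a numerator-discrepancy term and a denominator-discrepancy term (the paper pairs the long-run variance discrepancy with $\doublehat{\phi}_{ij,T}$ while you pair it with $\widehat{\phi}_{ij,T}$, a harmless mirror image), handle the numerator term exactly as the paper does (Cauchy--Schwarz, Lemma \ref{lemma-beta-rate} for $\widehat{\bfbeta}_i-\bfbeta_i=O_p(T^{-1/2})$, the Nagaev inequality of Proposition \ref{theo-wu2016} plus a union bound over the $O(T^\theta)$ grid points using $q^\prime>\theta q$, and the deterministic bound $\sum_t w_{t,T}(u,h)=O(\sqrt{Th_{\max}})$ for the centering part), and handle the denominator term as $o_p(\rho_T)$ times an $O_p(\sqrt{\log T})$ bound on the maximum of the studentized kernel averages. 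The only substantive difference is how that $\sqrt{\log T}$ bound is obtained: the paper bounds $\max_{i<j,(u,h)}|\doublehat{\phi}_{ij,T}(u,h)|$ directly through the coupling constructed in Proposition \ref{propA:strong_approx}, whereas you go through the already-established distributional comparison \eqref{eq:claim-step3}; since Step 3 rests only on Steps 1--2, this ordering is not circular and the route is legitimate.

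There is, however, one genuine flaw in the justification: you invoke ``tightness of the Gaussian statistic $\Phi_{n,T}$'', i.e.\ $\Phi_{n,T}=O_p(1)$ uniformly in $T$, to deduce $\doublehattwo{\Phi}_{n,T}=O_p(1)$ from \eqref{eq:claim-step3}. Tightness of an additively corrected multiscale statistic is a D\"umbgen--Spokoiny-type result: it requires a chaining argument over the location-scale family exploiting the correlation structure of the kernel averages, and it is nowhere proved (nor cited as an ingredient) in this paper -- the union bound behind \eqref{eq:phi-bound-max-Gaussians} only delivers $O_p(\sqrt{\log T})$, and the paper deliberately works with that weaker rate throughout (e.g.\ it only ever uses $q_{n,T}(\alpha)=O(\sqrt{\log T})$). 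As written, your proof therefore leans on an unproven and nontrivial claim. The repair is one line: you do not need $O_p(1)$, because you add $\lambda(h_{\min})=O(\sqrt{\log T})$ anyway. From \eqref{eq:claim-step3} together with $\Phi_{n,T}\le \max_{i<j}\max_{(u,h)\in\grid}|\phi_{ij,T}(u,h)|/(\sigma_i^2+\sigma_j^2)^{1/2}=O_p(\sqrt{\log T})$ -- which is exactly \eqref{eq:phi-bound-max-Gaussians}, the elementary Gaussian maximal bound over polynomially many points -- one gets $\doublehattwo{\Phi}_{n,T}=O_p(\sqrt{\log T})$, and hence $\max_{i<j,(u,h)}|\doublehat{\phi}_{ij,T}(u,h)|/D_1\le \doublehattwo{\Phi}_{n,T}+\lambda(h_{\min})=O_p(\sqrt{\log T})$, which is all your denominator argument requires. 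With that substitution your proof is complete and yields the claimed $o_p(\delta_T)$ rate.
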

The proof can be found at the end of this section.

\subsubsection*{Step 5}

We finally show that 
\begin{equation}\label{eq:claim:step5}
\sup_{x \in \reals} \big| \pr(\widehat{\Phi}_{n, T} \le x) - \pr(\Phi_{n,T} \le x) \big| = o(1).
\end{equation}
\begin{proof}[\textnormal{\textbf{Proof of (\ref{eq:claim:step5}).}}]
To start with, we verify that for any $x \in \reals$ and any $\delta > 0$, 
\begin{align}
\pr\Big( \doublehattwo{\Phi}_{n,T} \le x & - \delta\Big) - \pr \Big(\big|\doublehattwo{\Phi}_{n,T} - \widehat{\Phi}_{n,T}\big| > \delta \Big) \nonumber \\
 & \le \pr\big(\widehat{\Phi}_{n, T} \le x\big) \le\pr\Big(\doublehattwo{\Phi}_{n,T} \le x + \delta\Big) + \pr \Big(\big|\doublehattwo{\Phi}_{n,T} - \widehat{\Phi}_{n,T}\big| > \delta \Big). \label{eq:lemmaA-step1}
\end{align}
It holds that
\begin{align*} 
\pr(\widehat{\Phi}_{n, T} \le x) &= \pr \Big(\widehat{\Phi}_{n, T} \le x, \big|\doublehattwo{\Phi}_{n,T} - \widehat{\Phi}_{n,T}\big| \le \delta \Big) + \pr \Big(\widehat{\Phi}_{n, T} \le x, \big|\doublehattwo{\Phi}_{n,T} - \widehat{\Phi}_{n,T}\big| > \delta \Big) \\
& \le  \pr \Big(\widehat{\Phi}_{n, T} \le x, \widehat{\Phi}_{n,T} - \delta \le \doublehattwo{\Phi}_{n,T} \le \widehat{\Phi}_{n,T} + \delta \Big) + \pr \Big(\big|\doublehattwo{\Phi}_{n,T} - \widehat{\Phi}_{n,T}\big| > \delta \Big) \\
& \le  \pr \Big(\doublehattwo{\Phi}_{n,T} \le x + \delta \Big) + \pr \Big(\big|\doublehattwo{\Phi}_{n,T} - \widehat{\Phi}_{n,T}\big| > \delta \Big)
\end{align*}
and analogously 
\begin{align*} \pr(\doublehattwo{\Phi}_{n, T} \le x - \delta)  \le  \pr \Big(\widehat{\Phi}_{n,T} \le x \Big) + \pr \Big(\big|\doublehattwo{\Phi}_{n,T} - \widehat{\Phi}_{n,T}\big| > \delta \Big).
\end{align*}
Combining these two inequalities, we arrive at \eqref{eq:lemmaA-step1}.

Now let $x\in \reals$ be any point such that $\pr(\widehat{\Phi}_{n,T} \le x) \geq \pr(\Phi_{n, T} \le x)$. With the help of \eqref{eq:lemmaA-step1}, we get that
\begin{align*}
\Big| \pr\big(\widehat{\Phi}_{n, T} \le x\big) - \pr\big(\Phi_{n,T} \le x\big) \Big| 
 & = \pr\big(\widehat{\Phi}_{n, T} \le x\big) - \pr\big(\Phi_{n,T} \le x\big) \\ 
 & \le \pr\Big(\doublehattwo{\Phi}_{n,T} \le x + \delta_{T}\Big) + \pr \Big(\big|\doublehattwo{\Phi}_{n,T} - \widehat{\Phi}_{n,T}\big| > \delta_{T} \Big) \\ 
 & \quad - \pr\big(\Phi_{n,T} \le x\big)  \\
 & = \pr\Big(\doublehattwo{\Phi}_{n,T} \le x + \delta_{T}\Big) - \pr\Big(\Phi_{n,T} \le x + \delta_{T}\Big)  \\
 & \quad +  \pr\big(\Phi_{n,T} \le x + \delta_{T}\big)   - \pr\big(\Phi_{n,T} \le x\big) \\ 
 & \quad + \pr \Big(\big|\doublehattwo{\Phi}_{n,T} - \widehat{\Phi}_{n,T}\big| > \delta_{T} \Big). 
\end{align*}
Analogously, for any point $x\in \reals$ with $\pr(\widehat{\Phi}_{n,T}\le x) < \pr(\Phi_{n, T}\le x)$, it holds that 
\begin{align*}
\Big| \pr\big(\widehat{\Phi}_{n, T} \le x\big) - \pr\big(\Phi_{n,T} \le x\big) \Big| 
 & \le \pr\Big( \Phi_{n,T} \le x - \delta_T \Big) - \pr\Big(\doublehattwo{\Phi}_{n,T} \le x - \delta_{T}\Big) \\
 & \quad + \pr\big( \Phi_{n,T} \le x \big) - \pr\big( \Phi_{n,T} \le x - \delta_T \big) \\
 & \quad + \pr \Big(\big|\doublehattwo{\Phi}_{n,T} - \widehat{\Phi}_{n,T}\big| > \delta_{T} \Big).  
\end{align*}
Consequently,  
\begin{align*}
\sup_{x \in \reals} \Big| \pr\big(\widehat{\Phi}_{n, T} \le x\big) - \pr\big(\Phi_{n,T} \le x\big) \Big| 
 & \le \sup_{x \in \reals} \Big| \pr\big(\doublehattwo{\Phi}_{n, T} \le x\big) - \pr\big(\Phi_{n,T} \le x\big) \Big| \\
 & \quad + \sup_{x \in \reals} \pr\big( | \Phi_{n,T} - x | \le \delta_T \big) \\
 & \quad + \pr \Big(\big|\doublehattwo{\Phi}_{n,T} - \widehat{\Phi}_{n,T}\big| > \delta_{T} \Big). 
\end{align*}
Since the terms on the right-hand side are all $o(1)$ by Steps 2--4, we arrive at \eqref{eq:claim:step5}. 
\end{proof}

\subsubsection*{Details on Steps 1--5}

\begin{proof}[\textnormal{\textbf{Proof of Proposition \ref{propA:strong_approx}}}] 
Consider the stationary process $\mathcal{E}_i = \{\varepsilon_{it}: 1 \leq t \leq T\}$ for some fixed $i \in \{1,\ldots,n\}$. By Theorem 2.1 and Corollary 2.1 in \cite{BerkesLiuWu2014}, the following strong approximation result holds true: On a richer probability space, there exist a standard Brownian motion $\mathbb{B}_i$ and a sequence $\{ \widetilde{\varepsilon}_{it}: t \in \naturals \}$ such that $[\widetilde{\varepsilon}_{i1},\ldots,\widetilde{\varepsilon}_{iT}] \stackrel{\mathcal{D}}{=} [\varepsilon_{i1},\ldots,\varepsilon_{iT}]$ for each $T$ and 
\begin{equation}\label{eq-strongapprox-dep}
\max_{1 \le t \le T} \Big| \sum\limits_{s=1}^t \widetilde{\varepsilon}_{is} - \sigma_i \mathbb{B}_i(t) \Big| = o\big( T^{1/q} \big) \quad \text{a.s.},  
\end{equation}
where $\sigma^2_i = \sum_{k \in \integers} \cov(\varepsilon_{i0}, \varepsilon_{ik})$ denotes the long-run error variance. We apply this result separately for each $i \in \{1,\ldots,n\}$. Since the error processes $\mathcal{E}_i = \{\varepsilon_{it}: 1 \leq t \leq T\}$ are independent across $i$, we can construct the processes $\widetilde{\mathcal{E}}_i = \{\widetilde{\varepsilon}_{it}: t\in \naturals\}$ in such a way that they are independent across $i$ as well.

We now define the statistic $\widetilde{\Phi}_{n,T}$ in the same way as $\doublehattwo{\Phi}_{n, T}$ except that the error processes $\mathcal{E}_i$ are replaced by $\widetilde{\mathcal{E}}_i$. Specifically, we set
\[ \widetilde{\Phi}_{n,T} = \max_{1 \le i < j \le n}\max_{(u,h) \in \mathcal{G}_T} \Bigg\{ \bigg|\frac{\widetilde{\phi}_{ij, T}(u,h)}{\big(\widetilde{\sigma}_i^2 + \widetilde{\sigma}_j^2 \big)^{1/2}} \bigg| - \lambda(h)\Bigg\}, \]
where
\[ \widetilde{\phi}_{ij, T}(u,h) = \sum\limits_{t=1}^T w_{t,T}(u,h) \big\{ (\widetilde{\varepsilon}_{it} - \bar{\widetilde{\varepsilon}}_i)  - (\widetilde{\varepsilon}_{jt} - \bar{\widetilde{\varepsilon}}_j)\big\} \]
and the estimator $\widetilde{\sigma}^2_i$ is constructed from the sample $\widetilde{\mathcal{E}}_i$ in the same way as $\doublehattwo{\sigma}^2_i$ is constructed from $\mathcal{E}_i$. Since $[\widetilde{\varepsilon}_{i1},\ldots,\widetilde{\varepsilon}_{iT}] \stackrel{\mathcal{D}}{=} [\varepsilon_{i1},\ldots,\varepsilon_{iT}]$ and $\doublehattwo{\sigma}_i^2 = \sigma_i^2 + o_p(\rho_T)$, we have that $\widetilde{\sigma}_i^2 = \sigma_i^2 + o_p(\rho_T)$ as well. In addition to $\widetilde{\Phi}_{n,T}$, we introduce the Gaussian statistic
\[ \Phi_{n, T} = \max_{1\leq i< j \leq n}\max_{(u,h) \in \mathcal{G}_T} \bigg\{ \bigg|\frac{\phi_{ij, T}(u,h)}{\big(\sigma_i^2 + \sigma_j^2 \big)^{1/2}}\bigg| - \lambda(h) \bigg\} \]
and the auxiliary statistic 
\[ \Phi_{n, T}^{\diamond} = \max_{1\leq i<j \leq n}\max_{(u,h) \in \mathcal{G}_T} \bigg\{ \bigg|\frac{\phi_{ij, T}(u,h)}{\big(\widetilde{\sigma}_i^2 + \widetilde{\sigma}_j^2 \big)^{1/2}}\bigg| - \lambda(h) \bigg\}, \]
where $\phi_{ij,T}(u,h) = \sum\nolimits_{t=1}^T w_{t,T}(u,h) \{ \sigma_i (Z_{it} - \bar{Z}_i) - \sigma_j (Z_{jt} - \bar{Z}_j) \}$ and the Gaussian variables $Z_{it}$ are chosen as $Z_{it} = \mathbb{B}_i(t) - \mathbb{B}_i(t-1)$. With this notation, we obtain the obvious bound 
\begin{equation*}
\big| \widetilde{\Phi}_{n, T} - \Phi_{n, T} \big| \le \big| \widetilde{\Phi}_{n, T} - \Phi_{n, T}^{\diamond} \big| + \big| \Phi_{n, T}^{\diamond} - \Phi_{n, T} \big|. 
\end{equation*}
In what follows, we prove that 
\begin{align}
\big| \widetilde{\Phi}_{n, T} - \Phi_{n, T}^{\diamond} \big| & = o_p\Big( \frac{T^{1/q}}{\sqrt{Th_{\min}}} \Big) \label{eq-strongapprox-bound-A} \\
\big| \Phi_{n, T}^{\diamond} - \Phi_{n, T} \big| & = o_p(\rho_T \sqrt{\log T}), \label{eq-strongapprox-bound-B}
\end{align}
which completes the proof.

First consider $|\widetilde{\Phi}_{n, T} - \Phi_{n, T}^{\diamond}|$. Straightforward calculations yield that 
\begin{align}
\big| \widetilde{\Phi}_{n, T} - \Phi_{n, T}^{\diamond} \big| 
 & \le \max_{1\le i < j \le n} \big(\widetilde{\sigma}_i^2 + \widetilde{\sigma}_j^2 \big)^{-1/2} \max_{1\le i < j \le n} \max_{(u,h) \in \mathcal{G}_T} \big| \widetilde{\phi}_{ij, T}(u,h) - \phi_{ij, T}(u,h) \big|\Big\}  \nonumber \\ 
 & = O_p(1) \cdot \max_{1\le i < j \le n} \max_{(u,h) \in \mathcal{G}_T} \big| \widetilde{\phi}_{ij, T}(u,h) - \phi_{ij, T}(u,h) \big|, \label{eqA:strong_approx:bound2}
\end{align}
where the last line follows from the fact that $\widetilde{\sigma}_i^2 = \sigma_i^2 + o_p(\rho_T)$. Using summation by parts (that is, $\sum_{t=1}^T a_t b_t = \sum_{t=1}^{T-1} A_t (b_t - b_{t+1}) + A_T b_T$ with $A_t = \sum_{s=1}^t a_s$), we further obtain that 
\begin{align*}
\big| & \widetilde{\phi}_{ij, T}(u,h) - \phi_{ij, T}(u,h) \big|  \\
 & =\bigg|\sum_{t=1}^T w_{t,T}(u,h) \big\{ (\widetilde{\varepsilon}_{it} - \bar{\widetilde{\varepsilon}}_i) - (\widetilde{\varepsilon}_{jt} - \bar{\widetilde{\varepsilon}}_j) -{\sigma}_i (Z_{it} - \bar{Z}_i) + {\sigma}_j (Z_{jt} - \bar{Z}_j) \big\}\bigg|  \\
 & =\Big|\sum_{t=1}^{T-1} A_{ij, t} \big(w_{t,T}(u,h) -w_{t+1,T}(u,h)\big) + A_{ij, T} w_{T,T}(u,h)\Big|,
\end{align*}
where 
\begin{align*}
A_{ij, t} = \sum_{s=1}^t \big\{ (\widetilde{\varepsilon}_{is} - \bar{\widetilde{\varepsilon}}_i)  - (\widetilde{\varepsilon}_{js} - \bar{\widetilde{\varepsilon}}_j) -{\sigma}_i (Z_{is} - \bar{Z}_i) + {\sigma}_j (Z_{js} - \bar{Z}_j) \big\}
\end{align*}
and $A_{ij, T} = 0$ for all pairs $(i, j)$ by construction. From this, it follows that 
\begin{equation}\label{eq-strongapprox-bound3}
\big| \widetilde{\phi}_{ij, T}(u,h) - \phi_{ij, T}(u,h) \big| \le W_T(u, h) \max_{1 \le t \le T} |A_{ij, t}|
\end{equation}
with $W_T(u,h) = \sum_{t=1}^{T-1} |w_{t+1,T}(u,h) - w_{t,T}(u,h)|$. Straightforward calculations yield that
\begin{align*}
\max_{1 \le t \le T} |A_{ij, t}| 
 & \le \max_{1 \le t \le T} \Big| \sum\limits_{s=1}^t \widetilde{\varepsilon}_{is} -{\sigma}_i \sum\limits_{s=1}^t Z_{is} \Big| + \max_{1 \le t \le T} \Big| t (\bar{\widetilde{\varepsilon}}_{i} - {\sigma}_i \bar{Z_i}) \Big|\\
 & \quad + \max_{1 \le t \le T} \Big| \sum\limits_{s=1}^t \widetilde{\varepsilon}_{js} - {\sigma}_j \sum\limits_{s=1}^t Z_{js} \Big| + \max_{1 \le t \le T} \Big| t (\bar{\widetilde{\varepsilon}}_{j} -{\sigma}_j \bar{Z_j}) \Big| \\
 & \le 2 \max_{1 \le t \le T} \Big| \sum\limits_{s=1}^t \widetilde{\varepsilon}_{is} -{\sigma}_i \sum\limits_{s=1}^t Z_{is} \Big| + 2 \max_{1 \le t \le T} \Big| \sum\limits_{s=1}^t \widetilde{\varepsilon}_{js} -{\sigma}_j \sum\limits_{s=1}^t Z_{js} \Big| \\
 & = 2 \max_{1 \le t \le T} \Big| \sum\limits_{s=1}^t \widetilde{\varepsilon}_{is} - {\sigma}_i \sum\limits_{s=1}^t \big(\mathbb{B}_{i}(s) - \mathbb{B}_{i}(s-1) \big) \Big| \\
 & \quad +  2 \max_{1 \le t \le T} \Big| \sum\limits_{s=1}^t \widetilde{\varepsilon}_{js} -{\sigma}_j \sum\limits_{s=1}^t \big(\mathbb{B}_{j}(s) - \mathbb{B}_{j}(s-1) \big) \Big|\\
 & = 2 \max_{1 \le t \le T} \Big| \sum\limits_{s=1}^t \widetilde{\varepsilon}_{is} - {\sigma}_i \mathbb{B}_{i}(t) \Big| + 2 \max_{1 \le t \le T} \Big| \sum\limits_{s=1}^t \widetilde{\varepsilon}_{js} - {\sigma}_j \mathbb{B}_{j}(t) \Big|.
\end{align*}
Applying the strong approximation result \eqref{eq-strongapprox-dep}, we can infer that
\[ \max_{1 \le t \le T} |A_{ij, t}| = o_p\big(T^{1/q}\big). \]
Moreover, standard arguments show that $\max_{(u,h) \in \mathcal{G}_T} W_T(u,h) = O( 1/\sqrt{Th_{\min}} )$. Plugging these two results into \eqref{eq-strongapprox-bound3}, we obtain that 
\[ \max_{1\le i < j \le n} \max_{(u,h) \in \mathcal{G}_T} \big| \widetilde{\phi}_{ij, T}(u,h) - \phi_{ij, T}(u,h) \big| = o_p \Big( \frac{T^{1/q}}{\sqrt{Th_{\min}}} \Big), \]
which in view of \eqref{eqA:strong_approx:bound2} yields that $| \widetilde{\Phi}_{n, T} - \Phi_{n, T}^{\diamond} | = o_p( T^{1/q}/\sqrt{Th_{\min}})$. This completes the proof of \eqref{eq-strongapprox-bound-A}.

\pagebreak
Next consider $|\Phi_{n, T}^{\diamond} - \Phi_{n, T}|$. It holds that
\begin{align}
\big| \Phi_{n, T}^{\diamond} - \Phi_{n, T} \big| 
 & \le \max_{1\leq i< j \leq n}\max_{(u,h) \in \mathcal{G}_T} \Big|\frac{\phi_{ij, T}(u,h)}{\{\widetilde{\sigma}_i^2 + \widetilde{\sigma}_j^2 \}^{1/2}} - \frac{\phi_{ij, T}(u,h)}{\{{\sigma}_i^2 + {\sigma}_j^2 \}^{1/2}}\Big| \nonumber \\
 & \le \max_{1 \le i < j \le n} \left\{ \Big|\big(\widetilde{\sigma}_i^2 + \widetilde{\sigma}_j^2 \big)^{-1/2} - \big(\sigma_i^2 + \sigma_j^2 \big)^{-1/2}\Big| \right\} \max_{1 \le i < j \le n} \max_{(u,h) \in \mathcal{G}_T} \left|\phi_{ij,T}(u,h)\right| \nonumber \\
 & = o_p(\rho_T)  \max_{1 \le i < j \le n} \max_{(u,h) \in \mathcal{G}_T} \left|\phi_{ij,T}(u,h)\right|, \label{eqA:strong_approx:bound5}
\end{align}
where the last line is due to the fact that $\widetilde{\sigma}_i^2 = \sigma_i^2 + o_p(\rho_T)$. We can write $\phi_{ij, T}(u,h) = \phi_{ij, T}^{(I)}(u,h) - \phi_{ij, T}^{(II)}(u,h)$, where
\begin{align*} 
\phi_{ij, T}^{(I)}(u,h) & = \sum\limits_{t=1}^T w_{t,T}(u,h) \, (\sigma_i Z_{it} - \sigma_j Z_{jt}) \sim \normal(0,{\sigma}^2_i + {\sigma}^2_j) \\
\phi_{ij, T}^{(II)}(u,h) & = \sum\limits_{t=1}^T w_{t,T}(u,h) \, ( \sigma_i \bar{Z}_i - \sigma_j \bar{Z}_j) \sim \normal\big(0, (\sigma_i^2 + \sigma_j^2) c_T(u,h)\big) 
\end{align*}
with $c_T(u,h) = \{\sum_{t=1}^T w_{t, T}(u, h)\}^2/T \le C < \infty$ for all $(u,h) \in \mathcal{G}_T$ and $1\le i < j \le n$. This shows that $\phi_{ij, T}(u,h)$ are centred Gaussian random variables with bounded variance for all $(u,h) \in \mathcal{G}_T$ and $1\le i < j \le n$. Hence, standard results on the maximum of Gaussian random variables yield that 
\begin{equation}\label{eq:phi-bound-max-Gaussians}
\max_{1\leq i< j \leq n}\max_{(u,h) \in \mathcal{G}_T} \big|\phi_{ij, T}(u,h)\big| = O_p(\sqrt{\log T}),
\end{equation}
where we have used that $n$ is fixed and $|\mathcal{G}_T| = O(T^\theta)$ for some large but fixed constant $\theta$ by Assumption \ref{C-grid}. Plugging this into \eqref{eqA:strong_approx:bound5} yields 
$| \Phi_{n, T}^{\diamond} - \Phi_{n, T} | = o_p(\rho_T \sqrt{\log T})$, which completes the proof of \eqref{eq-strongapprox-bound-B}.
\end{proof}

\begin{proof}[\textnormal{\textbf{Proof of Proposition \ref{propA:anticon}.}}] 
The proof is an application of anti-concentration bounds for Gaussian random vectors. We in particular make use of the following anti-concentra\-tion inequality from \cite{Nazarov2003}, which can also be found as Lemma A.1 in \cite{Chernozhukov2017}. 
\begin{lemmaA}\label{lemma-Nazarov}
Let $\boldsymbol{Z} = (Z_1,\ldots,Z_p)^\top$ be a centred Gaussian random vector in $\reals^p$ such that $\ex[Z_j^2] \ge b$ for all $1 \le j \le p$ and some constant $b > 0$. Then for every $\boldsymbol{z} \in \reals^p$ and $a > 0$,
\[ \pr(\boldsymbol{Z} \le \boldsymbol{z} + a) - \pr(\boldsymbol{Z} \le \boldsymbol{z}) \le C a \sqrt{\log p}, \]  
where the constant $C$ only depends on $b$. 
\end{lemmaA}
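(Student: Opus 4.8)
The plan is to read the left-hand side as an increment of the Gaussian distribution function $F(\boldsymbol{w}) := \pr(\boldsymbol{Z} \le \boldsymbol{w})$ along the main diagonal and to reduce everything to a pointwise bound on the Gaussian surface area of a box. Assume first that the covariance of $\boldsymbol{Z}$ is nonsingular, so that $\boldsymbol{Z}$ has a smooth Lebesgue density and $F$ is continuously differentiable and coordinatewise nondecreasing (if the covariance is singular, I would first add an independent $\varepsilon\boldsymbol{G}$ with $\boldsymbol{G}$ standard Gaussian and let $\varepsilon \downarrow 0$; this only raises the variances and hence preserves the bound with the same constant). Applying the fundamental theorem of calculus to $t \mapsto F(\boldsymbol{z} + t\mathbf{1})$ with $\mathbf{1} = (1,\ldots,1)^\top$ gives
\[
\pr(\boldsymbol{Z} \le \boldsymbol{z} + a) - \pr(\boldsymbol{Z} \le \boldsymbol{z}) = \int_0^a \sum_{j=1}^p \partial_j F(\boldsymbol{z} + t\mathbf{1})\, dt .
\]
Since each partial derivative is nonnegative, it suffices to prove the uniform bound $\sum_{j=1}^p \partial_j F(\boldsymbol{w}) \le C\sqrt{\log p}$ with $C$ depending only on $b$; integrating over $t \in [0,a]$ then yields the claim.

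Next I would represent each partial derivative as a Gaussian surface integral over the corresponding face of the box $\{\boldsymbol{x} \le \boldsymbol{w}\}$. Conditioning on the $j$-th coordinate,
\[
\partial_j F(\boldsymbol{w}) = \varphi_{\sigma_j}(w_j)\, \pr\big( Z_k \le w_k \text{ for all } k \ne j \,\big|\, Z_j = w_j\big),
\]
where $\varphi_{\sigma_j}$ denotes the $\normal(0,\sigma_j^2)$ density and $\sigma_j^2 = \ex[Z_j^2] \ge b$. Hence $\sum_j \partial_j F(\boldsymbol{w})$ is exactly the Gaussian perimeter of the convex box $\{\boldsymbol{x} \le \boldsymbol{w}\}$, and the problem is reduced to bounding this perimeter uniformly in $\boldsymbol{w}$.

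The crux is this perimeter bound, and here any crude estimate fails: replacing the conditional probabilities by $1$ and using $\varphi_{\sigma_j}(w_j) \le (2\pi b)^{-1/2}$ gives only $O(p)$. The correct mechanism is a trade-off between the two factors. When $w_j/\sigma_j$ is near zero the face density is of order one but the conditional probabilities are then strongly constrained, whereas when $w_j/\sigma_j$ is very negative the density $\varphi_{\sigma_j}(w_j)$ is exponentially small. I would first verify this in the independent case, where the sum becomes $\sum_j \varphi_{\sigma_j}(w_j)\prod_{k\ne j}\Phi_{\sigma_k}(w_k)$; differentiating in $\boldsymbol{w}$ and locating the maximiser shows that the worst configuration has all $w_j/\sigma_j$ near $\sqrt{2\log p}$, at which point each face contributes of order $\sqrt{\log p}/p$ while the remaining product factor is $O(1)$, so the total is of order $\sqrt{\log p}$, the same order as the expected maximum of $p$ standard normals. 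This is precisely where the factor $\sqrt{\log p}$ genuinely enters, and the fact that only the lower bound $b$ (with no upper bound on the variances) is needed reflects that large variances spread out the faces and only help.

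The main obstacle is the general correlated case: once the coordinates are dependent the faces are no longer product sets and the explicit computation above collapses. This is the technical heart, and it is exactly the content of \cite{Nazarov2003} (reproduced as Lemma A.1 in \cite{Chernozhukov2017}): one approximates the box by a smoothed convex set and exploits the log-concavity of the Gaussian density together with a differential inequality for the perimeter along dilations, thereby reducing the correlated surface-area estimate to the diagonal computation controlled by $b$. Accordingly, I would establish the smoothness and surface-integral representation of the first two steps in detail, verify the order-$\sqrt{\log p}$ trade-off in the independent case for intuition, and invoke Nazarov's perimeter theorem for the correlated step rather than reprove it.
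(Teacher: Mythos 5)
The first thing to note is that the paper offers no proof of this lemma at all: it is imported verbatim as ``Nazarov's inequality,'' with \cite{Nazarov2003} and Lemma A.1 of \cite{Chernozhukov2017} cited in place of an argument. So there is no in-paper proof to compare against, and your decision to ultimately cite the correlated case puts you on exactly the same footing as the authors. The parts you do prove are correct and add genuine content that the paper does not attempt: the smoothing device for singular covariances is sound (adding independent Gaussian noise only increases the variances, so the constant depending on $b$ survives the limit $\varepsilon \downarrow 0$, and the CDF of $\boldsymbol{Z}$ is continuous since every marginal is nondegenerate); the fundamental-theorem-of-calculus step along the diagonal and the face representation of $\partial_j F$ as a Gaussian surface integral are valid; and your independent-case computation correctly identifies the mechanism --- at the critical configuration one has $c\,\Phi(c) = (p-1)\varphi(c)$, hence $\varphi(c) \asymp c/p$ with $c \approx \sqrt{2\log p}$, so each of the $p$ faces contributes order $\sqrt{\log p}/p$ and the total perimeter is of order $\sqrt{\log p}$, with only the lower bound $b$ entering. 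Two cautions, neither fatal. First, be precise about which result of Nazarov you invoke: his headline theorem bounds the Gaussian perimeter of a \emph{general convex body} in $\reals^p$ by $O(p^{1/4})$, which is far too weak to yield the stated inequality; what is needed is the rectangle-specific $O(\sqrt{\log p})$ bound, which is the content of Lemma A.1 in \cite{Chernozhukov2017} and the detailed-proof note behind it. Second, your sketch of the cited proof's internals (smoothed convex approximation, log-concavity, a dilation differential inequality) describes the general convex-body argument rather than the actual box argument; since you invoke rather than reprove the result this does not affect validity, but as stated it would point a reader at the wrong theorem.
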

To apply this result, we introduce the following notation: We write $x = (u,h)$ and $\mathcal{G}_T = \{x_1,\ldots,x_p\}$, where $p := |\mathcal{G}_T| \le O(T^\theta)$ for some large but fixed $\theta > 0$ by our assumptions. For $k = 1,\ldots,p$ and $1 \le i < j \le n$, we further let 
\[ Z_{ij, 2k-1} = \frac{\phi_{ij, T}(x_{k1},x_{k2})}{\{{\sigma}_i^2 + {\sigma}_j^2\}^{1/2}} \quad \text{and} \quad Z_{ij, 2k} = -\frac{\phi_{ij, T}(x_{k1},x_{k2})}{\{{\sigma}_i^2 + {\sigma}_j^2\}^{1/2}} \]  
along with $\lambda_{ij,2k-1} = \lambda(x_{k2})$ and $\lambda_{ij,2k} = \lambda(x_{k2})$, where $x_k = (x_{k1},x_{k2})$. Under our assumptions, it holds that $\ex[Z_{ij,l}] = 0$ and $\ex[Z_{ij,l}^2] \ge b > 0$ for all $i$, $j$ and $l$. We next construct the random vector $\boldsymbol{Z} = ( Z_{ij,l} : 1 \le i < j \le n, 1 \le l \le 2p)$ by stacking the variables $Z_{ij, l}$ in a certain order (which can be chosen freely) and construct the vector $\boldsymbol{\lambda} = (\lambda_{ij,l}: 1 \le i < j \le n, 1 \le l \le 2p)$ in an analogous way. Since the variables $Z_{ij,l}$ are normally distributed, $\boldsymbol{Z}$ is a Gaussian random vector of length $(n-1)np$.

With this notation at hand, we can express the probability $\pr(\Phi_{n,T} \le q)$ as follows for each $q \in \reals$: 
\begin{align*} 
\pr(\Phi_{n,T} \le q) 
 & = \pr \Big( \max_{1\leq i< j \leq n}\max_{1 \le l \le 2p} \big\{ Z_{ij,l} - \lambda_{ij,l} \big\} \le q \Big) \\
 & = \pr \big( Z_{ij,l} \le \lambda_{ij,l} + q \text{ for all } (i,j,l) \Big) \\
 & = \pr \big( \boldsymbol{Z} \le \boldsymbol{\lambda} + q \big). 
\end{align*}
Consequently,
\begin{align*}
\pr\big( |\Phi_{n,T} - x| \le \delta_T \big) 
 & = \pr \big( x - \delta_T \le \Phi_{n,T} \le x + \delta_T \big) \\
 & = \pr \big( \Phi_{n,T} \le x + \delta_T \big) - \pr \big( \Phi_{n,T} \le x \big) \\
 & \quad + \pr \big( \Phi_{n,T} \le x \big) - \pr \big( \Phi_{n,T} \le x - \delta_T \big) \\
 & = \pr \big( \boldsymbol{Z} \le \boldsymbol{\lambda} + x + \delta_T \big) - \pr \big( \boldsymbol{Z} \le \boldsymbol{\lambda} + x \big) \\
 & \quad + \pr \big( \boldsymbol{Z} \le \boldsymbol{\lambda} + x \big) - \pr \big( \boldsymbol{Z} \le \boldsymbol{\lambda} + x - \delta_T\big) \\
 & \le 2 C \delta_T \sqrt{\log((n-1)np)},
\end{align*} 
where the last line is by Lemma \ref{lemma-Nazarov}. This immediately implies Proposition \ref{propA:anticon}.
\end{proof}

\begin{proof}[\textnormal{\textbf{Proof of Proposition \ref{propA:step4}}}] 
Straightforward calculations yield that
\begin{align*}
\begin{split}
\big| \doublehattwo{\Phi}_{n, T} - \widehat{\Phi}_{n, T} \big| &\le \max_{1 \le i < j \le n} \max_{(u,h) \in \mathcal{G}_T} \left|\frac{\doublehat{\phi}_{ij,T}(u,h)}{\big(\doublehattwo{\sigma}_i^2 + \doublehattwo{\sigma}_j^2\big)^{1/2}} - \frac{\doublehat{\phi}_{ij,T}(u,h)}{\big(\widehat{\sigma}_i^2 + \widehat{\sigma}_j^2\big)^{1/2}}\right| \\
&\quad+\max_{1 \le i < j \le n} \max_{(u,h) \in \mathcal{G}_T} \left|\frac{\doublehat{\phi}_{ij,T}(u,h)}{\big(\widehat{\sigma}_i^2 + \widehat{\sigma}_j^2\big)^{1/2}} - \frac{\widehat{\phi}_{ij,T}(u,h)} {\big( \widehat{\sigma}_i^2 + \widehat{\sigma}_j^2 \big)^{1/2}} \right|.
\end{split}
\end{align*}
Since $\doublehattwo{\sigma}_i^2 = \sigma_i^2 + o_p(\rho_T)$ and $\widehat{\sigma}_i^2 = \sigma_i^2 + o_p(\rho_T)$, we further get that 
\begin{align*}
\max_{1 \le i < j \le n} &\max_{(u,h) \in \mathcal{G}_T} \left|\frac{\doublehat{\phi}_{ij,T}(u,h)}{\big(\doublehattwo{\sigma}_i^2 + \doublehattwo{\sigma}_j^2\big)^{1/2}} - \frac{\doublehat{\phi}_{ij,T}(u,h)}{\big(\widehat{\sigma}_i^2 + \widehat{\sigma}_j^2\big)^{1/2}}\right|  \\
&\le\max_{1 \le i < j \le n} \left\{ \Big|\big(\doublehattwo{\sigma}_i^2 + \doublehattwo{\sigma}_j^2 \big)^{-1/2} - \big(\widehat{\sigma}_i^2 + \widehat{\sigma}_j^2 \big)^{-1/2}\Big| \right\} \max_{1 \le i < j \le n} \max_{(u,h) \in \mathcal{G}_T} \Big|\doublehat{\phi}_{ij,T}(u,h)\Big| \\
& = o_p(\rho_T) \max_{1 \le i < j \le n} \max_{(u,h) \in \mathcal{G}_T} \Big|\doublehat{\phi}_{ij,T}(u,h)\Big|
\end{align*}
and 
\begin{align*}
\max_{1 \le i < j \le n} &\max_{(u,h) \in \mathcal{G}_T} \left|\frac{\doublehat{\phi}_{ij,T}(u,h)}{\big(\widehat{\sigma}_i^2 + \widehat{\sigma}_j^2\big)^{1/2}} - \frac{\widehat{\phi}_{ij,T}(u,h)} {\big( \widehat{\sigma}_i^2 + \widehat{\sigma}_j^2\big)^{1/2}} \right| \\
 & \le \max_{1\le i < j \le n} \left\{ \big(\widehat{\sigma}_i^2 + \widehat{\sigma}_j^2 \big)^{-1/2} \right\} \max_{1\le i < j \le n} \max_{(u,h) \in \mathcal{G}_T} \Big| \doublehat{\phi}_{ij, T}(u,h) - \widehat{\phi}_{ij, T}(u,h) \Big| \\
 & = O_p(1) \max_{1 \le i < j \le n} \max_{(u,h) \in \mathcal{G}_T} \Big| \doublehat{\phi}_{ij, T}(u,h) - \widehat{\phi}_{ij, T}(u,h) \Big|,
\end{align*}
where the difference of the kernel averages $\doublehat{\phi}_{ij, T}(u,h) - \widehat{\phi}_{ij, T}(u,h) $ does not include the error terms (they cancel out) and can be written as
\begin{align*}
\Big| & \doublehat{\phi}_{ij, T}(u,h) - \widehat{\phi}_{ij, T}(u,h) \Big| \\
 & = \bigg| \sum_{t=1}^T w_{t,T}(u,h) \big\{ (\bfbeta_i - \widehat{\bfbeta}_i)^\top (\X_{it} - \bar{\X}_{i}) - (\bfbeta_j - \widehat{\bfbeta}_j)^\top (\X_{jt} - \bar{\X}_{j}) \big\} \bigg| \\
 & \le \Big|(\bfbeta_i - \widehat{\bfbeta}_i)^\top \sum_{t=1}^T w_{t,T}(u,h) \X_{it} \Big| +  \big|(\bfbeta_i - \widehat{\bfbeta}_i)^\top\bar{\X}_{i}\big| \bigg| \sum_{t=1}^T w_{t,T}(u,h)  \bigg| \\
 & \quad +\Big|(\bfbeta_j - \widehat{\bfbeta}_j)^\top \sum_{t=1}^T w_{t,T}(u,h) \X_{jt}  \Big| + \big|(\bfbeta_j - \widehat{\bfbeta}_j)^\top\bar{\X}_{j}\big| \bigg| \sum_{t=1}^T w_{t,T}(u,h)  \bigg|. 
\end{align*}
Hence,
\begin{equation}\label{ineq-diff-1}
\big| \doublehattwo{\Phi}_{n, T} - \widehat{\Phi}_{n, T} \big| \le o_p( \rho_T) A_{n,T} + O_p(1) \big\{ 2B_{n,T} + 2C_{n,T} \}, 
\end{equation}
where 
\begin{align*}
A_{n,T} & = \max_{1 \le i< j \le n} \max_{(u,h) \in \mathcal{G}_T} \Big|\doublehat{\phi}_{ij,T}(u,h)\Big| \\
B_{n,T} & = \max_{1 \le i \le n} \max_{(u,h) \in \mathcal{G}_T} \Big| (\bfbeta_i - \widehat{\bfbeta}_i)^\top\sum_{t=1}^T w_{t,T}(u,h) \X_{it} \Big| \\
C_{n,T} & = \max_{1 \le i \le n}\big|(\bfbeta_i - \widehat{\bfbeta}_i)^\top\bar{\X}_{i}\big| \max_{(u,h) \in \mathcal{G}_T}  \Big| \sum_{t=1}^T w_{t,T}(u,h)  \Big|. 
\end{align*}
We examine these three terms separately.

We first prove that 
\begin{equation}\label{eq:Ant:1} 
A_{n,T}  = \max_{1 \le i< j \le n} \max_{(u,h) \in \mathcal{G}_T} \Big|\doublehat{\phi}_{ij,T}(u,h)\Big| = O_p\big(\sqrt{\log T}\big).
\end{equation}
From the proof of Proposition \ref{propA:strong_approx}, we know that there exist identically distributed versions $\widetilde{\phi}_{ij, T}(u, h)$ of the statistics $\doublehat{\phi}_{ij,T}(u,h)$ with the property that 
\begin{equation}\label{eq:result-from-propA-strong_approx}
\max_{1\le i < j \le n} \max_{(u,h) \in \mathcal{G}_T} \big| \widetilde{\phi}_{ij, T}(u,h) - \phi_{ij, T}(u,h) \big| = o_p \Big( \frac{T^{1/q}}{\sqrt{Th_{\min}}} \Big). 
\end{equation}
Instead of \eqref{eq:Ant:1}, it thus suffices to show that 
\begin{equation}\label{eq:Ant:2} 
\max_{1 \le i< j \le n}\max_{(u,h) \in \mathcal{G}_T}\Big|\widetilde{\phi}_{ij, T}(u, h)\Big| = O_p\big(\sqrt{\log T}\big).
\end{equation}
Since for any constant $c > 0$, 
\begin{align*}
&\pr\left( \max_{i,j,(u,h)}\left|\phi_{ij, T}(u, h)\right| \leq \frac{c \sqrt{\log T}}{2} \right) \\
&\leq \pr\left( \max_{i,j,(u,h)}\left|\widetilde{\phi}_{ij, T}(u, h)\right| \leq c \sqrt{\log T}\right) \\
& \quad + \pr\left(\left|\max_{i,j,(u,h)}\left|\widetilde{\phi}_{ij, T}(u, h)\right| - \max_{i,j,(u,h)}\left|\phi_{ij, T}(u, h)\right| \right| > \frac{c\sqrt{\log T}}{2}\right) \\
&\leq \pr\left( \max_{i,j,(u,h)}\left|\widetilde{\phi}_{ij, T}(u, h)\right| \leq c \sqrt{\log T}\right) \\
&\quad + \pr\left(\max_{i,j,(u,h)}\left|\widetilde{\phi}_{ij, T}(u, h)- \phi_{ij, T}(u, h)\right| > \frac{c\sqrt{\log T}}{2}\right)
\end{align*}
and $\pr(\max_{i,j,(u,h)}|\widetilde{\phi}_{ij, T}(u, h)- \phi_{ij, T}(u, h)| > c \sqrt{\log T}/2) = o(1)$ by \eqref{eq:result-from-propA-strong_approx}, we get that
\begin{align} 
\pr\bigg(\max_{i,j,(u,h)} & \left|\widetilde{\phi}_{ij, T}(u, h)\right| \leq c\sqrt{\log T} \bigg) \nonumber \\
 & \geq \pr\left( \max_{i,j,(u,h)}\left|\phi_{ij, T}(u, h)\right| \leq \frac{c\sqrt{\log T}}{2}\right) - o(1). \label{eq:Ant:intermediate}
\end{align}
Moreover, since $\max_{i,j,(u,h)}\left|\phi_{ij, T}(u, h)\right| = O_p(\sqrt{\log{T}})$ as already proven in \eqref{eq:phi-bound-max-Gaussians}, we can make the probability $\pr( \max_{i,j,(u,h)}|\phi_{ij, T}(u, h)| \leq c\sqrt{\log T}/2)$ on the right-hand side of \eqref{eq:Ant:intermediate} arbitrarily close to $1$ by choosing the constant $c$ sufficiently large. Hence, for any $\delta > 0$, we can find a constant $c > 0$ such that $\pr(\max_{i,j,(u,h)}|\widetilde{\phi}_{ij, T}(u, h)| \leq c\sqrt{\log T} ) \ge 1 - \delta$ for sufficiently large $T$. This proves \eqref{eq:Ant:2}, which in turn yields \eqref{eq:Ant:1}.

\pagebreak
We next turn to $B_{n,T}$. Without loss of generality, we assume that $\X_{it}$ is real-valued. The vector-valued case can be handled analogously. To start with, we have a closer look at the term $\sum_{t=1}^T w_{t,T}(u,h) \X_{it}$. By construction, the kernel weights $w_{t, T}(u, h)$ are unequal to $0$ if and only if $T(u-h) \le t \le T(u+h)$. We can use this fact to write
\begin{align*}
\Big| \sum_{t=1}^T w_{t,T}(u,h) \X_{it} \Big|  = \bigg| \sum_{t=\lfloor T(u-h) \rfloor}^{\lceil T(u+h) \rceil} w_{t,T}(u,h) \X_{it}   \bigg|.
\end{align*}
Note that
\begin{align}\label{eq:sum_weights}
\begin{split}
\sum_{t=\lfloor T(u-h) \rfloor}^{\lceil T(u+h) \rceil} w^2_{t,T}(u,h) &= \sum_{t=1}^T w^2_{t,T}(u,h) = \sum_{t=1}^T\frac{\Lambda^2_{t,T}(u,h)}{\sum\nolimits_{s=1}^T\Lambda^2_{s,T}(u,h) } = 1.
\end{split}
\end{align}
Denoting by $D_{T, u, h}$ the number of integers between $\lfloor T(u-h) \rfloor$ and $\lceil T(u+h) \rceil$ (with the obvious bounds $2Th \leq D_{T, u, h} \leq 2Th + 2$) and using \eqref{eq:sum_weights}, we can normalize the kernel weights as follows:
\begin{align*}
\sum_{t=\lfloor T(u-h) \rfloor}^{\lceil T(u+h) \rceil} \big(\sqrt{D_{T, u, h}}\cdot w_{t,T}(u,h)\big)^2 = D_{T, u, h}.
\end{align*}
Next, we apply Proposition \ref{theo-wu2016} with the weights $a_t = \sqrt{D_{T, u, h}}\cdot w_{t,T}(u,h)$ to obtain that
\begin{align}
\pr \bigg(\bigg| \sum_{t=\lfloor T(u-h) \rfloor}^{\lceil T(u+h) \rceil} & \sqrt{D_{T, u, h}}\cdot w_{t,T}(u,h) \X_{it}  \bigg| \geq x\bigg) \nonumber \\
 & \leq C_1 \frac{\big( \sum_{t=\lfloor T(u-h) \rfloor}^{\lceil T(u+h) \rceil} |\sqrt{D_{T, u, h}}\cdot w_{t,T}(u,h)|^{q^\prime}\big) \| \X_{i \cdot}\|^{q^\prime}_{q^\prime, \alpha}}{ x^{q^\prime}} \nonumber \\
 & \qquad \qquad \qquad \qquad \qquad \qquad + C_2 \exp \left(-\frac{C_3  x^2}{D_{T, u, h}\| \X_{i\cdot}\|^{2}_{2, \alpha}}\right) \label{ineq-diff-8}
\end{align}
for any $x > 0$, where $\| \X_{i\cdot}\|^{q^\prime}_{q^\prime, \alpha} = \sup_{t\geq 0} (t+1)^{\alpha} \sum_{s=t}^{\infty}\delta_{q^\prime}(\boldsymbol{h}_i, s)$ is the dependence adjusted norm introduced in Definition \ref{defA-DAN} and $\| \X_{i\cdot}\|^{q^\prime}_{q^\prime, \alpha} < \infty$ by Assumption \ref{C-reg3}. From \eqref{ineq-diff-8}, it follows that for any $\delta > 0$, 
\begin{align}
&\pr\left(\max_{(u, h) \in \mathcal{G}_T} \Big| \sum_{t=\lfloor T(u-h) \rfloor}^{\lceil T(u+h) \rceil} w_{t,T}(u,h) \X_{it}  \Big| \geq \delta T^{1/q} \right) \nonumber \\
&\leq \sum_{(u, h) \in \mathcal{G}_T} \pr \left( \Big| \sum_{t=\lfloor T(u-h) \rfloor}^{\lceil T(u+h) \rceil} w_{t,T}(u,h) \X_{it}  \Big| \geq \delta T^{1/q} \right) \nonumber \\
&= \sum_{(u, h) \in \mathcal{G}_T} \pr \left( \Big| \sum_{t=\lfloor T(u-h) \rfloor}^{\lceil T(u+h) \rceil} \sqrt{D_{T, u, h}}\cdot w_{t,T}(u,h) \X_{it}  \Big| \geq \delta\sqrt{D_{T, u, h}}T^{1/q}  \right) \nonumber \\
&\leq \sum_{(u, h) \in \mathcal{G}_T} \left[C_1 \frac{(\sqrt{D_{T, u, h}})^{q^\prime}\big( \sum |w_{t,T}(u,h)|^{q^\prime}\big) \| \X_{i\cdot}\|^{q^\prime}_{q^\prime, \alpha}}{ \big(\delta\sqrt{D_{T, u, h}}T^{1/q}\big)^{q^\prime}} + C_2 \exp \left(-\frac{C_3 \big(\delta\sqrt{D_{T, u, h}}T^{1/q} \big)^2}{D_{T, u, h}\| \X_{i\cdot}\|^{2}_{2, \alpha}}\right) \right] \nonumber \\
&= \sum_{(u, h) \in \mathcal{G}_T} \left[C_1 \frac{\big( \sum |w_{t,T}(u,h)|^{q^\prime}\big) \| \X_{i\cdot}\|^{q^\prime}_{q^\prime, \alpha}}{\delta^{q^\prime}T^{q^\prime/q} } + C_2 \exp \left(-\frac{C_3 \delta^2 T^{2/q} }{\|\X_{i\cdot}\|^{2}_{2, \alpha}}\right) \right] \nonumber \\
&\leq C_1 \frac{ T^\theta \|\X_{i\cdot}\|^{q^\prime}_{q^\prime, \alpha}}{\delta^{q^\prime} T^{q^\prime/q}} \max_{(u, h) \in \mathcal{G}_T} \left( \sum\nolimits_{t=\lfloor T(u-h) \rfloor}^{\lceil T(u+h) \rceil} |w_{t,T}(u,h)|^{q^\prime}\right)+ C_2 T^\theta \exp \left(-\frac{C_3 \delta^2 T^{2/q}}{\|\X_{i\cdot}\|^{2}_{2, \alpha}}\right) \nonumber \\
&= C \frac{ T^{\theta - q^\prime/q}}{\delta^{q^\prime}} + C T^\theta \exp \left(-C T^{2/q} \delta^2\right), \label{eq:exp-bound-BnT}
\end{align}
where the constant $C$ depends neither on $T$ nor on $\delta$. In the last equality of the above display, we have used the following facts:
\begin{enumerate}[label=(\roman*),leftmargin=0.85cm]
\item $\|\X_{i\cdot}\|^{q^\prime}_{q^\prime, \alpha}  < \infty$ by Assumption \ref{C-reg3}.
\item $\|\X_{i\cdot}\|^{2}_{2, \alpha} < \infty$ (which follows from (i)).
\item $\max_{(u, h) \in \mathcal{G}_T} ( \sum_{t=\lfloor T(u-h) \rfloor}^{\lceil T(u+h) \rceil} |w_{t,T}(u,h)|^{q^\prime} ) \leq 1$ for the following reason: By \eqref{eq:sum_weights}, it holds that $\sum_{t=1}^{T} w^2_{t,T}(u,h) = 1$ and thus $0 \leq w^2_{t,T}(u,h) \leq 1$ for all $t$, $T$ and $(u, h)$. This implies that $0 \leq |w_{t,T}(u,h)|^{q^\prime} =  (w^2_{t,T}(u,h))^{q^\prime/2} \leq w^2_{t,T}(u,h) \leq 1$ for all $t$, $T$ and $(u, h)$. As a result, 
\begin{align*}
\max_{(u, h) \in \mathcal{G}_T} \left( \sum_{t=\lfloor T(u-h) \rfloor}^{\lceil T(u+h) \rceil} |w_{t,T}(u,h)|^{q^\prime}\right) \leq
\max_{(u, h) \in \mathcal{G}_T} \left( \sum_{t=\lfloor T(u-h) \rfloor}^{\lceil T(u+h) \rceil} w_{t,T}^2(u,h)\right) =1.
\end{align*}
\end{enumerate}
Since $\theta - q^\prime/q <0$ by Assumption \ref{C-reg1}, the bound in \eqref{eq:exp-bound-BnT} converges to $0$ as $T \to \infty$ for any fixed $\delta >0$. Consequently, we obtain that
\begin{align}\label{ineq-diff-9}
\max_{(u, h) \in \mathcal{G}_T} \bigg| \sum_{t=\lfloor T(u-h) \rfloor}^{\lceil T(u+h) \rceil} w_{t,T}(u,h)\X_{it}  \bigg| = o_p(T^{1/q}).
\end{align}
Using this together with the fact that $\bfbeta_i - \widehat{\bfbeta}_i = O_p(1/\sqrt{T})$ (which is the statement of Lemma \ref{lemma-beta-rate}), we arrive at the bound 
\[ B_{n,T} = \max_{1 \le i \le n} \max_{(u,h) \in \mathcal{G}_T} \Big| (\bfbeta_i - \widehat{\bfbeta}_i)^\top\sum_{t=1}^T w_{t,T}(u,h) \X_{it} \Big| = o_p\Big(\frac{T^{1/q}}{\sqrt{T}}\Big). \]

We finally turn to $C_{n,T}$. Straightforward calculations yield that $| \sum_{t=1}^T w_{t,T}(u,h) | \le C \sqrt{T h_{\max}} = o(\sqrt{T})$. 
Moreover, $\bar{\X}_i = O_p(1/\sqrt{T})$ by Lemma \ref{lemma-wlln} and $\bfbeta_i - \widehat{\bfbeta}_i = O_p(1/\sqrt{T})$ by Lemma \ref{lemma-beta-rate}. This immediately yields that
\[ C_{n,T} = \max_{1\le i  \le n}\big|(\bfbeta_i - \widehat{\bfbeta}_i)^\top\bar{\X}_{i}\big| \max_{(u,h) \in \mathcal{G}_T}  \Big| \sum_{t=1}^T w_{t,T}(u,h)  \Big| = o_p\Big(\frac{1}{\sqrt{T}}\Big). \]

To summarize, we have shown that 
\begin{align*}
\big| \doublehattwo{\Phi}_{n, T} - \widehat{\Phi}_{n, T} \big| 
 & \le o_p( \rho_T) A_{n,T} + O_p(1) \big\{ 2B_{n,T} + 2C_{n,T} \} \\
 & = o_p( \rho_T) O_p(\sqrt{\log T}) + o_p \Big( \frac{T^{1/q}}{\sqrt{T}} \Big) + o_p \Big( \frac{1}{\sqrt{T}} \Big).
\end{align*}
This immediately implies the desired result. 
\end{proof}

\subsection*{Proof of Proposition \ref{prop:test}}

We first show that 
\begin{equation}\label{eq:quant-exact}
\pr(\Phi_{n,T} \le q_{n,T}(\alpha)) = 1 - \alpha. 
\end{equation}
We proceed by contradiction. Suppose that \eqref{eq:quant-exact} does not hold true. Since $\pr(\Phi_{n,T} \le q_{n,T}(\alpha)) \ge 1 - \alpha$ by definition of the quantile $q_{n,T}(\alpha)$, there exists $\xi > 0$ such that $\pr(\Phi_{n,T} \le q_{n,T}(\alpha)) = 1-\alpha + \xi$. From the proof of Proposition \ref{propA:anticon}, we know that for any $\delta > 0$, 
\begin{align*}
\pr \big(\Phi_{n,T} & \le q_{n,T}(\alpha)\big) - \pr\big(\Phi_{n,T} \le q_{n,T}(\alpha) - \delta\big) \\
 & \quad \le \sup_{x \in \reals} \pr \big(|\Phi_{n,T} - x| \le \delta \big)  \le 2 C \delta \sqrt{\log((n-1)np)}. 
\end{align*}
Hence, 
\begin{align*}
\pr \big(\Phi_{n,T} \le q_{n,T}(\alpha) - \delta \big) 
 & \ge \pr\big(\Phi_{n,T} \le q_{n,T}(\alpha) \big) - 2 C \delta \sqrt{\log((n-1)np)} \\
 & = 1-\alpha + \xi - 2 C \delta \sqrt{\log((n-1)np)} > 1-\alpha
\end{align*}
for $\delta > 0$ small enough. This contradicts the definition of the quantile $q_{n,T}(\alpha)$ according to which $q_{n,T}(\alpha) = \inf_{q \in \reals} \{ \pr(\Phi_{n,T} \le q) \ge 1-\alpha \}$. We thus arrive at \eqref{eq:quant-exact}.

Proposition \ref{prop:test} is a simple consequence of Theorem \ref{theo:stat:global} and equation \eqref{eq:quant-exact}. Specifically, we obtain that under $H_0$, 
\begin{align*}
\big| \pr(\widehat{\Psi}_{n, T} \le q_{n,T}(\alpha)) - (1-\alpha) \big| 
 & = \big| \pr(\widehat{\Phi}_{n, T} \le q_{n,T}(\alpha)) - (1-\alpha) \big| \\
 & = \big| \pr(\widehat{\Phi}_{n, T} \le q_{n,T}(\alpha)) - \pr(\Phi_{n,T} \le q_{n,T}(\alpha)) \big| \\
 & \le \sup_{x \in \reals} \big| \pr(\widehat{\Phi}_{n, T} \le x) - \pr(\Phi_{n,T} \le x) \big| = o(1). 
\end{align*}

\subsection*{Proof of Proposition \ref{prop:test:power}}

To start with, note that for some sufficiently large constant $C$ we have
\begin{equation}\label{eqA:power:lambda}
\lambda(h) = \sqrt{2\log\{1/(2h)\}} \le \sqrt{2\log\{1/(2h_{\min})\}} \le C \sqrt{\log T}.
\end{equation}
Write $\widehat{\psi}_{ij, T}(u,h) = \widehat{\psi}_{ij, T}^A(u,h) + \widehat{\psi}_{ij, T}^B(u,h)$ with 
\begin{align*}
\widehat{\psi}^A_{ij,T}(u,h) &= \sum_{t=1}^T w_{t,T}(u,h) \big\{ (\varepsilon_{it} - \bar{\varepsilon}_i) + (\bfbeta_i - \widehat{\bfbeta}_i)^\top (\X_{it} - \bar{\X}_{i}) - \bar{m}_{i, T} \\*[-0.2cm]
& \qquad \qquad \qquad \quad \ - (\varepsilon_{jt} - \bar{\varepsilon}_j) -  (\bfbeta_j - \widehat{\bfbeta}_j)^\top (\X_{jt} - \bar{\X}_{j}) + \bar{m}_{j, T} \big\} \\
\widehat{\psi}_{ij, T}^B(u,h) &= \sum\nolimits_{t=1}^T w_{t,T}(u,h) \bigg(m_{i, T}\Big(\frac{t}{T}\Big) - m_{j, T}\Big(\frac{t}{T}\Big) \bigg),
\end{align*}
where $\bar{m}_{i, T} = T^{-1} \sum_{t=1}^T m_{i, T} (t/T)$. Without loss of generality, consider the following scenario: there exists $(u_0,h_0) \in \mathcal{G}_T$ with $[u_0-h_0,u_0+h_0] \subseteq [0,1]$ such that \begin{align}\label{eqA:power2}
m_{i,T}(w) - m_{j,T}(w) \ge c_T \sqrt{\log T/(Th_0)}
\end{align}
for all $w \in [u_0-h_0,u_0+h_0]$.

We first derive a lower bound on the term $\widehat{\psi}_{ij, T}^B(u_0,h_0)$.  
Since the kernel $K$ is symmetric and $u_0 = t/T$ for some $t$, it holds that $S_{T,1}(u_0,h_0) = 0$ and thus,
\begin{align*} 
w_{t,T}(u_0,h_0) 
&= \frac{K\Big(\frac{\frac{t}{T}-u_0}{h_0}\Big) S_{T, 2}(u_0, h_0)}{\Big\{ \sum_{t=1}^T K^2\Big(\frac{\frac{t}{T}-u_0}{h_0}\Big)S^2_{T, 2}(u_0, h_0) \Big\}^{1/2}} \\
&=\frac{K\Big(\frac{\frac{t}{T}-u_0}{h_0}\Big)}{\Big\{ \sum_{t=1}^T K^2\Big(\frac{\frac{t}{T}-u_0}{h_0}\Big)\Big\}^{1/2}} \ge 0.
\end{align*}
Together with \eqref{eqA:power2}, this implies that 
\begin{equation}\label{eq1-proof-prop-test-power}
\widehat{\psi}_{ij, T}^B(u_0,h_0) \ge c_T \sqrt{\frac{\log T}{Th_0}} \sum\limits_{t=1}^T w_{t,T}(u_0,h_0).
\end{equation}
Using the Lipschitz continuity of the kernel $K$, we can show by straightforward calculations that for any $(u,h) \in \mathcal{G}_T$ and any natural number $\ell$, 
\begin{equation}\label{eq-riemann-sum}
\Big| \frac{1}{Th} \sum\limits_{t=1}^T K\Big(\frac{\frac{t}{T}-u}{h}\Big) \Big(\frac{\frac{t}{T}-u}{h}\Big)^\ell - \int_0^1 \frac{1}{h} K\Big(\frac{w-u}{h}\Big) \Big(\frac{w-u}{h}\Big)^\ell dw \Big| \le \frac{C}{Th}, 
\end{equation}
where the constant $C$ does not depend on $u$, $h$ and $T$. With the help of \eqref{eq-riemann-sum}, we obtain that for any $(u,h) \in \mathcal{G}_T$ with $[u-h,u+h] \subseteq [0,1]$, 
\begin{equation}\label{eq2-proof-prop-test-power}
\Big| \sum\limits_{t=1}^T w_{t,T}(u,h) - \frac{\sqrt{Th}}{\kappa} \Big| \le \frac{C}{\sqrt{Th}}, 
\end{equation}
where $\kappa = (\int K^2(\varphi)d\varphi)^{1/2}$ and the constant $C$ does once again not depend on $u$, $h$ and $T$. From \eqref{eq2-proof-prop-test-power}, it follows that $\sum\nolimits_{t=1}^T w_{t,T}(u,h) \ge \sqrt{Th} / (2\kappa)$ for sufficiently large $T$ and any $(u,h) \in \mathcal{G}_T$ with $[u-h,u+h] \subseteq [0,1]$. This together with \eqref{eq1-proof-prop-test-power} allows us to infer that 
\begin{equation}\label{eqA:power:psiB}
\widehat{\psi}_{ij, T}^B(u_0,h_0) \ge \frac{c_T \sqrt{\log T}}{2 \kappa} 
\end{equation}
for sufficiently large $T$.

We next analyze $\widehat{\psi}^A_{ij,T}(u_0,h_0)$, which can be expressed as $\widehat{\psi}^A_{ij,T}(u_0,h_0) = \widehat\phi_{ij,T}(u, h) + (\bar{m}_{j, T} - \bar{m}_{i, T}) \sum_{t=1}^T w_{t, T}(u, h)$. The proof of Proposition \ref{propA:step4} shows that 
\begin{equation*}
\max_{1 \le i < j \le n} \max_{(u,h) \in \mathcal{G}_T} \Big| \widehat{\phi}_{ij, T}(u,h) \Big| = O_p(\sqrt{\log T}). 
\end{equation*}
Using this together with the bounds $\bar{m}_{i, T} \le C/T$ and $\sum_{t=1}^T w_{t, T}(u, h) \le C \sqrt{T}$, we can infer that 
\begin{align} 
 & \max_{1 \le i < j \le n} \max_{(u,h) \in \mathcal{G}_T} \Big| \widehat\psi_{ij,T}^A(u, h) \Big| \nonumber \\[-0.2cm]
 & = \max_{1 \le i < j \le n} \max_{(u,h) \in \mathcal{G}_T} \Big| \widehat\phi_{ij,T}(u, h) + (\bar{m}_{j, T} - \bar{m}_{i, T}) \sum_{t=1}^T w_{t, T}(u, h) \Big| = O_p(\sqrt{\log T}). \label{eqA:power:psiA}
\end{align}
With the help of \eqref{eqA:power:psiB}, \eqref{eqA:power:psiA}, \eqref{eqA:power:lambda} and the assumption that $\widehat{\sigma}^2_i = \sigma^2_i + o_p(\rho_T)$, we finally arrive at 
\begin{align}
\widehat{\Psi}_{n, T}  
 & \ge \max_{1 \le i < j \le n} \max_{(u,h) \in \mathcal{G}_T} \frac{|\widehat{\psi}_{ij, T}^B(u,h)|}{\{\widehat{\sigma}_i^2 + \widehat{\sigma}_j^2\}^{1/2}} - \max_{1 \le i < j \le n} \max_{(u,h) \in \mathcal{G}_T} \bigg\{ \frac{|\widehat{\psi}_{ij, T}^A(u,h)|}{\{\widehat{\sigma}^2_i + \widehat{\sigma}_j^2\}^{1/2}} + \lambda(h) \bigg\} \nonumber \\
 & = \max_{1 \le i < j \le n} \max_{(u,h) \in \mathcal{G}_T} \frac{|\widehat{\psi}_{ij, T}^B(u,h)|}{\{\widehat{\sigma}_i^2 + \widehat{\sigma}_j^2\}^{1/2}} + O_p(\sqrt{\log T}) \nonumber \\
 & \ge \frac{c_T \sqrt{\log T}}{2 \kappa} \min_{1 \le i < j \le n}\{\widehat{\sigma}_i^2 + \widehat{\sigma}_j^2\}^{-1/2} + O_p(\sqrt{\log T}) \nonumber \\
 & = \frac{c_T \sqrt{\log T}}{2 \kappa} \, O_p(1) + O_p(\sqrt{\log T}). \label{eq5-proof-prop-test-power}
\end{align}
Since $q_{n, T}(\alpha) = O(\sqrt{\log T})$ for any fixed $\alpha \in (0,1)$ and $c_T \to \infty$, \eqref{eq5-proof-prop-test-power} immediately implies that $\pr(\widehat{\Psi}_{n, T} \le q_{n, T}(\alpha)) = o(1)$.

\subsection*{Proof of Proposition \ref{prop:test:fwer}}\label{subsec:app:fwer}

Denote by $\mathcal{M}_0$ the set of quadruples $(i, j, u, h) \in \{1\ldots, n\}^2 \times \grid$ for which $H_0^{[i, j]}(u, h)$ is true. Then we can write the $\text{FWER}$ as
\begin{align*}
\text{FWER}(\alpha)
 & = \pr \Big( \exists (i,j,u, h) \in \mathcal{M}_0: \widehat{\psi}^0_{ij,T}(u, h) > q_{n, T}(\alpha) \Big) \\
 & = \pr \Big( \max_{(i, j, u, h) \in \mathcal{M}_0} \widehat{\psi}^0_{ij,T}(u, h) > q_{n, T}(\alpha) \Big) \\
 & = \pr \Big( \max_{(i,j,u, h) \in \mathcal{M}_0} \widehat{\phi}^0_{ij,T}(u, h) > q_{n, T}(\alpha) \Big) \\
 & \le \pr \Big( \max_{1 \le i < j \le n} \max_{(u, h) \in \grid} \widehat{\phi}_{ij,T}^0(u, h) > q_{n, T}(\alpha) \Big) \\
 & = \pr \big( \widehat{\Phi}_{n, T} > q_{n, T}(\alpha) \big) = \alpha + o(1),
\end{align*}
where the third equality uses that $\hat{\psi}^0_{ijk,T} = \hat{\phi}^0_{ijk,T}$ under $H_0^{[i, j]}(u, h)$.

\subsection*{Proof of Proposition \ref{prop:clustering:1}}

For the sake of brevity, we introduce the following notation. For each $i$ and $j$, we define the statistic $\widehat{\Psi}_{ij,T} : = \max_{(u, h) \in \mathcal{G}_T}\hat{\psi}^0_{ij, T}(u, h)$ which can be interpreted as a distance measure between the two curves $m_i$ and $m_j$ on the whole interval $[0, 1]$. Using this notation, we can rewrite the dissimilarity measure defined in \eqref{dissimilarity} as 
\begin{equation*}
\widehat{\Delta}(S,S^\prime) = \max_{\substack{i \in S, \\ j \in S^\prime}} \widehat{\Psi}_{ij,T}. 
\end{equation*}
Now consider the event  
\[ B_{n,T} = \Big\{ \max_{1 \le \ell \le N} \max_{i,j \in G_\ell} \widehat{\Psi}_{ij,T} \le q_{n,T}(\alpha) \ \text{ and } \ \min_{1 \le \ell < \ell^\prime \le N} \min_{\substack{i \in G_\ell, \\ j \in G_{\ell^\prime}}} \widehat{\Psi}_{ij,T} > q_{n,T}(\alpha) \Big\}. \]
The term $\max_{1 \le \ell \le N} \max_{i,j \in G_\ell} \widehat{\Psi}_{ij,T}$ is the largest multiscale distance between two time series $i$ and $j$ from the same group, whereas $\min_{1 \le \ell < \ell^\prime \le N} \min_{i \in G_\ell, \, j \in G_{\ell^\prime}} \widehat{\Psi}_{ij,T}$ is the smallest multiscale distance between two time series from two different groups. On the event $B_{n,T}$, it obviously holds that 
\begin{equation}\label{eq1-prop-clustering-1}
\max_{1 \le \ell \le N} \max_{i,j \in G_\ell} \widehat{\Psi}_{ij,T} < \min_{1 \le \ell < \ell^\prime \le N} \min_{\substack{i \in G_\ell, \\ j \in G_{\ell^\prime}}} \widehat{\Psi}_{ij,T}. 
\end{equation}
Hence, any two time series from the same class have a smaller distance than any two time series from two different classes. With the help of Proposition \ref{prop:test}, it is easy to see that
\[  \pr \Big( \max_{1 \le \ell \le N} \max_{i,j \in G_\ell} \widehat{\Psi}_{ij,T} \le q_{n,T}(\alpha) \Big) \ge (1 - \alpha) + o(1). \]
Moreover, the same arguments as for Proposition \ref{prop:test:power} show that 
\[  \pr \Big( \min_{1 \le \ell < \ell^\prime \le N} \min_{\substack{i \in G_\ell, \\ j \in G_{\ell^\prime}}} \widehat{\Psi}_{ij,T} \le q_{n,T}(\alpha) \Big) = o(1). \]
Taken together, these two statements imply that 
\begin{equation}\label{eq2-prop-clustering-1}
\pr \big( B_{n,T} \big) \ge (1-\alpha) + o(1). 
\end{equation}
In what follows, we show that on the event $B_{n,T}$, (i) $\{ \widehat{G}_1^{[n-N]},\ldots,\widehat{G}_N^{[n-N]} \big\} = \big\{ G_1,\ldots$ $\ldots,G_N \}$ and (ii) $\widehat{N} = N$. From (i), (ii) and \eqref{eq2-prop-clustering-1}, the statements of Proposition \ref{prop:clustering:1} follow immediately.

\begin{proof}[\textnormal{\textbf{Proof of (i).}}]
Suppose we are on the event $B_{n,T}$. The proof proceeds by induction on the iteration steps $r$ of the HAC algorithm. 
\vspace{7pt}

\textit{Base case} ($r=0$): In the first iteration step, the HAC algorithm merges two singleton clusters $\widehat{G}_i^{[0]} = \{ i \}$ and $\widehat{G}_j^{[0]} = \{ j \}$ with $i$ and $j$ belonging to the same group $G_k$. This is a direct consequence of \eqref{eq1-prop-clustering-1}. The algorithm thus produces a partition $\{ \widehat{G}_1^{[1]},\ldots,\widehat{G}_{n-1}^{[1]} \}$ whose elements $\widehat{G}_\ell^{[1]}$ all have the following property: $\widehat{G}_\ell^{[1]} \subseteq G_k$ for some $k$, that is, each cluster $\widehat{G}_\ell^{[1]}$ contains elements from only one group. 
\vspace{7pt}

\textit{Induction step} ($r \curvearrowright r+1$): Now suppose we are in the $r$-th iteration step for some $r < n-N$. Assume that the partition $\{\widehat{G}_1^{[r]},\ldots,\widehat{G}_{n-r}^{[r]}\}$ is such that for any $\ell$, $\widehat{G}_\ell^{[r]} \subseteq G_k$ for some $k$. Because of \eqref{eq1-prop-clustering-1}, the dissimilarity $\widehat{\Delta}(\widehat{G}_\ell^{[r]},\widehat{G}_{\ell^\prime}^{[r]})$ gets minimal for two clusters $\widehat{G}_\ell^{[r]}$ and $\widehat{G}_{\ell^\prime}^{[r]}$ with the property that $\widehat{G}_\ell^{[r]} \cup \widehat{G}_{\ell^\prime}^{[r]} \subseteq G_k$ for some $k$. Hence, the HAC algorithm produces a partition $\{ \widehat{G}_1^{[r+1]},\ldots,\widehat{G}_{n-(r+1)}^{[r+1]} \}$ whose elements $\widehat{G}_\ell^{[r+1]}$ are all such that $\widehat{G}_\ell^{[r+1]} \subseteq G_k$ for some $k$. 
\vspace{7pt}

The above induction argument shows the following: For any $r \le n - N$, the partition $\{ \widehat{G}_1^{[r]},\ldots,\widehat{G}_{n-r}^{[r]} \}$ consists of clusters $\widehat{G}_\ell^{[r]}$ which all have the property that $\widehat{G}_\ell^{[r]} \subseteq G_k$ for some $k$. This in particular holds for the partition $\{ \widehat{G}_1^{[n-N]},\ldots,\widehat{G}_N^{[n-N]} \}$, which implies that $\{ \widehat{G}_1^{[n-N]},\ldots,\widehat{G}_N^{[n-N]} \} =\{ G_1,\ldots,G_N \}$.  
\end{proof}

\begin{proof}[\textnormal{\textbf{Proof of (ii).}}]
To start with, consider any partition $\{ \widehat{G}_1^{[n-r]},\ldots,\widehat{G}_r^{[n-r]} \}$ with $r < N$ elements. Such a partition must contain at least one element $\widehat{G}_\ell^{[n-r]}$ with the following property: $\widehat{G}_\ell^{[n-r]} \cap G_k \ne \emptyset$ and $\widehat{G}_\ell^{[n-r]} \cap G_{k^\prime} \ne \emptyset$ for some $k \ne k^\prime$. On the event $B_{n,T}$, it obviously holds that $\widehat{\Delta}(S) > q_{n,T}(\alpha)$ for any $S$ with the property that $S \cap G_k \ne \emptyset$ and $S \cap G_{k^\prime} \ne \emptyset$ for some $k \ne k^\prime$. Hence, we can infer that on the event $B_{n,T}$, $\max_{1 \le \ell \le r} \widehat{\Delta} ( \widehat{G}_\ell^{[n-r]} ) > q_{n,T}(\alpha)$ for any $r < N$. 

Next consider the partition $\{ \widehat{G}_1^{[n-r]},\ldots,\widehat{G}_r^{[n-r]} \}$ with $r = N$ and suppose we are on the event $B_{n,T}$. From (i), we already know that $\{ \widehat{G}_1^{[n-N]},\ldots,\widehat{G}_N^{[n-N]} \} =\{ G_1,\ldots,G_N \}$. Moreover, $\widehat{\Delta}(G_\ell) \le q_{n,T}(\alpha)$ for any $\ell$. Hence, we obtain that $\max_{1 \le \ell \le N} \widehat{\Delta} ( \widehat{G}_\ell^{[n-N]} ) = \max_{1 \le \ell \le N} \widehat{\Delta} (G_\ell) \le q_{n,T}(\alpha)$.

Putting everything together, we can conclude that on the event $B_{n,T}$, 
\[ \min \Big\{ r = 1,2,\ldots \Big| \max_{1 \le \ell \le r} \widehat{\Delta} \big( \widehat{G}_\ell^{[n-r]} \big) \le q_{n,T}(\alpha) \Big\} = N, \]
that is, $\widehat{N} = N$. 
\end{proof}

\subsection*{Proof of Proposition \ref{prop:clustering:2}}

We consider the event
\[ D_{n,T} = \Big\{ \widehat{\Phi}_{n,T} \le q_{n,T}(\alpha) \, \text{ and } \,  \min_{1 \le \ell < \ell^\prime \le N} \min_{\substack{i \in G_\ell, \\ j \in G_{\ell^\prime}}} \widehat{\Psi}_{ij,T} > q_{n,T}(\alpha) \Big\}, \]
where we write the statistic $\widehat{\Phi}_{n,T}$ as
\[ \widehat{\Phi}_{n,T} = \max_{1 \le i < j \le n} \max_{(u,h) \in \mathcal{G}_T} \Big\{ \Big| \frac{\widehat{\psi}_{ij,T}(u,h)- \widehat{\psi}_{ij,T}^{\text{trend}}(u,h)} {(\widehat{\sigma}_i^2 + \widehat{\sigma}_j^2)^{1/2}} \Big| - \lambda(h) \Big \} \]
with $\widehat{\psi}_{ij,T}^{\text{trend}}(u,h) = \sum_{t=1}^T w_{t,T}(u,h) \{ (m_{i,T}(t/T) - \bar{m}_{i,T}) - (m_{j,T}(t/T) - \bar{m}_{j,T}) \}$ and $\bar{m}_{i,T} = T^{-1} \sum_{t=1}^T m_{i,T}(t/T)$. The event $D_{n,T}$ can be analysed by the same arguments as those applied to the event $B_{n,T}$ in the proof of Proposition \ref{prop:clustering:1}. In particular, analogous to \eqref{eq2-prop-clustering-1} and statements (i) and (ii) in this proof, we can show that
\begin{equation}\label{eq1-prop-clustering-2}
\pr \big( D_{n,T} \big) \ge (1-\alpha) + o(1)
\end{equation}
and 
\begin{equation}\label{eq2-prop-clustering-2}
D_{n,T} \subseteq \big\{ \widehat{N} = N \text{ and } \widehat{G}_\ell = G_\ell \text{ for all } \ell \big\}.
\end{equation}
Moreover, we have that
\begin{equation}\label{eq3-prop-clustering-2}
D_{n,T} \subseteq \bigcap_{1 \le \ell < \ell^\prime \le \widehat{N}} E_{n,T}^{[\ell,\ell^\prime]}(\alpha),
\end{equation}
which is a consequence of the following observation: For all $i$, $j$ and $(u,h) \in \mathcal{G}_T$ with 
\[ \Big|\frac{\widehat{\psi}_{ij,T}(u,h) - \widehat{\psi}_{ij,T}^{\text{trend}}(u,h)}{(\widehat{\sigma}_i^2 + \widehat{\sigma}_j^2)^{1/2}}\Big| - \lambda(h) \le q_{n,T}(\alpha) \quad \text{and} \quad \Big|\frac{\widehat{\psi}_{ij,T}(u,h)}{(\widehat{\sigma}_i^2 + \widehat{\sigma}_j^2)^{1/2}}\Big| - \lambda(h) > q_{n,T}(\alpha), \]
it holds that $\widehat{\psi}_{ij,T}^{\text{trend}}(u,h) \ne 0$, which in turn implies that $m_i(v) - m_j(v) \ne 0$ for some $v \in I_{u,h}$. From \eqref{eq2-prop-clustering-2} and \eqref{eq3-prop-clustering-2}, we obtain that 
\[ D_{n,T} \subseteq \Big\{ \bigcap_{1 \le \ell < \ell^\prime \le \widehat{N}} E_{n,T}^{[\ell,\ell^\prime]}(\alpha) \Big\} \cap \big\{ \widehat{N} = N \text{ and } \widehat{G}_\ell = G_\ell \text{ for all } \ell \big\} = E_{n,T}(\alpha). \] 
This together with \eqref{eq1-prop-clustering-2} implies that $\pr(E_{n,T}(\alpha)) \ge (1-\alpha) + o(1)$.

\end{document}